\documentclass[a4paper, 10pt]{article}

\usepackage{graphics,graphicx}
\usepackage{amssymb,amsmath}
\usepackage{fullpage}
\usepackage{enumitem}
\usepackage{xcolor}
\usepackage{stmaryrd}

\usepackage[capitalize]{cleveref}
 \usepackage{thmtools}

\newtheorem{theorem}{Theorem}

\newtheorem{lemma}{Lemma}

\newtheorem{problem}{Problem}

\newtheorem{claim}{}[lemma]

\newtheorem{subclaim}{}[claim]
\newtheorem{cor}{Corollary}

\usepackage{thmtools}

\def \no {\noindent}

\def \sm {\setminus}
\def \es {\emptyset}

\newenvironment{proof}[1][]%
{\noindent {\setcounter{equation}{0}\it Proof.
	}{#1}{}}{\hfill$\Box$\vspace{2ex}}

\def\longbox#1{\parbox{0.85\textwidth}{#1}}
\newcommand*\sq{\mathbin{\vcenter{\hbox{\rule{0.75ex}{1.0ex}}}}}

\begin{document}
\title{On near optimal colorable graphs}

\author{C.~U.~Angeliya\thanks{Computer Science Unit, Indian Statistical
Institute, Chennai Centre, Chennai 600029, India. This research is   supported by National Board of Higher Mathematics (NBHM), DAE, India.  } \and Arnab Char\thanks{Computer Science Unit, Indian Statistical
Institute, Chennai Centre, Chennai 600029, India. } \and T.~Karthick\thanks{Corresponding author, Computer Science Unit, Indian Statistical
Institute, Chennai Centre, Chennai 600029, India. Email: karthick@isichennai.res.in. ORCID: 0000-0002-5950-9093. This research is partially supported by National Board of Higher Mathematics (NBHM), DAE, India. }}


\maketitle
\begin{abstract}
A class of graphs $\cal G$ is said to be \emph{near optimal colorable}  if there exists a constant $c\in \mathbb{N}$ such that every graph $G\in \cal G$ satisfies $\chi(G) \leq \max\{c, \omega(G)\}$, where $\chi(G)$ and $\omega(G)$ respectively denote the chromatic number and clique number of $G$. The class of near optimal colorable graphs   is an important subclass of the class of $\chi$-bounded graphs which is well-studied in the literature. In this paper, we show that  the class of  ($F, K_4-e$)-free graphs is near optimal colorable, where $F\in \{P_1+2P_2,2P_1+P_3,3P_1+P_2\}$  and the graph $K_4-e$ is commonly referred as the {\em diamond}. This partially answers a question of Ju and Huang [Theoretical Computer Science 993 (2024) Article No.: 114465] and is related to a question of Schiermeyer (unpublished).  Furthermore, using these results with some earlier known results, we also provide an alternate proof to the fact that the \textsc{Chromatic Number} problem for
	the class of ($F, K_4-e$)-free graphs is solvable in  polynomial time, where $F\in \{P_1+2P_2,2P_1+P_3,3P_1+P_2\}$.
\end{abstract}

\medskip
\no{\bf Keywords}:~Graph classes; Vertex Coloring; Chromatic number; Clique number; $\chi$-boundedness; Forbidden induced subgraphs.

\section{Introduction}

We consider only finite graphs  with no loops or multiple edges. We refer to West \cite{west} for standard notation and terminology, and  we refer to \cite{BLS} for   undefined special graphs used in this paper.
For an integer $\ell\geq 1$, the graphs $P_{\ell}$ and $K_{\ell}$, respectively denote  the induced  path and the complete graph on $\ell$ vertices. For an integer $\ell\geq 3$, the graph $C_{\ell}$ denotes the induced  cycle  on $\ell$ vertices. An \emph{odd-hole} or an \emph{odd-hole of length $\ell$} is the graph $C_{\ell}$ with $\ell\geq 5$ and $\ell$ is odd. A $K_{\ell}-e$ is the graph obtained from $K_{\ell}$, $\ell\geq 2$ by removing an edge. In the literature, the graph $K_4-e$ is commonly referred as the {\em diamond}.  The
\emph{union} of two vertex-disjoint graphs $G_1$ and $G_2$, denoted by $G_1+G_2$, is the graph
with vertex-set $V(G_1)\cup V(G_2)$ and edge-set $E(G_1)\cup E(G_2)$.  The
union of $\ell$ copies of the same graph $G$ will be denoted by $\ell G$; for
instance, $2P_2$ denotes the graph that consists of two disjoint copies
of $P_2$.  Given a set of graphs, say  ${\cal F}= \{H_1, H_2, \ldots, H_{t}\}$, we say that a graph $G$ is \emph{$(H_1, H_2, \ldots, H_{t})$-free} if  no induced subgraph of $G$ is isomorphic to $H_i$, $i\in \{1,2, \ldots, t\}$;  if  ${\cal F} = \{H\}$, then we simply write    $G$  is
{\em $H$-free} instead of    $(H)$-free.

Given a graph $G$ and a positive integer $k$, a  {\it $k$-coloring}  of $G$ is a mapping $\psi: V(G) \rightarrow \{1, 2, \ldots, k\}$ such that $\psi(u)\neq \psi(v)$ if $uv\in E(G)$. The smallest integer $k$ for which $G$ admits a
$k$-coloring is called the \emph{chromatic number} of $G$, and is denoted by $\chi(G)$.  We say that a graph $G$ is  {\it $k$-colorable} if $\chi(G)\leq k$.    A \emph{clique} in a graph $G$ is a set of mutually adjacent vertices, and the \emph{clique number} of $G$ (denoted by $\omega(G)$), is the largest possible integer $t$ such that $G$ contains a clique
of size $t$. A graph $G$ is \emph{perfect} if
$\chi(H) = \omega(H)$ for every induced subgraph $H$ of $G$. Examples of perfect graphs include bipartite graphs, split graphs, chordal graphs, the class of $P_3$-free graphs, etc.

The \textsc{Coloring} problem asks whether  a graph admits a $k$-coloring or not for some given integer $k$. The \textsc{Chromatic Number} problem asks for the least possible integer $k$ for which a graph admits a $k$-coloring.
Given an arbitrary class of graphs, \textsc{Coloring} is well-known to be \textsf{NP}-complete for any  ﬁxed  $k\geq 3$. However, Gr\"otschel, Lov\'asz  and Schrijver \cite{GLS} showed that \textsc{Chromatic Number} is polynomial time solvable when restricted to perfect graphs.
Thus researchers are interested in studying \textsc{Coloring}  and \textsc{Chromatic Number}  for some    special classes of graphs, and we  refer to \cite{GJPS} for an excellent survey of recent results and open problems on computational complexity of coloring graphs.

A class of graphs is said to be {\em hereditary} if it is closed under taking induced subgraphs.
A (hereditary) class $\cal G$ of graphs is called
\emph{$\chi$-bounded} with \emph{$\chi$-binding function} $f:~\mathbb{N}\rightarrow \mathbb{N}$ (where $f(1)=1$ and $f(x)\geq x$, for all $x\in \mathbb{N}$) if $\chi(G) \leq f(\omega(G))$ holds whenever $G \in \cal  G$. If $\cal G$ is  a $\chi$-bounded class of graphs such that for each $x\in \mathbb{N}$, there exists a graph $G\in \cal G$ with $\omega(G)=x$, then the  \emph{smallest $\chi$-binding function} for $\cal{G}$ is the function $f^*:\mathbb{N} \rightarrow \mathbb{N}$ defined by $f^*(x) := \max\{\chi(G)\mid G\in {\cal G} \mbox{ and } \omega(G)=x\}$ for all $x\in \mathbb{N}$. Several classes of graphs are known be $\chi$-bounded; see \cite{AK-Survey, SR-Poly-Survey}.
An important class among classes of $\chi$-bounded  graphs is the following.
 A class of graphs $\cal G$ is said to be \emph{near optimal colorable} \cite{Ju-Huang} if there exists a constant $c\in \mathbb{N}$ such that every graph $G\in \cal G$ satisfies $\chi(G) \leq \max\{c, \omega(G)\}$.  For instance, the class of perfect graphs is clearly near optimal colorable.
  Also since every ($2P_2$, gem)-free graph $G$ has $\chi(G)\leq  \max\{3, \omega(G)\}$ \cite{BRSV}, and since every ($P_6$, paw)-free graph $G$ has $\chi(G)\leq \max\{4, \omega(G)\}$ \cite{Olariu,RST},  we see that the class of ($2P_2$, gem)-free graphs and  the class of  ($P_6$, paw)-free graphs are  near optimal colorable. Gy\'arf\'as \cite{Gyarfas} showed that every ($2P_1+P_2, K_4-e$)-free graph $G$ satisfies $\chi(G)\leq \max\{3,\omega(G)\}$, and hence the class of ($2P_1+P_2, K_4-e$)-free graphs is also near optimal colorable.
		For any two graphs $H_1$ and $H_2$, a characterization for the
	near optimal colorability of $(H_1, H_2)$-free graphs with three exceptional cases was given in \cite{Ju-Huang}, and we 
  are interested in the below problem.	%
	\begin{problem} [\cite{Ju-Huang}] \label{JH-Problem}
		Decide whether the class of $(F, K_t-e)$-free graphs is near optimal colorable when $F$ is a forest (which is not an induced subgraph of a $P_4$)   and $t\geq 4$.
	\end{problem}

Gei\ss er  \cite{Geiber} showed that every ($P_5, K_4-e$)-graph $G$ satisfies $\chi(G)\leq \max\{3,\omega(G)\}$. Recently, the second and third authors of this paper \cite{AK} generalized this result and showed that every ($P_5, K_5-e$)-free graph $G$ satisfies $\chi(G)\leq \max\{7,\omega(G)\}$.
Goedgebeur, Huang, Ju and
Merkel \cite{GHJM} proved that every ($P_6$, $K_4-e$)-free graph $G$ satisfies   $\chi(G) \leq \max\{6, \omega(G)\}$. Hence the class of ($P_5, K_4-e$)-free graphs, the class of ($P_5, K_5-e$)-free graphs, and the class of ($P_6$, $K_4-e$)-free graphs are near optimal colorable. Here we focus on  \cref{JH-Problem} when $F$ is a forest on $5$ vertices and $t=4$. Recently Ju and Huang \cite{Ju-Huang25} showed that if $F$ contains a $K_{1,3}$, then the class of ($F, K_4-e$)-free graphs are not near optimal colorable. Thus from   earlier mentioned  results,    \cref{JH-Problem} is open for
the class of ($F$, $K_4-e$)-free graphs, where $F\in \{P_1+2P_2,  2P_1+P_3,  3P_1+P_2,   5P_1\}$  and for the class of ($P_7, K_4-e$)-graphs.
Moreover, Schiermeyer (unpublished) asked the following.
\begin{problem}
 Does there exist an integer $\lambda\geq 3$ such that   every ($P_7, K_4-e$)-free graph $G$ satisfies
$\chi(G) \leq \omega(G) + \lambda$?
\end{problem}

\begin{figure}
\centering
 \includegraphics{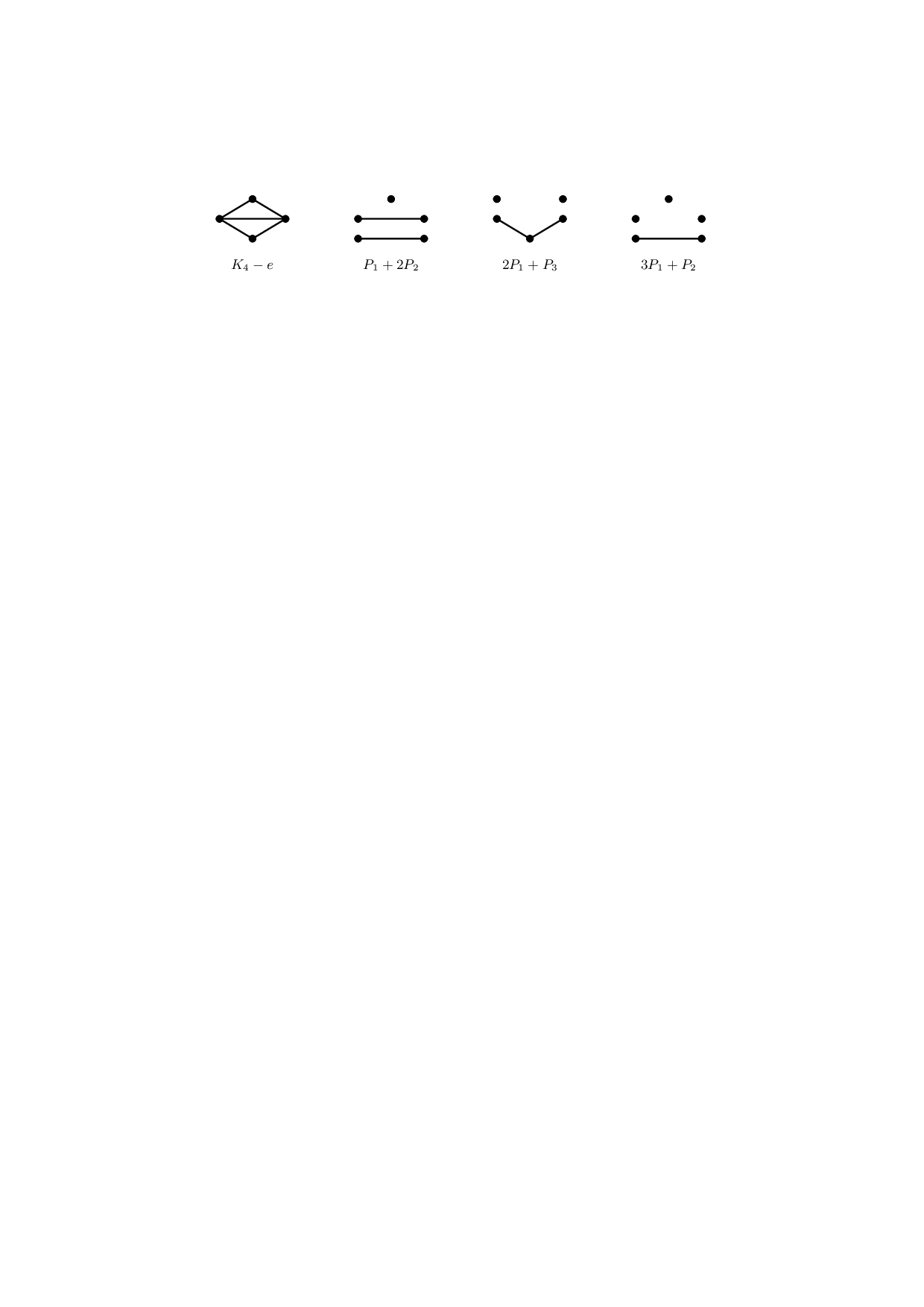}
\caption{Some special graphs.}\label{fig-basic}
\end{figure}

In this paper, we show that the class of ($F$, $K_4-e$)-free graphs  is near optimal colorable, where $F\in \{P_1+2P_2,2P_1+P_3,3P_1+P_2\}$ (see Figure~\ref{fig-basic}); so   the only open case for \cref{JH-Problem} when $|V(F)|\leq 5$ and $t=4$  is that $F\cong 5P_1$.
  Note the class of ($F, K_4-e$)-free graphs, where $F\in \{P_1+2P_2,2P_1+P_3\}$  is a subclass of the class of ($P_7, K_4-e$)-free graphs, and  includes the class of ($2P_1+P_2, K_4-e$)-free graphs \cite{Gyarfas}.
It is long-known that  if $F\in \{P_1+2P_2,2P_1+P_3,3P_1+P_2\}$, then  every ($F$, $K_3$)-free graph $G$ satisfies $\chi(G)\leq 3$; see  \cite{Rand-Thesis}. It is also known from a more general result of Chudnovsky et al. \cite{CKLV}  that  every ($3P_1+P_2, K_4-e$)-free graph $G$ with $\omega(G)\geq 9$ is $\omega(G)$-colorable (see also \cite{Ju-Huang25}). Here we prove the following (see Figure~\ref{fig-basic}).

\begin{theorem}\label{col-thm-0}
The function $f:\mathbb{N}\rightarrow \mathbb{N}$ defined by $$f(1)=1,~ f(2)=3, ~f(3)=4 \mbox{ and } f(x)=x, \mbox{ for }  x\geq 4,$$
	is the smallest $\chi$-binding function for the class of ($P_1+2P_2,K_4-e$)-free graphs.
\end{theorem}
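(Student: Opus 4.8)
The statement has two halves: that $f$ is a $\chi$-binding function for the class (the upper bound $\chi(G)\le f(\omega(G))$ for every $(P_1+2P_2,K_4-e)$-free graph $G$), and that it is pointwise the smallest such function --- equivalently, that for each $x$ some graph in the class realizes $\omega=x$ and $\chi=f(x)$. I would establish the upper bound first, since it carries the content.

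Because $f$ is non-decreasing and satisfies $f(x)\ge x$, it is enough to prove the upper bound for connected $G$ (colour each component with $f(\omega(G))$ colours). Write $\omega:=\omega(G)$. The cases $\omega\le 2$ are immediate: if $\omega=1$ then $G$ is edgeless and $\chi(G)=1$, and if $\omega=2$ then $G$ is $(P_1+2P_2,K_3)$-free, so $\chi(G)\le 3$ by the result quoted from \cite{Rand-Thesis}. Thus the work is the two cases $\omega=3$ (show $\chi(G)\le 4$) and $\omega\ge 4$ (show $\chi(G)\le\omega$).

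Two consequences of $(K_4-e)$-freeness would be used throughout: (i) for any maximal clique $K$ with $|K|\ge 3$, every vertex of $V(G)\setminus K$ has at most one neighbour in $K$ --- two neighbours together with a non-neighbour in $K$ would induce a $K_4-e$ --- so in particular a maximum clique is attached very loosely to the rest of $G$; and (ii) for every vertex $v$, $G[N(v)]$ is a disjoint union of cliques, which for $\omega\le 3$ means a disjoint union of edges and isolated vertices. The hypothesis $(P_1+2P_2)$-free is best read as: $G-N[v]$ is $2P_2$-free for every $v\in V(G)$. For $\omega=3$ I would fix a triangle $T=\{a,b,c\}$ (necessarily maximal) and partition $V(G)\setminus T$ into the sets $A$, $B$, $C$ of vertices adjacent to exactly $a$, exactly $b$, exactly $c$, and the set $D$ of vertices with no neighbour in $T$; applying $2P_2$-freeness of $G-N[a]$, $G-N[b]$, $G-N[c]$ should pin down the components of $G[D]$ and the adjacencies among $A,B,C,D$ tightly enough that one can colour $T$ with $1,2,3$ and extend to a proper $4$-colouring, using that each outside vertex sees at most one of $a,b,c$ and that every neighbourhood is a disjoint union of edges and vertices (I expect sub-cases according to whether $D=\varnothing$, whether $G[D]$ has an edge, and whether $G$ has a dominating clique or edge). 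For $\omega\ge 4$ I would fix a maximum clique $K$: for $\omega\ge 5$, $(P_1+2P_2)$-freeness already forces $G-K$ to be $2P_2$-free, since the four vertices of any induced $2P_2$ in $G-K$ have at most four neighbours in $K$ altogether, leaving a vertex of $K$ to serve as the $P_1$; $\omega=4$ needs extra care; in either case, since $G-K$ attaches to $K$ only through single edges, the aim is to colour $G-K$ from the palette $\{1,\dots,|K|\}$ so that each vertex avoids the colour of its unique $K$-neighbour. The main obstacle will be $\omega=3$: the bound $4$ is tight, so there is no slack, and in every configuration of $D$ and of the edges of $A\cup B\cup C$ relative to $T$ one must rule out the need for a fifth colour --- this is where both forbidden subgraphs are genuinely used and where most of the case analysis lies.

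For the optimality half, recall that any $\chi$-binding function $g$ of the class satisfies $g(x)\ge x$ for all $x$, so $f$ is smallest possible at every $x\ge 4$, and that value is realized since $K_x$ is $(P_1+2P_2,K_4-e)$-free with $\omega(K_x)=\chi(K_x)=x$; also $f(1)=1$ is forced. For $x=2$, the graph $C_5$ is $(P_1+2P_2,K_4-e)$-free --- its only $5$-vertex induced subgraph is $C_5\not\cong P_1+2P_2$, and it has maximum degree $2$, hence no $K_4-e$ --- with $\omega(C_5)=2$ and $\chi(C_5)=3$, forcing $g(2)\ge 3$. For $x=3$, I would exhibit one explicit $(P_1+2P_2,K_4-e)$-free graph with clique number $3$ and chromatic number $4$, verifying these properties directly, which forces $g(3)\ge 4$. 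Together with the upper bound, this identifies $f$ as the smallest $\chi$-binding function for the class.
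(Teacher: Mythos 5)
Your framework (triangle-based partition for $\omega=3$, maximum-clique attachment for $\omega\ge 4$, and the easy cases $\omega\le 2$ plus $K_x$ and $C_5$ for tightness) points in the right direction and matches the paper's setup, but the proposal stops exactly where the theorem's content begins. For $\omega=3$ you write that $2P_2$-freeness of $G-N[a]$, $G-N[b]$, $G-N[c]$ ``should pin down'' the structure ``tightly enough'' to extend a $3$-colouring of the triangle to a $4$-colouring of $G$, with sub-cases merely anticipated; none of the structural facts that actually make this work are stated or proved. In the paper these are concrete claims: each $R_i$ is a clique or a stable set; the vertices complete to the triangle form a clique $M$ anticomplete to $R$; $\omega(G[R_i\cup R_{i+1}])\le 2$ and indeed $\chi(G[R_i\cup R_{i+1}])\le 2$; and, crucially, $\chi(G[R_\ell\cup M'])\le 2$ for some $\ell$, which is obtained by exploiting that each $G[R_\ell\cup M']$ is $(P_1+P_2)$-free, hence complete multipartite, and ruling out a triangle there. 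Without statements of this kind your ``extend to a proper $4$-colouring'' is an assertion of the theorem, not a proof of it.

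The $\omega\ge 4$ plan has the same problem in a sharper form: colouring $G-K$ from the palette $\{1,\dots,|K|\}$ so that every vertex avoids the colour of its unique neighbour in $K$ is exactly equivalent to $\chi(G)\le\omega$ (any $\omega$-colouring of $G$ restricts to such a colouring, and conversely), so stating it as ``the aim'' is circular unless you prove it; and your only leverage, that $G-K$ is $2P_2$-free when $\omega\ge 5$, does not by itself handle the list constraint, while $\omega=4$ is only flagged as ``needs extra care.'' The paper avoids this entirely by never deleting the whole maximum clique: it splits off a triangle $C\subseteq K$, shows $V(G)=R\cup M\cup M'\cup C$, and glues three colourings, $\chi(G)\le\chi(G[R_1\cup M'\cup\{v_2,v_3\}])+\chi(G[R_2\cup R_3\cup\{u,v_1\}])+\chi(G[M\setminus(C\cup\{u\})])\le 2+2+(\omega-4)$. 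Finally, for tightness at $x=3$ you promise an explicit graph but give none; this is not a throwaway step, since the obvious candidate $W_5$ (a $C_5$ plus a dominating vertex) contains $K_4-e$, and the paper has to exhibit specific graphs ($H_1$, $H_2$ of its Figure~1, with $\chi=4$ and $\omega=3$) for this purpose.
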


\begin{theorem}\label{col-thm-2}
Let $G$ be a  ($2P_1+P_3, K_4-e$)-free graph with $\omega(G)\geq 3$. Then the following hold:
 \begin{enumerate}[label=  (\roman*), leftmargin=1.25cm] \itemsep=0pt
 \item If $\omega(G)=3$, then $\chi(G)\leq 5$.  \item If $\omega(G)\geq 4$, then $\chi(G)\leq \max\{6, \omega(G)\}$.
 \end{enumerate} Moreover, the bound is tight for
$\omega(G)\geq 6$.
\end{theorem}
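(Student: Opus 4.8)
\medskip
\noindent\textbf{Proof strategy.}
Two elementary facts will be used repeatedly. Since $G$ is $(K_4-e)$-free, for every vertex $v$ the graph $G[N(v)]$ is $P_3$-free, i.e.\ a disjoint union of cliques, and for every edge $xy$ the common neighbourhood $N(x)\cap N(y)$ is a clique; consequently any two maximal cliques of $G$ meet in at most one vertex. Since $G$ is $(2P_1+P_3)$-free, for every induced $P_3$ with vertex set $S$ the set of vertices anticomplete to $S$ is a clique, and $G$ has no induced $C_k$ with $k\ge 8$ (the vertices anticomplete to three consecutive vertices of such a cycle induce a path on at least four vertices, hence contain a $2P_1$). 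Together with $(K_4-e)$-freeness, which forbids every antihole $\overline{C_k}$ with $k\ge7$ (the set $\{0,2,3,5\}$ induces a $K_4-e$ in it), the Strong Perfect Graph Theorem then gives that $G$ is perfect unless it contains an induced $C_5$ or $C_7$. Since $\chi$ and $\omega$ are attained on a single component we may assume $G$ is connected, and since the perfect case yields $\chi(G)=\omega(G)$ we may assume $G$ contains an induced $C_5$ or $C_7$ in the cases where this is needed.

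I expect part~(i) to be the main obstacle. Assume $\omega(G)=3$. The plan is to fix an induced $R\in\{C_5,C_7\}$ and partition $V(G)\setminus V(R)$ according to the trace of the neighbourhood on $R$. Here $(K_4-e)$-freeness together with $\omega(G)=3$ rules out most traces (for instance no outside vertex is adjacent to two nonconsecutive vertices of $R$, nor to two consecutive ones together with their common neighbour), and applying the anticomplete-clique fact to the many induced $P_3$'s contained in $R$, and in $R$ plus one outside vertex, forces the ``empty-trace'' vertices to induce a clique and confines the remaining classes to a bounded, explicit configuration around $R$. Writing $V_1$ for the set of vertices of $G$ lying in no triangle, the graph $G[V_1]$ is $(2P_1+P_3,K_3)$-free, hence $3$-colourable \cite{Rand-Thesis}; one then shows that the vertices lying in some triangle, being confined by the structure around $R$, can always be absorbed with two further colours, giving $\chi(G)\le 5$. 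The genuinely delicate point is this last bookkeeping: tracking how the triangles of $G$ attach to the short odd hole and verifying that five colours always suffice.

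For part~(ii), assume $\omega(G)\ge 4$ and fix a clique $Q$ with $|Q|=\omega(G)=:\omega$. The key structural fact, immediate from the $(K_4-e)$-free facts above, is that every vertex of $V(G)\setminus Q$ has at most one neighbour in $Q$ (a vertex $v\notin Q$ with two neighbours $q_1,q_2\in Q$ would lie in $N(q_1)\cap N(q_2)$, so $\{q_1,q_2\}$ together with this common neighbourhood, a clique containing $Q\setminus\{q_1,q_2\}$ and $v$, would exceed $\omega$). Write $V(G)\setminus Q=B\cup\bigcup_{q\in Q}A_q$, where $A_q$ consists of the vertices whose unique neighbour in $Q$ is $q$, and $B$ of the vertices anticomplete to $Q$. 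Applying the anticomplete-clique fact to induced $P_3$'s with vertex set $\{u,q,q'\}$ for $u\in A_q$, $q'\in Q\setminus\{q\}$, controls the adjacencies among the nonempty classes (for example $B\setminus N[u]$ is a clique for every such $u$, and two distinct nonempty $A_q$'s are pairwise almost complete or almost anticomplete), and each of $G[A_q]$ and $G[B]$ is again $(2P_1+P_3,K_4-e)$-free with small clique number. To conclude, colour $Q$ with $\{1,\dots,\omega\}$; each vertex of $A_q$ is forbidden only the colour of $q$, so it has at least $\omega-1\ge 3$ admissible colours among those used on $Q$, while each vertex of $B$ is forbidden no colour of $Q$. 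A list-colouring-type argument, driven by the structural description of $G-Q$, then extends the colouring to all of $G$ using at most $\max\{6,\omega\}$ colours in total; in particular, once $\omega\ge 6$ the colours on $Q$ alone already suffice. The delicate part here is this colour-reuse step, together with pinning down that six colours are always enough when $4\le\omega\le5$.

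Finally, for tightness when $\omega(G)\ge 6$ one exhibits, for each such value, a $(2P_1+P_3,K_4-e)$-free graph with clique number and chromatic number both equal to $\omega$ — for instance $K_\omega$, or a small fixed obstruction attached to a $K_{\omega-2}$ — so that the bound $\max\{6,\omega\}=\omega$ cannot be lowered there; this part is routine.
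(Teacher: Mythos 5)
Your reductions at the start (SPGT, antiholes contain $K_4-e$, long holes contain $2P_1+P_3$, the common-neighbourhood-of-an-edge-is-a-clique fact, and the observation that a vertex outside a maximum clique $Q$ has at most one neighbour in $Q$) are all correct, and the tightness example $K_\omega$ is the same as the paper's. But in both parts the actual colouring argument is missing, and in part (i) the plan as stated cannot work. For (i) you propose to $3$-colour the set $V_1$ of vertices lying in no triangle and then ``absorb'' the remaining vertices ``with two further colours''; the vertices lying in triangles contain genuine triangles (indeed every triangle of $G$ is such), so they cannot be coloured with only two new colours, and any absorption must reuse colours from the palette of $V_1$ --- exactly the bookkeeping you defer, which is the whole proof. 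The paper's argument for (i) is instead three lines and of a different nature: for any vertex $v$, $(K_4-e)$-freeness makes $G[N(v)]$ a disjoint union of cliques, which have size at most $2$ since $\omega=3$, so $N(v)$ is $2$-colourable; and $G[\overline{N}(v)]$ is $(P_1+P_3,K_4-e,K_4)$-free, hence $P_5$-free, hence $3$-colourable by Gei\ss er's theorem, giving $\chi(G)\le 2+3=5$.

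For (ii) the gap is of the same kind: after fixing a maximum clique $Q$ and noting each outside vertex sees at most one vertex of $Q$, you appeal to ``a list-colouring-type argument'' to extend an $\omega$-colouring of $Q$ to $G$ with at most $\max\{6,\omega\}$ colours. No reason is given why the conflicts inside $G-Q$ (within and between the sets $A_q$ and $B$) can be resolved within those lists; this extension step is precisely where the difficulty lies, and it is the point you yourself flag as ``delicate''. The paper does something genuinely different and complete: it proves a structure theorem (via a detailed analysis of the neighbourhood partition $A,B,D,Z,T$ around a $C_5$, plus a separate $C_7$ lemma) saying that a non-perfect $(2P_1+P_3,K_4-e)$-free graph either has a vertex of degree at most $5$, or is a ``good graph'' (a union of three pairwise graded cliques) with $\omega\ge 4$, or is $6$-colourable; it separately proves that good graphs with $\omega\ge 4$ satisfy $\chi=\omega$; and it then concludes by induction on $|V(G)|$, deleting a vertex of degree at most $5$ and extending any $\max\{6,\omega\}$-colouring. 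None of this machinery, nor any substitute for it, is present in your proposal, so as it stands both (i) and (ii) are unproven.
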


\begin{theorem}\label{col-thm}
Let $G$ be a  ($3P_1+P_2, K_4-e$)-free graph with $\omega(G)\geq 3$. Then the following hold:
 \begin{enumerate}[label=  (\roman*), leftmargin=1.25cm] \itemsep=0pt
 \item If $\omega(G)=3$, then $\chi(G)\leq 5$.  \item If $\omega(G)\geq 4$, then $\chi(G)\leq \max\{7, \omega(G)\}$.
 \end{enumerate} Moreover, the bound is tight for
$\omega(G)\geq 7$.
\end{theorem}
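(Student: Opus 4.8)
The plan is to combine two structural features of $G$. Since $G$ is $(K_4-e)$-free, $G[N(v)]$ is $P_3$-free --- hence a disjoint union of cliques, each of size at most $\omega(G)-1$ --- for every vertex $v$; thus $\chi(G[N[v]])\le\omega(G)$ always, and if $\omega(G)=3$ then $G[N(v)]$ has maximum degree at most $1$. Since $G$ is $(3P_1+P_2)$-free, for every edge $uv$ the set of vertices anticomplete to $\{u,v\}$ has independence number at most $2$; two consequences I would use are: (a) for every vertex $u$ the graph $G-N[u]$ is $(2P_1+P_2,K_4-e)$-free, since an edge of $G-N[u]$ together with two anticomplete independent vertices of $G-N[u]$ would, with $u$, induce a $3P_1+P_2$; hence $\chi(G-N[u])\le\max\{3,\omega(G)\}$ by \cite{Gyarfas}; and (b) for every independent triple $\{x_1,x_2,x_3\}$, the set $D:=V(G)\setminus(N[x_1]\cup N[x_2]\cup N[x_3])$ is independent.

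By the result of Chudnovsky et al.\ \cite{CKLV} quoted above, $G$ is $\omega(G)$-colourable once $\omega(G)\ge 9$, so it remains to prove $\chi(G)\le 5$ when $\omega(G)=3$ and the constant bound $\chi(G)\le\max\{7,\omega(G)\}$ when $4\le\omega(G)\le 8$ --- that is, a uniform constant bound for the classes of bounded clique number. If $\alpha(G)\le 2$, then $G$ is a diamond-free graph of independence number at most $2$ and this case is disposed of by a short direct argument. So assume $\alpha(G)\ge 3$, fix an independent triple $\{x_1,x_2,x_3\}$, and consider the partition
\[
V(G)=\{x_1,x_2,x_3\}\ \cup\ P_1\cup P_2\cup P_3\ \cup\ R\ \cup\ D,
\]
where $P_i:=N(x_i)\setminus(N(x_j)\cup N(x_k))$ is the private neighbourhood of $x_i$, $R:=\bigcup_{i<j}\big(N(x_i)\cap N(x_j)\big)$, and $D$ is as in (b).

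The pieces other than $P_1\cup P_2\cup P_3$ are cheap: $\{x_1,x_2,x_3\}$ is independent, and $R\cup D$ induces a triangle-free graph, hence is $3$-colourable by \cite{Rand-Thesis}. Indeed, no two vertices of $N(x_i)\cap N(x_j)$ are adjacent (else they induce a $K_4-e$ with $x_i,x_j$), so each set in the union defining $R$ is independent; a vertex lying in all three $N(x_i)$ is anticomplete to every vertex lying in exactly two of them (same $K_4-e$ argument); and a short case check then rules out every triangle in $R\cup D$. Each $P_i$ is a disjoint union of cliques of size at most $\omega(G)-1$, so $\chi(G[P_i])\le\omega(G)-1$; but there can be edges between $P_i$ and $P_j$ (they may induce $P_1+P_4$, which is not forbidden), and also between the $P_i$'s and $R\cup D$. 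The heart of the proof is to control these cross-edges --- each vertex of $P_i$ has at most one neighbour in any single clique of $N(x_j)$, and $x_i,x_j,x_k$ severely restrict the possible configurations --- and then to assemble an explicit colouring in which the colours spent on $R\cup D$ and on $\{x_1,x_2,x_3\}$ are largely re-used inside $P_1\cup P_2\cup P_3$, so that the total stays at $\max\{7,\omega(G)\}$. I expect this colour-accounting, together with the case analysis on the adjacencies among the $P_i$'s and on the structure of $D$, to be the main obstacle; it is also where the constant $7$ is forced.

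For part (i), where $\omega(G)=3$, the same scheme runs with much smaller inputs --- each $G[N(x_i)]$ has maximum degree at most $1$, so $\chi(G[N[x_i]])\le 3$, and $G-N[x_1]$ is $3$-colourable by (a) --- and the target is $\chi(G)\le 5$; I would combine it with induction on $|V(G)|$ (using hereditarity and that triangle-free graphs in the class are $3$-colourable \cite{Rand-Thesis}), treating by hand the few configurations in which the induction would lose a colour. Finally, tightness for $\omega(G)\ge 7$ is immediate: $K_{\omega(G)}$ --- indeed three vertex-disjoint copies of $K_{\omega(G)}$, which form a graph in the class --- achieves $\chi=\omega=\omega(G)$, so the bound $\max\{7,\omega(G)\}$ cannot be replaced by a constant, nor can $\omega(G)$ be lowered.
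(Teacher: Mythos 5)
Your preparatory observations are correct: $(K_4-e)$-freeness makes each neighbourhood a disjoint union of cliques; $G-N[u]$ is $(2P_1+P_2,K_4-e)$-free, so Gy\'arf\'as gives $\chi(G-N[u])\le\max\{3,\omega(G)\}$; the set $D$ outside an independent triple is independent; $R\cup D$ is triangle-free and hence $3$-colourable; and the reduction to $\omega(G)\le 8$ via \cite{CKLV} is legitimate. The tightness argument is also fine, and part~(i) is essentially the paper's Lemma~\ref{col2-K4-free}/\ref{col-K4-free} once you add the (missing, but easy) remark that $x_1$ is anticomplete to $V(G)\setminus N[x_1]$, so that $\chi(G)\le\chi(G[N(x_1)])+\chi\bigl(G[\{x_1\}\cup(V(G)\setminus N[x_1])]\bigr)\le 2+3=5$; as written your pieces only give $3+3=6$, and the vague appeal to ``induction, treating a few configurations by hand'' does not close that.

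The real problem is part~(ii): the decisive step is exactly the one you defer. From your partition $\{x_1,x_2,x_3\}\cup P_1\cup P_2\cup P_3\cup R\cup D$ the naive count is $3$ colours for $R\cup D$ plus up to $\omega(G)-1$ colours for each $P_i$ separately, i.e.\ about $3+3(\omega(G)-1)$, which already exceeds $7$ at $\omega(G)=4$; since there can be edges between distinct $P_i$'s and between the $P_i$'s and $R\cup D$, no colour re-use is automatic, and the cross-edge restrictions you list (at most one neighbour per clique of $N(x_j)$, etc.) are nowhere developed into a bound. You say yourself that this ``colour-accounting'' is ``the main obstacle'' and ``where the constant $7$ is forced'' --- but that obstacle is the theorem; what you have is a plan, not a proof. (The $\alpha(G)\le 2$ case is likewise only asserted.) The paper takes a different route and actually does the work there: it proves a structure theorem (\cref{main-thm}, built from \cref{NC5:deg6,F1:thm,F2:thm,F3:thm,F4:thm,C5:thm,C7:thm} via the Strong Perfect Graph Theorem) showing that every $(3P_1+P_2,K_4-e)$-free graph either has a vertex of degree at most $6$ or is a good graph (three pairwise graded cliques) with $\omega\ge 4$, proves $\chi=\omega$ for such good graphs (\cref{lem-good}), and then gets (ii) by induction on $|V(G)|$, extending a $\max\{7,\omega\}$-colouring over a vertex of degree at most $6$. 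Your proposal has no counterpart to that case analysis, so there is a genuine gap at the heart of part~(ii).
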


As corollaries of our results, we show that \textsc{Chromatic Number} for
	the class of ($F$, $K_4-e$)-free graphs can be solved in  polynomial time, where $F\in \{P_1+2P_2,2P_1+P_3,3P_1+P_2\}$.  We note that this fact may also be obtained from a result of Dabrowski, Dross and Paulusma \cite{DDP} which states that the class of ($F$, $K_4-e$)-free graphs has bounded clique-width, and from a result of Rao \cite{Rao} which states that \textsc{Chromatic Number} is polynomial time solvable for graphs of bounded clique-width. In this respect, the following may be noted, where $F\in \{P_1+2P_2,2P_1+P_3,3P_1+P_2\}$.  Clearly  both the class of ($2P_1+P_3$)-free graphs and the class of  ($3P_1+P_2$)-free graphs include  the class of $4P_1$-free graphs, and the class of ($P_1+2P_2$)-free graphs includes the class of ($2P_1+P_2$)-free graphs.

 \begin{itemize}[leftmargin=0.5cm]\itemsep=0pt
 \item \textsc{Chromatic Number} is  \textsf{NP}-hard   for the class of $F$-free graphs, and remains \textsf{NP}-hard even for an apparently small class of $F$-free graphs, namely the class of $(4P_1, 2P_2, 2P_1+P_2, P_1+K_3)$-free graphs; see \cite{Schn}.

 \item The computational complexity of \textsc{Chromatic Number} is unknown for the class of ($4P_1, K_{1,3}$)-free graphs, for the class of ($4P_1, K_{1,3}, 2P_1+P_2$)-free graphs, and for the class of ($4P_1, C_4$)-free graphs; see \cite{DDP,LM}.

      \item \textsc{Chromatic Number} is  \textsf{NP}-hard   for the class of ($K_4-e$)-free graphs, and remains   \textsf{NP}-hard for    ($K_4-e$, $H$)-free graphs when $H$ contains a cycle or a $K_{1,3}$; see \cite{KKTW}.

       \end{itemize}

We finish this section with some notation and terminology used in this paper.  For a positive integer $k$, we simply write $[k]$ to denote the set $\{1, 2, \ldots , k\}$. In a graph $G$, the \emph{neighborhood} of a vertex $v$ is the set
$N_G(v):=\{u\in V(G)\setminus \{v\}\mid uv\in E(G)\}$. We let $\overline{N}_G(v):= V(G)\setminus (N_G(v)\cup \{v\})$.
The degree of a vertex $v$ in $G$, denoted by $deg_G(v)$, is the number of vertices adjacent to
it, that is, the cardinality of $N_G(v)$.  We drop the
subscript $G$ when there is no ambiguity. The complement graph of a graph $G$ will be denoted by $\overline{G}$.
A
\emph{stable set} in a graph $G$ is a set of mutually  nonadjacent vertices in $G$.

Given any two graphs $G$ and $H$, we say that $G$ \emph{contains} $H$, if $H$ is an induced subgraph of $G$.
 For a graph $G$ and a set $S\subseteq V(G)$,  $G[S]$
denotes  the subgraph induced by $S$ in $G$, and $G-S$ denotes  the subgraph $G[V(G)\sm S]$.  Given a graph $H$, we say that a set {\em $S\subseteq V(G)$ induces   $H$}, if $G[S]$   is isomorphic to $H$.
For any two  disjoint subsets
$A$ and $B$ of $V(G)$, we denote by $[A,B]$, the set of edges which has
one end in $A$ and the other end in $B$.  The edge-set   $[A,B]$ is \emph{complete} or $A$ is \emph{complete} to $B$, if each vertex in $A$
is adjacent to every vertex in $B$. We say that $A$ is \emph{anticomplete} to $B$ if $[A, B]=\es$. If $A$ is a singleton set,
  say $\{v\}$, then we simply write $v$ is complete (anticomplete) to  $B$ instead of $\{v\}$ is  complete (anticomplete) to $B$.
The edge-set $[A, B]$ is \emph{special} if each vertex in $A$ is adjacent to at most one vertex in $B$, and vice-versa. We say that the pair \{$A$, $B$\} is \emph{graded} if  $A$ and $B$ are cliques  and $[A, B]$ is  special.


 \medskip
The remainder of the paper is organized as follows.   In \cref{sec:use}, we give some crucial and useful results, and in \cref{sec:P1+2P2} we prove \cref{col-thm-0} and its consequences. In \cref{genprop}, we present some  important structural  properties of ($K_4-e$)-free graphs that contain a
$C_5$ and use them in the latter sections. Finally in \cref{sec:2P1+P3,sec:3P1+P2}, we prove \cref{col-thm-2,col-thm} and their consequences by using related interesting  structure theorems (see \cref{main-thm-2,main-thm}).

\section{Some useful results}\label{sec:use}
 In this section, we give some results  which we use often in the latter sections.  Since every bipartite graph is a perfect graph,  a result of Lov\'asz \cite{Lovasz} implies the following.
 \begin{lemma}[\cite{Lovasz}]\label{comp-bi}
The complement graph of a bipartite graph is perfect.
 \end{lemma}

Indeed, a celebrated result of Chudnovsky et al.~\cite{spgt} gives a characterization for the class of perfect graphs in  terms of forbidden induced subgraphs, and is given below.

\begin{theorem}[\cite{spgt}]\label{spgt} (Strong Perfect Graph Theorem) A graph $G$ is perfect if and only if $G$ does not
contain an odd-hole or the complement graph of an odd-hole.
\end{theorem}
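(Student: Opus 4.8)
The plan is to reconstruct, at the level of a roadmap, the proof of Chudnovsky, Robertson, Seymour and Thomas; a complete argument occupies well over a hundred pages, so I only indicate the route. Call a graph \emph{Berge} if it contains neither an odd-hole nor the complement graph of an odd-hole. The ``only if'' direction is routine: for odd $\ell\ge 5$ the hole $C_\ell$ has $\omega=2$ but $\chi=3$, and $\overline{C_\ell}$ has $\omega=(\ell-1)/2$ but $\chi=(\ell+1)/2$, so neither is perfect; since the property ``$\chi=\omega$ holds on every induced subgraph'' is hereditary, no perfect graph can contain an odd-hole or its complement. All the content is in the converse: every Berge graph is perfect.

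First I would prove a \emph{structure theorem for Berge graphs}: every Berge graph $G$ is either \emph{basic} --- that is, bipartite, or the complement of a bipartite graph, or the line graph of a bipartite graph, or the complement of such a line graph, or a double split graph --- or else $G$ admits a $2$-join, or $\overline{G}$ admits a $2$-join, or $G$ has a balanced skew partition. (I would use the streamlined form in which homogeneous pairs have been eliminated.) Proving this is the real work, and it is the main obstacle. One fixes attention on which small configuration a Berge graph contains --- a prism, a pyramid, a proper wheel, or merely a long hole with prescribed attachments --- and, in each case, an intricate analysis of how the rest of the graph hangs off that configuration yields one of the listed decompositions or forces $G$ to be basic. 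The two heaviest strands are the ``prism or pyramid but no proper wheel'' case, which tends to produce a $2$-join, and the ``contains a proper wheel'' case, which is settled by a delicate study of the hub, the rim and their common neighbours --- leaning on the Roussel--Rubio lemma --- and typically delivers a balanced skew partition. There is no known shortcut; this is where the length of the proof resides.

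Granting the structure theorem I would finish by contradiction. Suppose some Berge graph is imperfect and let $G$ be a vertex-minimal such graph; then $G$ is \emph{minimally imperfect}, every proper induced subgraph of $G$ is perfect, and by Lov\'asz's classical lemmas $|V(G)|=\alpha(G)\omega(G)+1$ with $\alpha$ and $\omega$ shared by $G$ and $\overline{G}$. Now: (a) every basic Berge graph is perfect --- bipartite graphs and their complements by K\"onig's theorem, line graphs of bipartite graphs and their complements by K\"onig's matching and edge-colouring theorems, double split graphs by a short direct check --- so $G$ is not basic; (b) no minimally imperfect graph admits a $2$-join, hence (by the complement-symmetry of the whole hypothesis) none admits a $2$-join in its complement either, a result going back to Cornu\'ejols and Cunningham; (c) no minimally imperfect graph admits a balanced skew partition, which is a separate theorem (Chv\'atal's skew-partition conjecture in the form needed, proved via the Roussel--Rubio lemma). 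Since $G$ is Berge, the structure theorem places $G$ in one of the cases (a)--(c), each of which is impossible, a contradiction; so every Berge graph is perfect, and together with the easy direction this proves the theorem. To summarise the difficulty: items (a)--(c) are substantial but conceptually clean, self-contained results, whereas the structure theorem --- and within it the proper-wheel case --- is a long and, so far, unavoidable case analysis, which is exactly why in practice one cites this theorem rather than reproving it.
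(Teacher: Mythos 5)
This statement is the Strong Perfect Graph Theorem, which the paper does not prove at all: it is imported verbatim from Chudnovsky, Robertson, Seymour and Thomas \cite{spgt} and used as a black box (to deduce that an imperfect $(F,K_4-e)$-free graph contains a $C_5$ or $C_7$). Your roadmap is a faithful account of the published proof --- the easy direction via $\chi>\omega$ for odd holes and their complements, the decomposition theorem for Berge graphs into the five basic classes or a $2$-join, complement $2$-join, or balanced skew partition, and the minimal-counterexample argument using that basic graphs are perfect and that minimally imperfect graphs admit none of these decompositions --- but it is, as you say yourself, a summary of a several-hundred-page argument rather than a proof, which is exactly why the paper (like everyone else) cites the theorem instead of reproving it.
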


 We say that a graph $G$ is a \emph{good graph} if   $V(G)$ can be partitioned into three mutually disjoint cliques, say $Q_1$, $Q_2$ and $Q_3$,
			such that $\{Q_1,Q_2\}$, $\{Q_2,Q_3\}$ and $\{Q_3,Q_1\}$ are graded.
Good graphs play  a crucial role in this paper, and we will prove that good graphs with clique number $\omega\geq 4$ can be colored using $\omega$ colors.   Note that the size of a maximum stable set in a good graph $G$ is at most 3, and   the size of any clique containing  vertices both from  $Q_1$ and from $Q_2\cup Q_3$ (say) is at most $3$. We begin with the following lemma.

\begin{lemma}\label{4clqsizlem}
    If $G$ is a good graph with $\omega(G)=4$, then $G$ is $4$-colorable.
	
\end{lemma}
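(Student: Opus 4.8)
The plan is to exploit the very rigid structure imposed by $\omega(G)=4$ together with $(K_4-e)$-freeness. Write $V(G)=Q_1\cup Q_2\cup Q_3$ with each $Q_i$ a clique and each pair $\{Q_i,Q_j\}$ graded, so $[Q_i,Q_j]$ is special. First I would observe that since $G$ is $(K_4-e)$-free, the clique number $4$ forces each $Q_i$ to have size at most $3$: if some $Q_i$ had four vertices $a,b,c,d$ and there were any vertex $x\notin Q_i$ adjacent to two of them, say $a,b$, then $\{a,b,c,d,x\}$ would contain a $K_4-e$ (indeed $\{a,b,c,x\}$ or $\{a,b,d,x\}$ with the missing edge at $x$); and if $x$ is adjacent to at most one vertex of $Q_i$ for every $x$, then a slightly finer count shows $G$ is essentially $Q_i$ plus pendant-type attachments and is easily $4$-colored. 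More robustly, since $\omega(G)=4$ and each pair is graded (special edges only), a vertex in $Q_1$ has at most one neighbor in $Q_2$ and at most one in $Q_3$, so its closed neighborhood meets $Q_2\cup Q_3$ in at most two vertices; this bounds degrees and clique sizes. The key structural claim to nail down is: \textbf{each $Q_i$ has at most $3$ vertices}, because $|Q_i|\ge 4$ together with $|Q_j|\ge 1$ would quickly create a $K_4-e$ once one checks how vertices of $Q_j$ attach (a vertex of $Q_j$ adjacent to one vertex $v\in Q_i$ with $|Q_i|\ge 4$: then $v$ plus the vertex of $Q_j$ plus the rest of $Q_i$ — still need the nonedge; handle the small cases directly).

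Next, with $|Q_1|,|Q_2|,|Q_3|\le 3$ and $\omega(G)=4$, I would split into cases by the sizes $(|Q_1|,|Q_2|,|Q_3|)$, using symmetry to reduce to sorted triples: $(3,3,3),(3,3,2),(3,3,1),(3,2,2),\dots$ down to $(1,1,1)$, plus the cases where some $Q_i$ is empty (then $G$ is the union of two cliques of size $\le 3$, trivially $4$-colorable, in fact perfect). In each case $|V(G)|\le 9$, so $G$ is a graph on at most $9$ vertices with $\omega(G)=4$; one then wants a $4$-coloring. The clean way: build the coloring by starting from a proper $4$-coloring of $Q_1$ (needs $|Q_1|\le 4$, fine) and then extend to $Q_2$ and $Q_3$ greedily, using that each vertex of $Q_2$ has at most one neighbor in $Q_1$ and at most one already-colored neighbor pattern — but a vertex of $Q_2$ may have up to $|Q_2|-1$ neighbors inside $Q_2$ plus one in $Q_1$ plus one in $Q_3$, which can be $3+1+1>4$ for $|Q_2|=3$... wait, $2+1+1=4$, so greedy on $V$ in the order $Q_1,Q_2,Q_3$ fails only at $Q_3$. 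So instead order the vertices cleverly or argue via a direct palette count.

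The cleanest route, and the one I would actually write: since every vertex $v$ has $|N(v)\cap(Q_j\cup Q_k)|\le 2$ where $Q_j,Q_k$ are the two cliques not containing $v$, we get $d_G(v)\le (|Q_i|-1)+2\le 2+2=4$; hence $G$ has maximum degree at most $4$. If $\Delta(G)\le 3$ we are done by greedy (any graph with $\Delta\le 3$ is $4$-colorable, indeed $4$-choosable). If $\Delta(G)=4$, a vertex $v$ of degree $4$ must lie in a $Q_i$ of size $3$ (contributing $2$) with exactly one neighbor in each of $Q_j,Q_k$. By Brooks' theorem, a connected graph with $\Delta=4$ is $4$-colorable unless it is $K_5$; but $\omega(G)=4<5$, so no component is $K_5$, and Brooks' theorem gives $\chi(G)\le 4$ on each component, hence on $G$. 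That single invocation of Brooks' theorem (plus the degree bound, which is the real content) finishes the proof. The main obstacle is therefore the degree bound, i.e. confirming $\Delta(G)\le 4$; this is immediate from gradedness once one verifies that a vertex of $Q_i$ genuinely has at most one neighbor in each other clique — which is exactly the definition of $[Q_i,Q_j]$ being special — and that $|Q_i|\le 3$, which follows from $\omega(G)=4$: if $|Q_i|\ge 4$ then $Q_i$ already meets $\omega$, and any vertex $v\notin Q_i$ adjacent to even one vertex $u\in Q_i$ together with three other vertices of $Q_i$ forms a $K_4-e$ (the four vertices $u,v$ and two more of $Q_i$, missing at most the edges from $v$), contradicting $(K_4-e)$-freeness; so $v$ is anticomplete to $Q_i$, forcing all of $Q_j,Q_k$ anticomplete to $Q_i$ and making $Q_i$ a union of components each a clique of size $\ge4=\omega$, again $4$-colorable. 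Assembling these observations gives the lemma.
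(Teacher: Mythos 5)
Your argument breaks at its central structural claim. The lemma is about \emph{good} graphs as defined in the paper: $V(G)=Q_1\cup Q_2\cup Q_3$ with each $Q_i$ a clique and each pair graded. Nothing in the hypothesis says $G$ is $(K_4-e)$-free, so all the steps you base on $(K_4-e)$-freeness are importing an assumption that is not available (and, in the application, would not even help). Worse, the claim you flag as the key one --- ``each $Q_i$ has at most $3$ vertices'' --- is false under the hypotheses, and in fact is \emph{incompatible} with them: by gradedness, any clique meeting two of the parts contains at most one vertex from each part, hence has at most $3$ vertices; so a clique of size $4$ must lie entirely inside one part, and $\omega(G)=4$ forces $|Q_i|=4$ for some $i$. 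Even the trivial example $G\cong K_4$ (with $Q_2=Q_3=\emptyset$) refutes your claim, and it is $(K_4-e)$-free, so adding that hypothesis would not rescue it. Your attempted derivation of a $K_4-e$ from a vertex $v$ with one neighbor $u$ in a $4$-clique is also incorrect: the set $\{u,v,w,x\}$ there is missing two edges ($vw$ and $vx$), not one, so it is not an induced $K_4-e$.

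With the correct bound $|Q_i|\le 4$, a vertex in a part of size $4$ can have degree $3+1+1=5$, so your maximum-degree bound $\Delta(G)\le 4$ fails and Brooks' theorem only yields $\chi(G)\le 5$. The genuinely hard instance --- all three parts of size $4$, joined pairwise by (partial) matchings --- is $5$-regular, and no degree/Brooks argument reaches $4$ there; this is exactly the configuration the paper's proof spends most of its effort on, by first $4$-coloring $Q_1\cup Q_2$ (a complement of a bipartite graph, hence perfect) and then inserting the vertices of $Q_3$ one at a time into color classes, using gradedness to swap $w_1,\dots,w_4$ between classes when a greedy insertion is blocked. So the proposal is not a correct proof: the needed content is precisely the careful extension/recoloring analysis for parts of size up to $4$, which the degree-based route cannot replace.
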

\begin{proof}
    Let $G$ be a good graph with $\omega(G)=4$. We use the same notation as in the definition. Recall that the size of a maximum stable set in $G$ is at most 3, and   the size of any clique containing  vertices both from  $Q_1$ and from $Q_2\cup Q_3$ (say) is at most $3$.  So since $\omega(G)=4$, we may assume that $|Q_1|=4$, $|Q_2|\leq 4$ and $|Q_3|\leq 4$.  Since $G[Q_1\cup Q_2]$ is the complement graph of a bipartite graph, $G[Q_1\cup Q_2]$ is perfect, by \cref{comp-bi}. Also since $Q_1$ is a clique with $|Q_1|=4$, clearly $\omega(G[Q_1\cup Q_2]) =4$. Hence $\chi(G[Q_1\cup Q_2]) = \omega(G[Q_1\cup Q_2]) =4$.  If $Q_3=\es$, then we are done; so we may assume that $Q_3\neq \es$. Likewise, we may assume that $Q_2\neq \es$.
    Next we claim the following:

   \begin{claim}\label{grad-1} If $|Q_3|\in \{1,2\}$, then $G$ is $4$-colorable. Likewise, if $|Q_2|\in \{1,2\}$, then $G$ is $4$-colorable.\end{claim}

 \no{\it Proof of  \cref{grad-1}}.~By symmetry, it is enough to prove the first statement. Suppose that $|Q_3|\in \{1,2\}$. If $|Q_3|=1$, then we let $Q_3:=\{q_3^1\}$, and if $|Q_3|=2$, then we let $Q_3:=\{q_3^1,q_3^2\}$. Since $\chi(G[Q_1\cup Q_2]) =4$, there are four nonempty disjoint stable sets, say $S_1$, $S_2$, $S_3$ and $S_4$ such that $S_1\cup S_2\cup S_3\cup S_4=Q_1\cup Q_2$.
  Now  since   $Q_i$'s are pairwise graded, $q_3^1$ has at most one neighbor in each of $Q_1$ and $Q_2$, and hence there exists an $i\in \{1,2,3, 4\}$  such that $q_3^1$ is anticomplete to $S_i$.  We may assume (without loss of generality) that $i=1$.  Also  since   $Q_i$'s are pairwise graded, $q_3^2$ has at most one neighbor in each of $Q_1$ and $Q_2$, and hence there exists a $j\in \{2,3, 4\}$ such that $q_3^2$ is anticomplete to $S_j$. We may assume (without loss of generality) that  $j=2$. Now $S_1\cup \{q_3^1\}$,  $S_2\cup \{q_3^2\}$, $S_3$ and $S_4$ are   stable sets whose union is $V(G)$.
  Thus   $G$ is $4$-colorable, and we are done.   This proves \cref{grad-1}.
 $\sq$

\medskip
By \cref{grad-1}, we may assume that $|Q_3|\in \{3,4\}$ and $|Q_2|\in \{3,4\}$. Next:

    \begin{claim}\label{grad-2}  If $|Q_3|=3$, then $G$ is $4$-colorable. Likewise, if $|Q_2|=3$, then $G$ is $4$-colorable.  \end{claim}

\no{\it Proof of  \cref{grad-2}}.~By symmetry, it is enough to prove the first statement. Suppose that $|Q_3|=3$. We let $Q_3: =\{q_3^1,q_3^2,q_3^3\}$. By \cref{grad-1}, we may assume that $Q_1 \cup Q_2 \cup\{q_3^1,q_3^2\}$ can be partitioned into four  nonempty disjoint  stable sets, say $R_1$, $R_2$, $R_3$ and $R_4$, such that $q_3^1\in R_1$ and $q_3^2\in R_2$. So $(R_1\sm \{q_3^1\})\cup (R_2\sm \{q_3^2\})\cup R_3\cup R_4\subseteq Q_1\cup Q_2$, and hence $|R_3|\leq 2$ and $|R_4|\leq 2$. If $q_3^3$ is anticomplete to $R_3$, then $R_3\cup \{q_3^3\}$ is a stable set, and hence $R_1$, $R_2$, $R_3\cup \{q_3^3\}$ and $R_4$ are stable sets whose union is $V(G)$, and hence $G$ is $4$-colorable; so we may assume that  $q_3^3$ has a neighbor in $R_3$, say $q^*$. Likewise, we may assume that $q_3^3$ has a neighbor in  $R_4$, say $q^{**}$.  Then since $Q_i$'s are pairwise graded, we may assume (without loss of generality) that $q^*\in Q_1$ and $q^{**}\in Q_2$. Again since $Q_i$'s are pairwise graded, clearly $\{q_3^1,q_3^2\}$ is anticomplete to $\{q^*,q^{**}\}$, and since $(R_1\sm \{q_3^1\})\cup (R_2\sm \{q_3^2\})\subseteq Q_1\cup Q_2$, we see that $q_3^3$ is anticomplete to $(R_1\sm \{q_3^1\})\cup (R_2\sm \{q_3^2\})$. Also since $Q_i$'s are pairwise graded and $|R_3\sm \{q^*\}|\leq 1$, there exists a $k\in \{1,2\}$   such that $q_3^k$ is anticomplete to $R_3$. We may assume (without loss of generality) that $k=1$.   Then  $(R_1\sm \{q_3^1\})\cup \{q_3^3\}$, $R_2$, $R_3\cup \{q_3^1\}$ and $R_4 $ are four stable sets whose union is $V(G)$, and  hence $G$ is $4$-colorable. This proves \cref{grad-2}. $\sq$

 \medskip
 By \cref{grad-1} and \cref{grad-2}, we may assume that $|Q_i|=4$ for $i\in \{1,2,3\}$.  We let $Q_3 =\{q_3^1,q_3^2,q_3^3,q_3^4\}$. By \cref{grad-2}, we may assume that $Q_1 \cup Q_2 \cup\{q_3^1,q_3^2,q_3^3\}$ can be partitioned into four  nonempty disjoint  stable sets, say $T_1$, $T_2$, $T_3$ and $T_4$  such that $q_3^1\in T_1$, $q_3^2\in T_2$ and $q_3^3\in T_3$. So $(T_1\sm \{q_3^1\})\cup (T_2\sm \{q_3^2\})\cup (T_3\sm \{q_3^3\})\cup T_4\subseteq Q_1\cup Q_2$, and hence   $|T_4|=2$. We let $T_4=\{q, q'\}$. If $q_3^4$ is anticomplete to $T_4$, then $T_4\cup \{q_3^4\}$ is a stable set, and hence $T_1$, $T_2$,  $T_3$ and $T_4\cup \{q_3^4\}$ are stable sets whose union is $V(G)$, and hence $G$ is $4$-colorable; so we may assume that  $q_3^4$ has a neighbor in $T_4$, say $q$.   Also if $q_3^4q'\in E(G)$, then since   $Q_i$'s are pairwise graded, clearly $q_3^1$ is anticomplete to $T_4$, and $q_3^4$ is anticomplete to $T_1\sm \{q_3^1\}$,  and hence $(T_1\sm \{q_3^1\})\cup \{q_3^4\}$, $T_2$, $T_3$
 and $T_4\cup \{q_3^1\}$ are stable sets whose union is $V(G)$, and hence $G$ is $4$-colorable; so we may assume that $q_3^4q'\notin E(G)$.  Then  since   $Q_i$'s are pairwise graded, we see that $Q_3\sm \{q_3^4\}$ is anticomplete to $q$, and there exists a $p\in \{1,2,3\}$  such that $q_3^pq'\notin E(G)$. We may assume (without loss of generality) that $p=1$.  Then $T_4\cup \{q_3^1\}= \{q,q', q_3^1\}$ is a stable set.   Now if $q_3^4$ is anticomplete to $T_1\sm \{q_3^1\}$,  then $(T_1\sm \{q_3^1\})\cup \{q_3^4\}$, $T_2$, $T_3$ and $T_4\cup \{q_3^1\}$ are four stable sets whose union is $V(G)$, and hence $G$ is $4$-colorable; so we may assume that $q_3^4$ has a neighbor in $T_1\sm \{q_3^1\}$, say $q^*$. So since $|T_1|\leq 3$, we have $|T_1\sm \{q_3^1, q^*\}|\leq 1$. Then since $Q_i$'s  are pairwise graded, we observe that $q_3^4$ is anticomplete to $(T_2\sm \{q_3^2\}) \cup (T_3\sm \{q_3^3\})$,  $\{q_3^2,q_3^3\}$ is anticomplete to $q^*$, and there exists an $\ell\in \{2,3\}$ such that $q_3^{\ell}$ is anticomplete to $T_1\sm \{q_3^1, q^*\}$. We may assume (without loss of generality) that $\ell=2$.    Then clearly $(T_1\sm \{q_3^1\})\cup \{q_3^2\}$, $(T_2\sm \{q_3^2\})\cup \{q_3^4\}$, $T_3$ and $T_4\cup \{q_3^1\}$ are four stable sets whose union is $V(G)$, and hence $G$ is $4$-colorable. This proves \cref{4clqsizlem}.
\end{proof}

\begin{figure}
\centering
 \includegraphics[height=3cm]{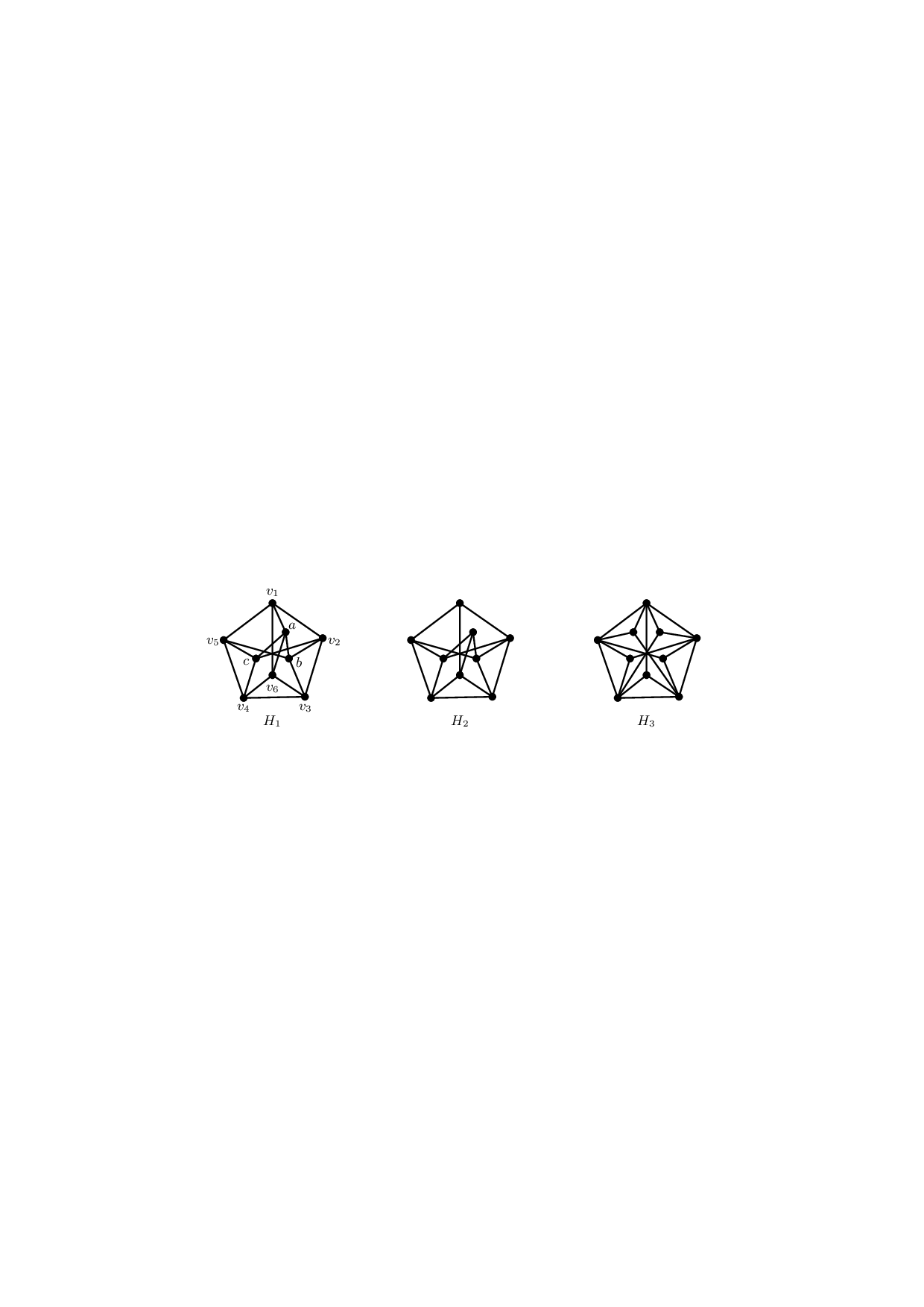}
\caption{Some graphs with $\chi=4$ and $\omega=3$.  }\label{fig-opt}
\end{figure}

	\begin{theorem}\label{lem-good}
    If $G$ is a good graph with $\omega(G)\geq 4$, then  $\chi(G)=\omega(G)$.
	
\end{theorem}

	\begin{proof}
    Let $G$ be a good graph with $\omega(G)\geq 4$.   We prove the theorem by induction on $\omega(G)$. If $\omega(G)=4$, then the theorem follows from \cref{4clqsizlem}. So we let $\omega(G)=k\geq 5$. Suppose that the theorem holds for all good induced subgraphs $H$  of $G$  with $4\leq \omega(H)\leq k-1$.  Let $S$ be a maximum stable set of $G$. Then clearly $S$ is a stable set such that $\omega(G-S) =\omega(G)-1$, and hence $\chi(G-S)=\omega(G-S)$, by induction hypothesis. Thus $\chi(G)= \chi(G-S)+\chi(G[S]) = (\omega(G)-1)+1 = \omega(G)$. 	This proves \cref{lem-good}.
\end{proof}

We remark that the assumption that the clique number $\omega\geq 4$ in \cref{lem-good} cannot be dropped, and that the constant 4 cannot be lowered. Indeed, there is a good graph $G$ with $\omega(G)=3$ such that $G$ is not $3$-colorable. For instance, take $G\cong H_1$ (see Figure~\ref{fig-opt}). Then $\chi(G)=4$, $\omega(G)=3$ and clearly $V(G)$ can be partitioned into three cliques, namely $Q_1:=\{v_1,v_6,a\}$, $Q_2:=\{v_2,v_3,b\}$ and $Q_3:=\{v_4,v_5,c\}$,
			such that $\{Q_1,Q_2\}$, $\{Q_2,Q_3\}$ and $\{Q_3,Q_1\}$ are graded.

\medskip
For our next lemma, we need the following definition.

\medskip
 \no{\bf Graph class $\cal C$}:~We say that  a   graph $G\in {\cal C}$ (see Figure~\ref{fig-classC})  if its vertex-set can be partitioned into a set consisting of five vertices, say $\{u_1,u_2,\ldots, u_5\}$, and six mutually disjoint sets, say $Q_1$, $Q_1'$, $Q_2$, $Q_2'$, $Q_3$ and $Q_3'$, such that we have the following:

 \vspace{-0.2cm}
\begin{itemize}\itemsep=0pt
 \item $u_1u_2,u_2u_3, u_3u_4, u_4u_5$ and $u_5u_1$   are edges in $G$,

\item  for each $\ell\in \{1,2,3\}$, $|Q_{\ell}'|\leq 1$ and $|Q_{\ell}\cup Q_{\ell}'|\geq 3$,
\item  $Q_1\cup Q_1'\cup \{u_1,u_2\}$, $Q_2\cup Q_2'\cup \{u_3,u_4\}$ and   $Q_3\cup Q_3'\cup \{u_1,u_5\}$ are cliques,

 \item $Q_1'$ is complete to $u_{4}$, $Q_2'$ is complete to $u_{1}$, and $Q_3'$ is complete to $u_{3}$,

 \item the pairs  $\{Q_1, Q_2\}$ and $\{Q_2, Q_3\}$ are graded, and
 \item  no other edges in $G$.
\end{itemize}

\begin{figure}[h]
\centering
 \includegraphics[height=5cm]{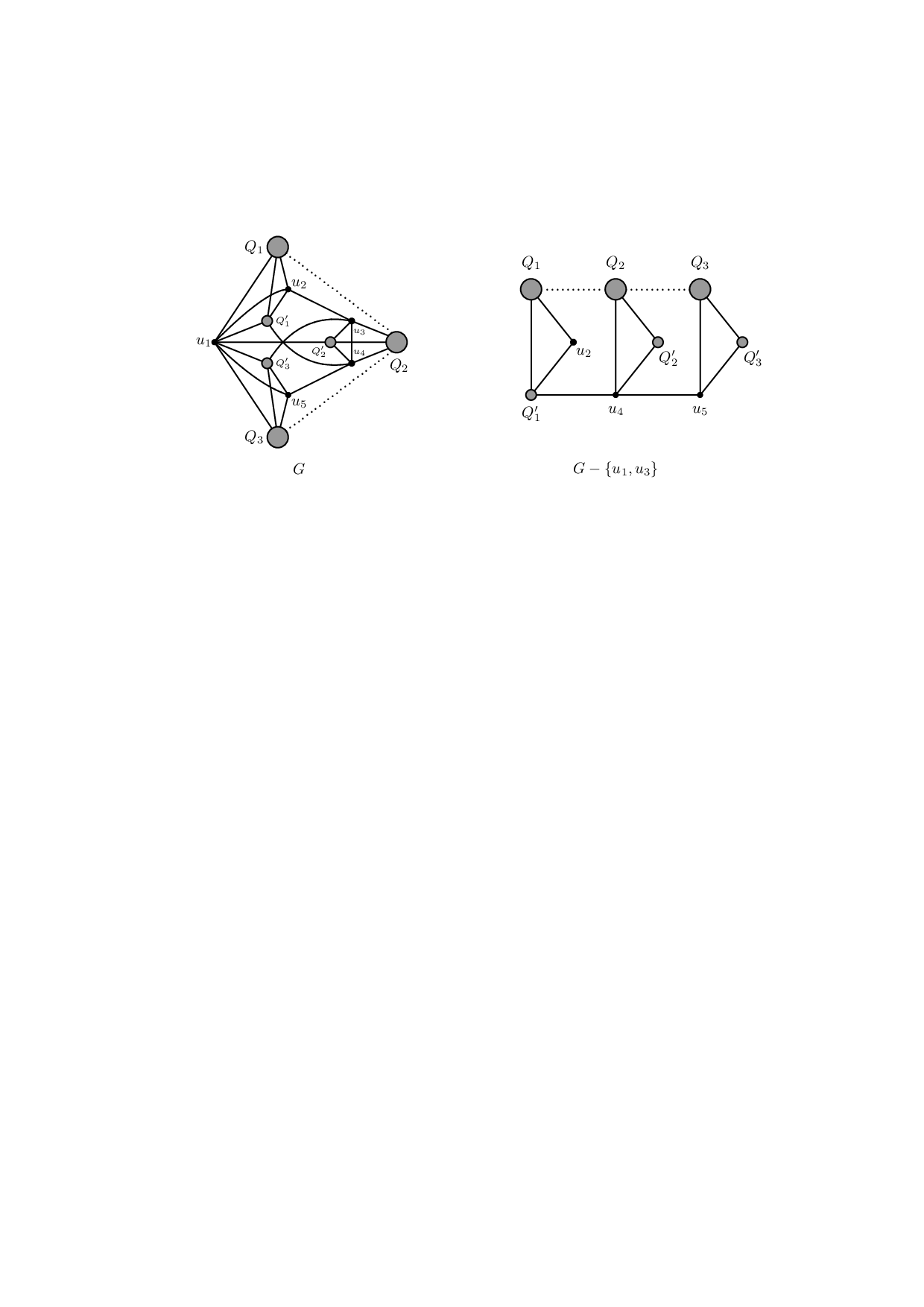}
\caption{Schematic representation of a graph $G$ in $\cal C$, and the graph obtained from $G$ by deleting the vertices $u_1$ and $u_3$. Here, a shaded circle represents a clique, a solid line between any two shapes represents that the corresponding vertex-sets   are complete to each other, a dotted line  between two shaded circles represents that the corresponding pair of cliques are graded,   and the absence of a line between   two shapes represents that the corresponding vertex-sets are anticomplete to each other.}\label{fig-classC}
\end{figure}

\begin{lemma}\label{classC-col}
For any graph $G\in {\cal C}$, we have $\chi(G)=\omega(G)$.
\end{lemma}
\begin{proof}
Let $G\in \cal C$. We use the same notation as in the above definition. Since $|Q_1\cup Q_1'|\geq 3$ and since $Q_1\cup Q_1'\cup \{u_1,u_2\}$ is a clique, we have $\omega(G)\geq 5$. Clearly by the definition of $G$ (see also Figure~\ref{fig-classC}),  the only possible maximum cliques of $G$ are $Q_1\cup Q_1'\cup \{u_1,u_2\}$, $Q_2\cup Q_2'\cup \{u_3,u_4\}$, and $Q_3\cup Q_3'\cup \{u_1,u_5\}$. Now we let $S:=\{u_1,u_3\}$. Then $S$ is a stable set in $G$ that intersects with all maximum cliques of $G$ exactly once, and so $\omega(G-S) = \omega(G)-1\geq 4$.
Also by the definition of $G$,   clearly the sets $Q_1\cup Q_1' \cup \{u_2\}$, $Q_2\cup Q_2'\cup \{u_4\}$  and $Q_3\cup Q_3'\cup \{u_5\}$ are pairwise graded  (see Figure~\ref{fig-classC}) whose union is $V(G)\sm S$. Thus $G-S$ is a good graph with $\omega(G-S)\geq 4$, and so by \cref{lem-good}, we have  $\chi(G-S)=\omega(G-S)=\omega(G)-1$. Then since $S$ is a stable set in $G$, we have $\chi(G)\leq \chi(G-S)+\chi(G[S])= (\omega(G)-1)+1 =\omega(G)$. This proves \cref{classC-col}.
\end{proof}

We will also use the following known result.

 \begin{theorem}[\cite{Ju-Huang}]\label{JH-thm}
 If $\cal G$ is a given  near optimal colorable hereditary class of graphs where
	every   $G \in \cal G$ satisfies   $\chi(G)\leq \max\{c, \omega(G)\}$, for some  $c\in \mathbb{N}$, and if \textsc{Coloring} for $\cal G$ is solvable in  polynomial time  for every fixed positive integer $k \leq c-1$, then  \textsc{Chromatic Number} for
	$\cal G$ can be solved in  polynomial time.
\end{theorem}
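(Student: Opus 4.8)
The plan is to reduce the \textsc{Chromatic Number} problem on $\cal G$ to a bounded number of \textsc{Coloring} instances, using the near optimal colorability as a gate. First, given an input graph $G\in \cal G$, I would compute $\omega(G)$; since $\cal G$ is near optimal colorable, $\chi(G)\leq \max\{c,\omega(G)\}$, so the answer always lies in the interval $[\omega(G),\max\{c,\omega(G)\}]$, which has length at most $c-\omega(G)\leq c-1$ (and if $\omega(G)\geq c$ the answer is forced to be exactly $\omega(G)$, which we can verify directly). Thus only the candidate values $k\in\{\omega(G),\omega(G)+1,\dots,c-1\}$ need to be tested, and there are at most $c-1$ of them — a constant, independent of $|V(G)|$.

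The main obstacle is the catch that computing $\omega(G)$ itself could in principle be hard. The way around this is to observe that for each fixed $k\leq c-1$, \textsc{Coloring} is assumed solvable in polynomial time on $\cal G$ by hypothesis; applying this for $k=1,2,\dots,c-1$ in increasing order and returning the first $k$ for which $G$ is $k$-colorable already determines $\chi(G)$ whenever $\chi(G)\leq c-1$. And if none of these $k$ succeeds, then $\chi(G)\geq c$, whence $\chi(G)\leq\max\{c,\omega(G)\}$ forces $\omega(G)\geq c$ and hence $\chi(G)=\omega(G)$; since $\omega(G)\leq\chi(G)\leq\max\{c,\omega(G)\}=\omega(G)$, we can simply output $\omega(G)$, which is computable in polynomial time here because a maximum clique of size $\geq c$ can be found by the coloring routine as well — more simply, once we know $\chi(G)=\omega(G)$, a single maximum-clique computation suffices, and in fact $\omega(G)$ equals the chromatic number of $G$ which we can pin down by binary search calling a suitable subroutine. (Cleanest: note $\chi(G)\le\max\{c,\omega(G)\}$ means that once the $c-1$ fixed \textsc{Coloring} calls all fail, we have $\chi(G)=\omega(G)$, and $\omega(G)$ can then be read off as $\chi(G)$; since a proper $\omega(G)$-coloring exists and any fewer is impossible, $\omega(G)$ is just "the least $k$ with a $k$-coloring", which we obtain by the same coloring oracle used via a parametrized search once we know the value is $\geq c$ — but this last part is already in the literature of \cite{Ju-Huang}, so I would cite it rather than re-derive it.)

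So the key steps, in order, are: (1) for $k=1,2,\dots,c-1$, run the polynomial-time \textsc{Coloring} algorithm on $(G,k)$; if some call returns "yes", output the smallest such $k$ as $\chi(G)$ and halt; (2) if all $c-1$ calls return "no", conclude $\chi(G)\geq c$, hence (by near optimal colorability) $\omega(G)\geq c$ and $\chi(G)=\omega(G)$, so output $\omega(G)$, which is polynomial-time computable in this regime. The total running time is a constant number of polynomial-time subroutine calls plus one clique computation, hence polynomial. The only genuinely delicate point is justifying step (2)'s clique computation, and as noted this is where I would lean on the argument already present in \cite{Ju-Huang}; everything else is a direct unwinding of the definition of near optimal colorability.
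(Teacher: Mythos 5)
First, note that the paper does not prove this statement at all: it is quoted verbatim from \cite{Ju-Huang} and used as a black box, so there is no internal proof to compare against; your proposal has to stand on its own, and as written it does not. The reduction in step (1) is fine: running the polynomial-time \textsc{Coloring} algorithms for $k=1,\dots,c-1$ correctly determines $\chi(G)$ whenever $\chi(G)\leq c-1$. The problems are in step (2). A small but genuine error is the inference ``$\chi(G)\geq c$, hence $\omega(G)\geq c$ and $\chi(G)=\omega(G)$'': near optimal colorability only gives $\chi(G)\leq\max\{c,\omega(G)\}$, which is vacuous when $\chi(G)=c$, and indeed one can have $\chi(G)=c$ with $\omega(G)<c$ (for instance $\omega=3$, $\chi=4$, $c=4$, as with the graphs $H_1,H_2$ of the paper). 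In that case your algorithm would output $\omega(G)<c$, which is wrong; the correct conclusion when all $c-1$ calls fail is $\chi(G)=\max\{c,\omega(G)\}$, and the output should be $\max\{c,\omega(G)\}$ (or one should first test by brute force, in time $O(n^{c})$, whether $G$ has a clique of size $c$).

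The more serious gap is the claim that $\omega(G)$ is ``polynomial-time computable in this regime.'' Nothing in the hypotheses gives this: \textsc{Max Clique} is \textsf{NP}-hard in general, and the only algorithmic tools you are granted are $k$-\textsc{Coloring} oracles for the \emph{fixed} values $k\leq c-1$. These cannot be used for the ``binary search'' or ``parametrized search'' you allude to, because once $\chi(G)\geq c$ you have no oracle for any relevant $k$, so ``the least $k$ with a $k$-coloring'' is not something you can query; likewise ``a maximum clique of size $\geq c$ can be found by the coloring routine'' is unsubstantiated. Brute force only yields cliques of constant size, not $\omega(G)$ when $\omega(G)$ is large, which is exactly the regime where the answer equals $\omega(G)$. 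This clique-computation step is the entire nontrivial content of the theorem, and deferring it to \cite{Ju-Huang} makes the argument circular, since that is the very source being proved. (In the applications in this paper the step is easy for a concrete reason you do not invoke: in a $(K_4-e)$-free graph every edge lies in a unique maximal clique, so all maximal cliques, and hence $\omega$, can be computed in polynomial time; but that uses a property of the specific classes, not of an arbitrary near optimal colorable hereditary class.) As it stands, then, the proposal reduces the theorem to its hardest step and leaves that step unjustified.
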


\section{The class of ($P_1+2P_2, K_4-e$)-free graphs}\label{sec:P1+2P2}
In this section, we prove \cref{col-thm-0} and its consequences. First we prove the following.

\begin{lemma} \label{2K2K1bound}
If $G$ is a  ($P_1+2P_2, K_4-e$)-free graph with $\omega(G)\geq 3$, then $\chi(G)\leq \max\{4, \omega(G)\}$.
\end{lemma}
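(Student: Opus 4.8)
The plan is as follows. Since $K_4-e$ is a subgraph of $K_4$, every $(K_4-e)$-free graph has clique number at most $3$; hence under the hypothesis $\omega(G)\ge 3$ we in fact have $\omega(G)=3$, and the statement reduces to proving $\chi(G)\le 4$. I may clearly assume $G$ is connected. Two elementary observations will be used throughout. First, since $\omega(G)=3$ and $G$ is $(K_4-e)$-free, for every vertex $v$ the graph $G[N(v)]$ contains neither an induced $P_3$ nor a triangle --- either would, together with $v$, induce a $K_4-e$ --- so $N(v)$ induces an induced matching. Second, for every vertex $v$ the graph $G[\overline N(v)]$ is $2P_2$-free, since an induced $2P_2$ in $\overline N(v)$ together with $v$ would induce a $P_1+2P_2$.

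Next I would fix a triangle $T=\{a,b,c\}$ of $G$. By the first observation no vertex of $V(G)\setminus T$ has two neighbours in $T$, so $V(G)\setminus T$ splits into $A:=N(a)\setminus T$, $B:=N(b)\setminus T$, $C:=N(c)\setminus T$ and the set $D$ of vertices with no neighbour in $T$. I would then record the key structural facts: each of $G[A],G[B],G[C]$ has \emph{at most one edge} (an induced matching with two edges inside $A$ is anticomplete to $b$, yielding a $P_1+2P_2$), and each of $G[A\cup B\cup D]$, $G[B\cup C\cup D]$, $G[C\cup A\cup D]$ is $2P_2$-free (an induced $2P_2$ inside one of them is anticomplete to the remaining vertex of $T$); in particular $G[D]$ is $2P_2$-free. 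Analysing the remaining adjacencies --- between $D$ and $A\cup B\cup C$, and across $A,B,C$ --- again via $(P_1+2P_2)$-freeness should pin down a fairly rigid global picture of $G$.

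With this picture in hand the aim is to exhibit a stable set $S$ of $G$ with $\omega(G-S)\le 2$; then $G-S$ is a triangle-free $(P_1+2P_2)$-free graph, hence $3$-colourable by the known bound for $(P_1+2P_2,K_3)$-free graphs quoted above, so $\chi(G)\le 4$. The natural candidate is a maximal stable set meeting every triangle. If a maximal stable set $S_0$ fails to do this, then --- using that $G$ is $(K_4-e)$-free and $S_0$ maximal --- $G$ contains an induced \emph{net}: a triangle $\{p,q,r\}$ together with pairwise non-adjacent private neighbours $s_p,s_q,s_r\in S_0$. One then uses the structural facts, together with $(P_1+2P_2)$-freeness applied to this configuration, to modify $S_0$ into a stable triangle-transversal; alternatively, when no such $S$ exists, to decompose $V(G)$ into three pairwise graded cliques, i.e.\ to show $G$ is a good graph in the sense of \cref{sec:use} --- and good graphs with $\omega\le 3$ can be $4$-coloured by a short direct argument in the spirit of \cref{4clqsizlem}.

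The part I expect to be the main obstacle is the set $D$ of vertices missing the chosen triangle, and more broadly the non-perfect case in which $G$ contains an induced $C_5$: there the adjacencies of $D$ into $A\cup B\cup C$ and the interactions of several triangles must be handled at once, and the delicate point is to perform the modifications of $S$ (or the clique decomposition) so that $G-S$ retains no triangle while $S$ stays independent. The $2P_2$-freeness of $G[B\cup C\cup D]$ and of its symmetric variants is the lever that should keep this case analysis finite.
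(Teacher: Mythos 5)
Your opening reduction is wrong, and the error is fatal to the whole plan. ``$(K_4-e)$-free'' in this paper means that $K_4-e$ does not occur as an \emph{induced} subgraph, so it does not force $\omega(G)\leq 3$: the complete graph $K_n$ is $(P_1+2P_2,K_4-e)$-free for every $n$ (any four of its vertices induce $K_4$, not $K_4-e$), and indeed the tightness examples in the paper are cliques $K_\ell$ with $\ell\geq 4$. Consequently the statement does not reduce to ``$\chi(G)\leq 4$''; the substantial case is $\omega(G)\geq 4$, where one must produce an $\omega(G)$-colouring, and your proposal says nothing about it. The same confusion propagates into your structural claims: a vertex outside the triangle $T$ may be complete to $T$ (this creates a $K_4$, which is allowed), so $V(G)\setminus T$ does not split into $A\cup B\cup C\cup D$ --- the paper must introduce the extra set $M$ of vertices complete to the triangle, shows $M$ is a clique anticomplete to $R=R_1\cup R_2\cup R_3$, and it is precisely this (possibly large) clique that carries the large clique number. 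Similarly, $G[N(v)]$ need not be an induced matching; a triangle inside $N(v)$ gives $K_4$, not an induced $K_4-e$, so all one gets is that $G[N(v)]$ is $P_3$-free, i.e.\ a disjoint union of cliques of unbounded size. Your claim that each of $G[A],G[B],G[C]$ has at most one edge fails for the same reason (the paper only establishes $\omega(G[R_i\cup R_{i+1}])\leq 2$ after a nontrivial argument using $M$, $M'$ and a maximum clique).

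Even in the genuinely bounded case $\omega(G)=3$, your plan (find a stable set meeting all triangles, or fall back on a good-graph decomposition) is only a sketch with the hard steps deferred, whereas the paper's route is quite different and handles all values of $\omega$ at once: it fixes a triangle $C$ inside a \emph{maximum} clique $K$, shows $\chi(G[R_\ell\cup M'])\leq 2$ for some $\ell$ and $\chi(G[R_i\cup R_{i+1}])\leq 2$, and then assembles the colouring as $\chi(G)\leq \chi(G[R_1\cup M'\cup\{v_2,v_3\}])+\chi(G[R_2\cup R_3\cup\{u,v_1\}])+\chi(G[M\setminus(C\cup\{u\})])\leq 2+2+(\omega(G)-4)=\omega(G)$ when $\omega(G)\geq 4$, and $\leq 4$ when $\omega(G)=3$. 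To salvage your attempt you would have to abandon the ``$\omega\leq 3$'' reduction, reintroduce the set of vertices complete to the chosen triangle, and explain how your stable-transversal strategy yields $\omega(G)$ colours when $\omega(G)$ is large; as written, the proposal does not prove the lemma.
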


\begin{proof}
 	Let $G$ be a $(P_1+2P_2,K_4-e)$-free graph.  Let $K$ be a maximum  clique in $G$, so that $|K|\geq 3$.  Let $C\subseteq K$ be such that $|C|=3$, say $C=\{v_1,v_2,v_3\}$. Let

\begin{center}
 \begin{tabular}{ccl}
 $R_{i}$ &:=  & $\{u\in V(G)\sm C \mid N(u)\cap C=\{v_i\}\}$, for $i\in [3]$,\\
 $M$  &:=  & $\{u\in V(G)\sm C \mid N(u)\cap C=C\}$, and \\
 $M'$&:=  & $\{u\in V(G)\sm C \mid N(u)\cap C=\es\}$.
 \end{tabular}
 \end{center}Let $R:=R_1\cup R_2\cup R_3$.  Throughout this lemma,   the arithmetic operations on indices are understood to be modulo $3$. Note that $K\sm C \subseteq M$. Moreover the following hold.

\begin{claim}
	\label{K2K1part} $V(G)=C\cup R\cup M\cup M'$.
\end{claim}
\no{\it Proof of \cref{K2K1part}}.~Suppose to the contrary that there is  a vertex  in $V(G)\sm (C\cup R\cup M\cup M')$, say $u$. Then $u$  has exactly two neighbors in $C=\{v_1,v_2,v_3\}$, and consequently $\{v_1,v_2,v_3,u\}$ induces a $K_4-e$ which is a contradiction. So \cref{K2K1part} holds.~$\sq$

\begin{claim}\label{K2K1Ri}
For each $i\in [3]$, the set $R_i$ is    a clique or a stable set.
\end{claim}
\no{\it Proof of \cref{K2K1Ri}}.~By symmetry, it is enough to prove the statement for $i=1$. Since $\{v_{2},v_{3}\}$ is anticomplete to $R_1$ and since $G$ is ($P_1+2P_2$)-free, we have $G[R_1]$ is ($P_1+P_2$)-free. Also since $v_1$ is complete to $R_1$ and since $G$ is ($K_4-e$)-free, we have $G[R_1]$ is $P_3$-free. Since $P_3$-free graphs can be expressed as the union of (disjoint) complete graphs, we conclude that if $R_1$ is   not a stable set, then it must be a clique. This proves \cref{K2K1Ri}. $\sq$

\begin{claim}\label{K2K1Mclq}
	The set $M$ is a clique such that $\omega(G[M])\leq \omega(G)-3$, and $R$ is anticomplete to $M$.
\end{claim}
\no{\it Proof of \cref{K2K1Mclq}}.~If there are nonadjacent vertices in $M$, say $m_1$ and $m_2$, then $\{m_1,v_2,v_3,m_2\}$ induces a $K_4-e$; so $M$ is a clique. Then since $M$ is complete to $C$, we have $\omega(G[M])\leq \omega(G)-3$. Next if there are adjacent vertices, say $r\in R$ and $m\in M$, then we may assume that $r\in R_1$ and then $\{r, m,v_1,v_2\}$ induces a $K_4-e$; so  $R$ is anticomplete to $M$.  This proves \cref{K2K1Mclq}.~$\sq$

\begin{claim}\label{K2K1Riclqnbd}
	For each $i\in [3]$  and for any   $u\in R_i\cup M'$,  $N(u)\cap R_{i+1}$ and $N(u)\cap R_{i+2}$ are stable sets.
\end{claim}
\no{\it Proof of \cref{K2K1Riclqnbd}}.~By symmetry, it is enough to prove the statement for $i=1$. So $u \in R_1\cup M'$. If there are adjacent  vertices in $N(u)\cap R_{2}$, say $r$ and $r'$, then $\{r,r',u,v_{2}\}$ induces a $K_4-e$; so $N(u)\cap R_{2}$ is a stable set. Likewise, $N(u)\cap R_{3}$ is also a stable set. This proves \cref{K2K1Riclqnbd}. $\sq$

\begin{claim}\label{K2K1per}
	We may assume that $R\cup M'$ is not a clique.
\end{claim}
\no{\it Proof of \cref{K2K1per}}.~Suppose that $R\cup M'$ is a clique. Then from \cref{K2K1part} and \cref{K2K1Mclq}, $V(G)$ can be partitioned into two cliques, namely $C\cup M$ and $R\cup M'$. Hence $G$ is the complement graph of a bipartite graph and thus $\chi(G)=\omega(G)$ (by \cref{comp-bi}), and we are done. So we may assume that $R\cup M'$ is not a clique.~$\sq$

\begin{claim}\label{K2K1Rimaxclqsz}
  For each $i\in [3]$, we have  $\omega(G[R_{i}\cup R_{i+1}])\leq 2$. Moreover for each $i\in [3]$, $\chi(G[R_{i}\cup R_{i+1}])\leq 2$, and for any $m\in M$, we have $\chi(G[R_{i}\cup R_{i+1}\cup \{m,v_{i+2}\}])\leq 2$.
\end{claim}
 \no{\it Proof of \cref{K2K1Rimaxclqsz}}.~We begin by proving the first assertion. By symmetry, it is enough to prove it for $i=1$. Suppose to the contrary  that $\omega(G[R_{1}\cup R_{2}])\geq 3$. Then by using \cref{K2K1Riclqnbd}, we may assume (without loss of generality) that  there are three mutually adjacent vertices in $R_{2}$, say $r_2,r_2'$ and $r_2''$.    Then from \cref{K2K1Ri}, $R_2$ is a clique.  Thus since $\{r_2, r_2', r_2'',v_2\}$ is a clique, we have $\omega(G)\geq 4$, and so $K\sm C\neq \es$. Let $u\in K\sm C$. If there is a vertex in $R_3$, say $r_3$, then by \cref{K2K1Riclqnbd}, we may assume that $r_2r_3,r_2'r_3\notin E(G)$, and then $\{r_3,r_2,r_2',u,v_1\}$ induces a $P_1+2P_2$ (by \cref{K2K1Mclq}); so we have $R_3=\es$. Likewise, we have $R_1=\es$. Since $R\cup M' = R_2\cup M'$, we have from  \cref{K2K1per} that $R_2\cup M'$ is not a clique and therefore   $M'\neq \es$. Let $m\in M'$. Then from \cref{K2K1Riclqnbd}, we may assume that $mr_2,mr_2'\notin E(G)$. Then $\{m,r_2,r_2',v_1,v_3\}$ induces a $P_1+2P_2$ which is a contradiction. So $\omega(G[R_{1}\cup R_{2}])\leq 2$. This proves  the first assertion.

We now prove the second assertion. Again by symmetry, it is enough to prove it for $i=1$. Since $R_{1}\cup R_{2}$ is anticomplete to $M\cup \{v_3\}$ (by \cref{K2K1Mclq}), it is enough to prove that $\chi(G[R_{1}\cup R_{2}])\leq 2$. If $R_{1}$ and $R_{2}$ are stable sets, then we are done. Also, if both $R_{1}$ and $R_{2}$ are not stable sets, then from \cref{K2K1Ri},  \cref{K2K1Riclqnbd} and from the first assertion, it is easy to see that $G[R_{1}\cup R_{2}]$ is a subgraph of a $C_4$, and hence $\chi(G[R_{1}\cup R_{2}])\leq 2$.
So   we may assume that $R_{1}$ is   a stable set and that $R_{2}$ is not a stable set. By \cref{K2K1Ri}, $R_{2}$ is a clique. Since $\omega(G[R_{2}])\leq 2$, we let $R_{2}:=\{r_2,r_2'\}$.
Now from \cref{K2K1Riclqnbd}, $R_{1}\cap N(r_2)$ is anticomplete to $r_2'$.  Then  $(R_{1}\cap N(r_2))\cup \{r_2'\}$ and $(R_{1}\sm N(r_2))\cup \{r_2\}$ are stable sets, and hence the second assertion holds.~$\sq$

\begin{claim}\label{K2K1RiMcomcol}
We have $\chi(G[R_{\ell}\cup M'])\leq 2$, for some $\ell\in [3]$.
\end{claim}
\no{\it Proof of \cref{K2K1RiMcomcol}}.~Suppose to the contrary that for each $\ell\in [3]$, we have $\chi(G[R_{\ell}\cup M'])\geq 3$.  Since for $\ell\in [3]$, $R_{\ell}\cup M'$ is anticomplete to $\{v_{\ell+1},v_{\ell+2}\}$, $G[R_{\ell}\cup M']$ is $(P_1+P_2)$-free and hence $G[R_{\ell}\cup M']$ is a complete multipartite graph. Thus since $\chi(G[R_{\ell}\cup M'])\geq 3$, for each $\ell\in [3]$, $G[R_{\ell}\cup M']$ contains a $K_3$, say with vertex-set $\{a_{\ell},b_{\ell},c_{\ell}\}$. By \cref{K2K1Riclqnbd}, we may assume that $a_{\ell},b_{\ell}\in M'$. Next we  show that for each  $\ell\in [3]$, $R_{\ell}\cup M'$ is a clique. By symmetry, it is enough to prove this for $\ell=1$. Suppose to the contrary that there are nonadjacent vertices in $R_1\cup M'$, say  $u$ and $v$. Since  $G[R_1\cup M']$ is a complete multipartite graph, there are three stable sets which are mutually complete to each other, say $S_1, S_2$ and $S_3$ such that $a_1\in S_1$, $b_1\in S_2$ and $c_1\in S_3$. Then we see that there is a $j\in [3]$  such that $\{u,v\}$ is complete to $S_{j}\cup S_{j+1}$, and then $S_{j}\cup S_{j+1}\cup \{u,v\}$ induces a $K_4-e$ which is  a contradiction. So $R_{\ell}\cup M'$ is a clique for each $\ell\in [3]$. Hence from \cref{K2K1per}, we may assume that there are nonadjacent vertices, say $r_1\in R_1$ and $r_2\in R_2$, and then $\{a_1,b_1,r_1,r_2\}$ induces a $K_4-e$ which is a contradiction. So  there is an  $\ell\in [3]$  such that $\chi(G[R_{\ell}\cup M'])\leq 2$.~$\sq$

\medskip
From \cref{K2K1RiMcomcol}, we may assume that $\chi(G[R_1\cup M'])\leq 2$. Hence $\chi(G[R_1\cup M'\cup \{v_2,v_3\}])\leq 2$.
Now if $\omega(G)=3$, then $M=\es$ and so from  \cref{K2K1Rimaxclqsz}, we have $\chi(G)\leq \chi(G[R_1\cup M'\cup \{v_2,v_3\}])+\chi(G[R_2\cup R_3\cup \{v_1\}])\leq 2+2=4$, and we are done. So we may assume that $\omega(G)\geq 4$. Then $K\sm C\neq \es$, and let $u\in K\sm C$. Then from  \cref{K2K1Mclq} and \cref{K2K1Rimaxclqsz}, we have $\chi(G)\leq \chi(G[R_1\cup M'\cup \{v_2,v_3\}])+\chi(G[R_2\cup R_3\cup \{u,v_1\}])+\chi(G[M\sm \{u\}]) \leq 2+2+(\omega(G)-4)=\omega(G)$. This proves \cref{2K2K1bound}.  \end{proof}

 \noindent{\bf Proof of \cref{col-thm-0}}.~By \cref{2K2K1bound} and the fact that every $(P_1+2P_2,K_3)$-free graph $G$ satisfies $\chi(G)\leq 3$ \cite{Rand-Thesis},  clearly $f$ is a $\chi$-binding function for the class of ($P_1+2P_2,K_4-e$)-free graphs.
	 Next we prove that $f$ is the smallest $\chi$-binding function for the class of ($P_1+2P_2,K_4-e$)-free graphs.  First note that for each $t\in \mathbb{N}$, the graph $K_t$  is a  ($P_1+2P_2,K_4-e$)-free graph with $\omega(K_t)=t$. Now we let $G_1\cong K_1$, $G_2\cong C_5$, $G_3\cong H_1$ or $G_3\cong H_2$  (see  Figure~\ref{fig-opt}), and for $\ell\geq 4$, we let $G_\ell\cong K_\ell$. Then  clearly for each $\ell\in \mathbb{N}$, $G_\ell$ is  a ($P_1+2P_2,K_4-e$)-free graph such that $\omega(G_\ell)=\ell$ and $\chi(G_\ell) =f(\ell)$.   This proves \cref{col-thm-0}. \hfill{$\Box$}

\medskip
From \cref{col-thm-0}, we immediately have the following.

 \begin{cor}\label{P12P2-NOC}
The class of ($P_1+2P_2, K_4-e$)-free graphs is near optimal colorable. That is, every ($P_1+2P_, K_4-e$)-free graph $G$ satisfies $\chi(G)\leq \max\{4, \omega(G)\}$.
\end{cor}

Bonomo et al. \cite{BCMSSZ} showed that \textsc{Coloring} for  the class of ($P_1+2P_2$, $K_4-e$)-free graphs is solvable in  polynomial time   for every fixed positive integer $k \leq 3$. This result together with  \cref{P12P2-NOC} and \cref{JH-thm}  imply  the following.

 \begin{cor}
  \textsc{Chromatic Number} for
	the class of ($P_1+2P_2$, $K_4-e$)-free graphs can be solved in  polynomial time.
 \end{cor}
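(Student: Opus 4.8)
The plan is to derive this as a one-shot application of \cref{JH-thm} with the constant $c=4$. First I would observe that the class $\mathcal{G}$ of $(P_1+2P_2, K_4-e)$-free graphs is hereditary, since every class defined by forbidding a fixed finite list of induced subgraphs is closed under taking induced subgraphs. Next, by \cref{col-thm-0} (equivalently, by \cref{2K2K1bound} together with the known fact \cite{Rand-Thesis} that every $(P_1+2P_2, K_3)$-free graph is $3$-colorable), every $G \in \mathcal{G}$ satisfies $\chi(G) \leq f(\omega(G)) \leq \max\{4, \omega(G)\}$; hence $\mathcal{G}$ is near optimal colorable with constant $c=4$.

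It then remains only to supply the tractability hypothesis of \cref{JH-thm}, namely that \textsc{Coloring} restricted to $\mathcal{G}$ is polynomial-time solvable for every fixed positive integer $k \leq c-1 = 3$. This is exactly the result of Bonomo et al.\ \cite{BCMSSZ}. With both hypotheses of \cref{JH-thm} verified for $\mathcal{G}$, the theorem directly yields that \textsc{Chromatic Number} for $\mathcal{G}$ is solvable in polynomial time, which is the claim.

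As for the main obstacle: there is essentially none, since the corollary is a straightforward assembly of \cref{col-thm-0}, the cited polynomial-time \textsc{Coloring} result of \cite{BCMSSZ}, and the generic reduction \cref{JH-thm}. The only place that warrants care is the bookkeeping on the constant: one must align the value $c$ coming from the near-optimal-colorability bound (here $c=4$, as $\chi(G)\le\max\{4,\omega(G)\}$) with the range of fixed $k$ for which \textsc{Coloring} is known to be tractable, and confirm that \cite{BCMSSZ} indeed covers all of $k\in\{1,2,3\}$ rather than a narrower range. Once this match is confirmed, \cref{JH-thm} does the remaining work with no further argument needed.
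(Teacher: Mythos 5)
Your proposal is correct and matches the paper's argument exactly: the corollary is obtained by combining \cref{col-thm-0} (giving near optimal colorability with constant $c=4$), the polynomial-time \textsc{Coloring} result of Bonomo et al.\ \cite{BCMSSZ} for every fixed $k\leq 3$, and the generic reduction \cref{JH-thm}. The bookkeeping on the constant that you flag is indeed the only point requiring care, and it is handled the same way in the paper.
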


\section{Properties of  ($K_4-e$)-free graphs  that  contain a  $C_5$}\label{genprop}

In  this section, we prove some general structural properties  of a ($K_4-e$)-free graph  that  contains a $C_5$, and use them in the latter sections.
Throughout this section, the   arithmetic operations on indices are understood to be modulo $5$.

Let $G$ be a ($K_4-e$)-free graph.
Suppose that $G$ contains a  $C_5$, say with
 vertex-set $C:= \{v_1, v_2, v_3, v_4,$ $ v_5\}$ and edge-set $\{v_1 v_2, v_2 v_3, v_3 v_4, v_4 v_5, $ $v_5 v_1\}$. Then since $G$ is ($K_4-e$)-free, we immediately see that, for all $i\in [5]$, each vertex in $V(G)\setminus C$   is nonadjacent to at least one of $v_{i}$, $v_{i+1}$ and $v_{i+2}$. Using this fact, we     define the following sets: For $i \in [5]$, we let:
\begin{center}
 \begin{tabular}{ccl}
 $A_{i}$ &:=  & $\{ v \in V(G)\setminus C \mid N(v) \cap  C  = \{v_i\}\},$\\
 $B_{i}$ &:=  & $\{v \in V(G)\setminus C \mid  N(v) \cap C = \{v_i, v_{i+1}\}\},$\\
 $D_{i}$ &:=  & $\{ v \in V(G)\setminus C \mid N(v) \cap C = \{v_{i-1}, v_{i+1}\}\},$\\
 $Z_{i}$ &:= & $\{ v \in V(G)\setminus C \mid N(v) \cap C = \{v_i, v_{i+1},v_{i-2}\}\},$ \\
 $X_i$ &:=  & $B_i\cup Z_{i}$, and\\
 $T $ &:=  & $\{ v \in V(G)\setminus C \mid N(v) \cap C = \es\}$.\\
 \end{tabular}
 \end{center}
 Further we let $A:= \cup_{i=1}^5A_i$,   $B:= \cup_{i=1}^5B_i$, $D:= \cup_{i=1}^5D_i$, $Z:= \cup_{i=1}^5Z_i$ and $X:= \cup_{i=1}^5X_i$. So $V(G)= C\cup A\cup B\cup D\cup Z\cup T=C\cup A\cup D\cup X\cup T$. Moreover note that for each $i\in [5]$, $N(v_i) = \{v_{i-1},v_{i+1}\}\cup A_i\cup B_{i-1}\cup B_i \cup D_{i-1}\cup D_{i+1}\cup Z_{i-1}\cup Z_{i}\cup Z_{i+2} = \{v_{i-1},v_{i+1}\}\cup A_i  \cup D_{i-1}\cup D_{i+1}\cup X_{i-1}\cup X_{i}\cup Z_{i+2}$.  Also for each $i\in [5]$, the following hold:

\begin{enumerate}[label=  $(\mathbb{O}\arabic*)$, leftmargin=1.25cm] \itemsep=0pt
\item \label{DZ} {\it $D_{i}$ is a stable set, and $|Z_{i}|\leq 1$.}
 \item \label{X-matching}  {\it $\{ X_i \cup \{v_i,v_{i+1}\}, X_{i+2}\cup\{v_{i-2},v_{i+2}\}\}$ is graded. Likewise, $\{X_i \cup \{v_i,v_{i+1}\}, X_{i-2}\cup \{v_{i-2},v_{i-1}\}\}$ is graded.}
\item  \label{bto}
 {\it $[X_{i}, A_{i}\cup A_{i+1}\cup B_{i-1}\cup B_{i+1}\cup (D\sm D_{i-2})\cup (Z\sm Z_{i})]$ is an empty set.}
\end{enumerate}

        \begin{proof}
      \ref{DZ}:~If there are  adjacent vertices  in $D_i$, say $d$ and $d'$, then $\{v_{i-1}, d, d', v_{i+1}\}$ induces a $K_4-e$; so $D_i$ is a stable set.
Next suppose to the contrary that there are two vertices in $Z_i$, say $z$ and $z'$. Now if  $z$ and $z'$ are adjacent, then $\{v_{i}, z, z', v_{i-2}\}$ induces a  $K_4-e$, and if $z$ and $z'$   are nonadjacent, then $\{z, v_{i}, v_{i+1},z'\}$ induces a $K_4-e$.  These contradictions show that $|Z_i| \leq 1$. So \ref{DZ} holds.  $\sq$

\smallskip
\noindent{\ref{X-matching}:}~If there are nonadjacent vertices in $X_i$, say $x$ and $x'$, then $\{x, v_i, v_{i+1}, x'\}$ induces a $K_4-e$; so $X_i$ is a clique. So by the definition of $X_i$, clearly $X_i \cup \{v_i,v_{i+1}\}$ is a clique.  Likewise, $X_{i+2}\cup \{v_{i-2},v_{i+2}\}$ is also a clique.  So again by the definition of $X_i$'s and \ref{DZ}, it is enough to show that $[X_i, X_{i+2}]$ is special. Now  if there is a vertex in $X_i$, say $x_i$, which has two neighbors in $X_{i+2}$, say $x_{i+2}$ and $x_{i+2}'$, then $\{x_i,x_{i+2},x_{i+2}',v_{i+2}\}$ induces a $K_4-e$; so each vertex in $X_i$ is adjacent to at most one vertex of $X_{i+2}$.  Likewise, each vertex in $X_{i+2}$ is adjacent to at most one vertex of $X_{i}$.  So $\{ X_i \cup \{v_i,v_{i+1}\}, X_{i+2}\cup\{v_{i-2},v_{i+2}\}\}$ is graded. Likewise, $\{X_i \cup \{v_i,v_{i+1}\}, X_{i-2}\cup \{v_{i-2},v_{i-1}\}\}$ is graded. This proves \ref{X-matching}.  $\sq$

\smallskip
\noindent{\ref{bto}:}~If  there are   vertices, say $x\in X_{i}$ and $u\in A_{i}\cup A_{i+1}\cup B_{i-1}\cup B_{i+1}\cup (D\sm D_{i-2})\cup (Z\sm Z_{i})$ such that $xu\in E(G)$, then $\{v_{i},v_{i+1},  x, u\}$ induces a $K_4-e$. So \ref{bto} holds.
       \end{proof}

\section{The class of ($2P_1+P_3, K_4-e$)-free graphs}\label{sec:2P1+P3}

In this section, we give a proof of \cref{col-thm-2}. First we give some important structural properties of  ($2P_1+P_3, K_4-e$)-free graphs  that  contain a  $C_5$  in \cref{genprop-3}, and use them   to prove a  structure theorem  for the class of ($2P_1+P_3, K_4-e$)-free graphs  in   \cref{Sec:Struc-thm2}. Finally we prove \cref{col-thm-2} in  \cref{col-2P1P3K4-e}.  Throughout this section, unless otherwise stated, the arithmetic operations on indices are understood to be modulo $5$.

 \subsection{Properties of  ($2P_1+P_3, K_4-e$)-free graphs  that  contain a  $C_5$}\label{genprop-3}

Let $G$ be a ($2P_1+P_3, K_4-e$)-free graph.
Suppose that $G$ contains a  $C_5$, say with
 vertex-set $C:= \{v_1, v_2, v_3, v_4, v_5\}$ and edge-set $\{v_1 v_2, v_2 v_3, v_3 v_4, v_4 v_5, $ $v_5 v_1\}$. Then, with respect to $C$, we define the sets $A$, $B$, $D$, $Z$, $X$ and $T$ as in Section~\ref{genprop}, and we use  the properties given in \cref{genprop}.
 In addition, for each $i\in [5]$, the following hold:
\begin{enumerate}[label=  $(\mathbb{M}\arabic*)$, leftmargin=1.25cm] \itemsep=0pt
\item \label{cliq}    {\it  $A_i\cup T$ is a clique.}
\item \label{ato-2}   {\it  $[A_i, (A\sm A_i)\cup   (D\sm D_{i}) \cup X_{i+2}]$ is complete.}
 \item \label{bd}  {\it One of $X_i$ and $D_{i-1} \cup D_{i+2}$ is  empty.}
  \end{enumerate}

\begin{proof}
\ref{cliq}:~If there are nonadjacent vertices in $A_i \cup T$, say $u$ and $v$, then $\{u, v, v_{i+1}, v_{i+2}, v_{i-2}\}$ induces a $2P_1+P_3$. So \ref{cliq} holds. $\sq$

\smallskip
\noindent{\ref{ato-2}}:~By symmetry, it is enough to prove that $[A_i, A_{i+1}\cup A_{i+2}\cup D_{i+1}\cup D_{i+2}\cup X_{i+2}]$ is complete. If there are nonadjacent vertices, say $a\in A_i$ and   $u \in A_{i+1}\cup A_{i+2}\cup D_{i+1}\cup D_{i+2}\cup X_{i+2}$, then $\{a, v_{i-1}, u, v_{i+1}, v_{i+2}\}$ induces a $2P_1+P_3$.  So \ref{ato-2} holds. $\sq$

\smallskip
\noindent{\ref{bd}}:~Suppose to the contrary that there are vertices, say $x \in X_i$ and $d \in D_{i-1} \cup D_{i+2}$. Then from \ref{bto}, we have $xd\notin E(G)$.   But then $\{d, v_{i-1}, x, v_{i+1}, v_{i+2}\}$ or   $\{d, v_{i+2}, x, v_{i}, v_{i-1}\}$ induces a $2P_1+P_3$  which is a contradiction. So \ref{bd} holds.
  \end{proof}

\subsection{Structure of ($2P_1+P_3, K_4-e$)-free graphs} \label{Sec:Struc-thm2}
Our main result in this subsection is the following.
 \begin{theorem}\label{main-thm-2}
Let $G$ be a  ($2P_1+P_3, K_4-e$)-free graph which is not perfect. Then     $G$ has a vertex of degree  at most 5 or $G$ is a good graph with $\omega(G)\geq 4$ or    $G$ is $6$-colorable or $G\in \cal C$.
\end{theorem}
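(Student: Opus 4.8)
The plan is to assume that $G$ is a $(2P_1+P_3,K_4-e)$-free graph which is not perfect, and that $G$ has minimum degree at least $6$; I must then show that $G$ is either a good graph with $\omega(G)\ge 4$ or is $6$-colorable. Since $G$ is not perfect, by the Strong Perfect Graph Theorem (\cref{spgt}) it contains an odd hole or the complement of an odd hole. Because $G$ is $(K_4-e)$-free it contains no $K_3$ inside the complement of a long hole in an obstructing way; more to the point, $2P_1+P_3$-freeness forbids $\overline{C_7}$ and longer anti-holes, and forbids $C_7$ and longer holes as well (a $C_7$ contains $2P_1+P_3$). So the only possible induced obstruction is $C_5$ (note $\overline{C_5}\cong C_5$). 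Hence $G$ contains an induced $C_5$, and I can invoke the whole machinery of \cref{genprop} together with the properties \ref{cliq}, \ref{ato-2}, \ref{bd} of \cref{genprop-3}. Fix such a $C_5$ with vertex set $C=\{v_1,\dots,v_5\}$ and the associated partition $V(G)=C\cup A\cup B\cup D\cup Z\cup T$.

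Next I would analyze this partition under the minimum-degree-$\ge 6$ hypothesis to drastically limit which sets are nonempty. The key structural facts are: each $X_i=B_i\cup Z_{i-2}$ is a clique that is graded to $X_{i\pm2}$ (\ref{X-matching}); $X_i$ is anticomplete to most other sets (\ref{bto}); $A_i\cup T$ is a clique (\ref{cliq}); $A_i$ is complete to $(A\setminus A_i)\cup B_{i+2}\cup(D\setminus D_i)\cup Z_i$ (\ref{ato-2}); and for each $i$, one of $X_i$ and $D_{i-1}\cup D_{i+2}$ is empty (\ref{bd}). The strategy is a case distinction on whether $D\cup Z\cup B\ne\es$ or $A\cup T$ is "large". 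Roughly: if the "middle" sets $B,D,Z$ are all empty, then $V(G)=C\cup A\cup T$, and using \ref{cliq} and \ref{ato-2} one sees $A\cup T$ together with two consecutive $C$-vertices forms a structure that is a union of three graded cliques (essentially $Q_1=A_1\cup A_2\cup\{\text{parts of }A_3,A_4,A_5\}\cup T$ arranged appropriately), giving a good graph; one checks $\omega\ge 4$ because the minimum degree is $\ge 6$ and a vertex of small clique-neighborhood would either have low degree or be perfect-forced. If instead some $B_i$ or $Z_i$ or $D_i$ is nonempty, then \ref{bd} kills the opposite diagonal $D$-sets, \ref{bto} isolates the $X$-sets from almost everything, and one argues that the graph decomposes into a bounded number of cliques — typically $C$ (5 vertices, but colorable with the hole coloring in pairs) plus a few cliques from $A\cup T$ plus the graded clique pairs $X_i,X_{i+2}$ — and assembles a $6$-coloring by reusing colors across anticomplete parts. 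The coloring of the $C_5$ uses $3$ colors, and the remaining parts slot in using special/graded structure, so $6$ colors suffice.

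The main obstacle I expect is the bookkeeping in the case where several of $A_i$, $B_i$, $D_i$, $Z_i$ are simultaneously nonempty: the adjacency constraints \ref{ato-2} and \ref{bto} pull in opposite directions ($A$-sets are nearly complete to everything, $X$-sets are nearly anticomplete to everything), so one must carefully verify that no $K_4-e$ or $2P_1+P_3$ is created while merging parts into cliques, and that the minimum-degree hypothesis forces the "bad" small configurations (which would otherwise only give a low-degree vertex, not a contradiction) to actually not occur or to be handled by the good-graph conclusion. In particular, showing $\omega(G)\ge 4$ in the good-graph case — rather than merely producing a three-clique partition — requires ruling out the sporadic small graphs like $H_1,H_2$ of \cref{fig-opt}; but those have vertices of degree at most $5$ (indeed $H_1$ has degree-$3$ vertices), so the minimum-degree hypothesis disposes of them. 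The remaining delicate point is ensuring that when we do land on a good graph we genuinely have $\omega\ge4$; I would argue that a good graph with $\omega=3$ and minimum degree $\ge6$ cannot exist by a direct counting/structure argument on the three graded cliques (each vertex sees at most its own clique of size $\le3$ plus at most one vertex in each of the other two cliques, giving degree $\le 4$), which contradicts $\delta(G)\ge 6$, so the good-graph outcome automatically has $\omega(G)\ge 4$ as required.
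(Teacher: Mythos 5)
There is a genuine gap, and it starts at the very first reduction. You claim that $2P_1+P_3$-freeness excludes $C_7$ ("a $C_7$ contains $2P_1+P_3$"); this is false. In $C_7$, any induced $P_3$ leaves only four remaining vertices, and the two of them that avoid the neighbourhood of the $P_3$ are consecutive on the cycle, hence adjacent; so $C_7$ is $(2P_1+P_3)$-free. Only odd holes of length at least $9$ contain $2P_1+P_3$, which is why the paper cannot reduce to the $C_5$ case alone and instead proves a separate lemma (\cref{lem-2k1p3-c7}) showing that a $(2P_1+P_3,K_4-e,C_5)$-free graph containing a $C_7$ is isomorphic to $C_7$ (and hence $3$-colorable, or trivially has a vertex of degree $2\leq 5$). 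Your proposal simply has no argument for the $C_7$ case; it is repairable, but as written the case is dismissed on an incorrect premise.

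The second, larger problem is that the heart of the theorem --- the analysis of a graph containing a $C_5$ --- is only gestured at. Your sketch never establishes the concrete outcomes: you do not handle the case $T\neq\es$ (the paper's \cref{l1} shows that a $C_5+K_1$ forces a vertex of degree at most $5$, which is what licenses assuming $T=\es$), and your proposed case split ("if $B,D,Z$ are all empty then $V(G)=C\cup A\cup T$ is a good graph; otherwise assemble a $6$-coloring by reusing colors") does not match what actually happens and is not proved. In the paper's \cref{lem-2k1p3-c5}, it is the presence of $A$ or $B$ that leads (after delicate degree counting via \ref{DZ}--\ref{bd} and \ref{cliq}--\ref{bd}) either to a vertex of degree at most $5$ or to an explicit partition into three pairwise graded cliques with $\omega\geq 4$, and the $6$-colorable outcome is precisely the residual case $V(G)=C\cup D\cup Z$, colored by the six explicit stable sets $\{v_j\}\cup D_j$ ($j\in[5]$) and $Z$. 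Your text acknowledges that this bookkeeping is "the main obstacle" but offers no argument for it, so the proposal is a plan rather than a proof. (Your closing observation --- that a good graph with $\omega=3$ has maximum degree at most $4$, so minimum degree $\geq 6$ forces $\omega\geq 4$ in the good-graph outcome --- is correct, but it does not substitute for producing the three-clique partition in the first place.)
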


The proof of \cref{main-thm-2} follows from the lemmas given below, and it is given at the end of this subsection.

\begin{lemma}\label{l1}
    Let $G$ be ($2P_1+P_3, K_4-e$)-free graph. If $G$ contains a $C_5+K_1$, then $G$ has a vertex of degree at most 5.
\end{lemma}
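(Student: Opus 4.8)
\textbf{Proof plan for \cref{l1}.}
The plan is to work with a fixed copy of $C_5 + K_1$ in $G$, say the $C_5$ on $C = \{v_1,\dots,v_5\}$ with the usual edge-set, together with a vertex $w$ anticomplete to $C$ (so $w \in T$ in the notation of \cref{genprop-3}). I want to show that the vertex $w$ — or one of the $v_i$ — has degree at most $5$. The key observation is that $w$ together with a non-edge of the $C_5$ creates a forbidden configuration: if $w$ has a neighbor $u$ and $u$ has two non-neighbors $x,y$ on $C$ with $xy \notin E(G)$, then $\{x, y, \ldots\}$ combined with the edge $wu$ (or an edge inside $N(w)$) threatens to induce a $2P_1 + P_3$. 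So the first step is to bound, for each vertex $u \in N(w)$, how $u$ can attach to $C$, using that $G$ is $(2P_1+P_3)$-free and $(K_4-e)$-free; in particular I expect to show that every neighbor of $w$ lies in $A \cup B \cup D \cup Z \cup T$ and that its interaction with $C$ is heavily constrained relative to $w$.

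Next I would exploit \ref{cliq}: $A_i \cup T$ is a clique for each $i$, so in particular $T$ is a clique, and hence $w$ is complete to $T \setminus \{w\}$. More importantly, I would argue that $N(w) \subseteq T$ is impossible unless $T$ is small, because if $w$ has a neighbor in $A \cup B \cup D \cup Z$ then — using \ref{ato-2}, \ref{bd}, \ref{bto}, and \ref{DZ} together with the $2P_1+P_3$-freeness applied to $w$ and two independent vertices of $C$ — the neighborhood $N(w)$ is forced to lie within a single clique together with a bounded number of extra vertices, and $(K_4-e)$-freeness then caps the size of that clique at $2$. The upshot I am aiming for is that $N(w)$ is contained in the union of at most two cliques each of size $\le 2$, plus perhaps a couple of exceptional vertices, giving $d(w) \le 5$; alternatively, if $w$ has too many neighbors, I can relocate the argument to show some $v_i$ (which already has the two $C_5$-neighbors plus controlled outside neighbors) has degree $\le 5$, or reach a contradiction with $(2P_1+P_3)$-freeness directly.

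Concretely, the key steps in order are: (1) fix $C_5 + K_1$ and set up the partition $A,B,D,Z,T$ of $V(G)\setminus C$ from \cref{genprop-3}; (2) show $w$ is anticomplete to $A \cup B \cup D \cup (Z \setminus (\text{at most a few sets}))$ by deriving $2P_1+P_3$'s from $w$, a neighbor, and a non-adjacent pair on $C$ — this is where \ref{ato-2} and \ref{bd} come in to rule out spreading neighbors; (3) conclude $N(w)$ is confined to $T$ together with at most one or two of the $X_i$/$Z_i$ classes; (4) use \ref{cliq} (clique structure of $A_i \cup T$) and $(K_4-e)$-freeness to bound $|T|$ and the size of the relevant $X_i$ by small constants, since a clique of size $3$ plus any common neighbor gives a $K_4-e$; (5) add up to get $d(w) \le 5$, handling the few residual cases (e.g.\ $w$ adjacent to a $Z_i$-vertex) separately, each time either pinning the count or extracting a direct contradiction.

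The main obstacle I anticipate is step (2)–(3): controlling exactly which classes outside $T$ a vertex of $T$ can touch. A vertex $w\in T$ adjacent to, say, a vertex $u \in B_i$ does not immediately produce a $2P_1+P_3$, because $u$ sees $v_i, v_{i+1}$ and only three vertices of $C$ are missed; one has to combine $w$ with a suitable neighbor of $w$ inside $N(u)$ or use a second vertex of $C$ carefully, and the bookkeeping over the cyclic index $i$ is delicate. I expect the cleanest route is to first prove $w$ is anticomplete to $A \cup D$ outright (these are the ``sparse'' attachment types and yield $2P_1+P_3$'s quickly via \ref{ato-2} and \ref{bd}), then show any neighbor of $w$ in $B \cup Z$ forces all of $N(w)$ into one clique of bounded size, so that $d(w)$ is at most $5$ in every case; if instead $N(w)\subseteq T$, then $\{w\}\cup(N(w)\cap C\text{-neighbors of a }v_i)$-type arguments bound $|T|$, and again $d(w)\le 5$.
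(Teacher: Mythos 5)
Your plan has a genuine gap, and in fact its two central claims point in the wrong direction. First, the adjacency relation you want in steps (2)--(3) is reversed: a vertex $w\in T$ is \emph{complete}, not anticomplete, to $A\cup B\cup D\cup Z$. Completeness to each $A_i$ is immediate from \ref{cliq}, and if $w$ were nonadjacent to some $v\in X_i\cup D_{i+2}$ then $\{w,v_{i-1},v,v_{i+1},v_{i+2}\}$ would induce a $2P_1+P_3$ (this is exactly \cref{tto-1} in the paper). So the step you call the cleanest route, proving $w$ anticomplete to $A\cup D$ outright, is false whenever those sets are nonempty, and the whole mechanism of confining $N(w)$ to $T$ plus a couple of classes collapses.

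Second, and more fundamentally, the vertex whose degree you are trying to bound is the wrong one: the degree of the $T$-vertex $w$ cannot be bounded at all. Take $G$ to be the disjoint union of a $C_5$ and a large complete graph $K_m$: this graph is $(2P_1+P_3)$-free (any induced $P_3$ lies in the $C_5$, and one cannot find two pairwise nonadjacent vertices anticomplete to it) and $(K_4-e)$-free, it contains $C_5+K_1$, yet every vertex of the $K_m$ plays the role of $w$ and has degree $m-1$. This also kills your fallback ``if $N(w)\subseteq T$, bound $|T|$'': $T$ can be an arbitrarily large clique. The lemma is true because a \emph{cycle} vertex has small degree, and the correct use of $w$ is as a tool, not a target: once $T$ is complete to $A\cup B\cup D\cup Z$, any two vertices of $A_i\cup Z_i$ (resp.\ $X_i\cup D_{i+2}$, $X_i\cup D_{i-1}$) together with $w$ and a suitable $v_j$ yield a $K_4-e$ or a $2P_1+P_3$, so each of these sets has at most one vertex (\cref{t-card}), and then $d(v_1)\leq |\{v_2,v_5\}|+|X_5\cup D_2|+|X_1\cup D_5|+|A_1\cup Z_1|\leq 5$. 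Your proposal does contain the raw ingredients (\ref{cliq}, $K_4-e$-freeness against pairs in a class, the hedge that some $v_i$ might be the low-degree vertex), but as written the argument would not go through without reversing the completeness claim and redirecting the count to a vertex of $C$.
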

\begin{proof}
Let $G$ be a ($2P_1+P_3, K_4-e$)-free graph. Suppose that $G$ contains a $C_5 + K_1$, say a $C_5$ with
vertex-set $C := \{v_1, v_2, v_3, v_4, v_5\}$ and edge-set \{$v_1 v_2$, $v_2 v_3$, $v_3 v_4$, $v_4 v_5$, $v_5 v_1$\} plus a vertex $t$ which is anticomplete to  $C$. Then, with respect to $C$, we define the sets $A$, $B$, $D$, $Z$, $X$ and $T$ as in  \cref{genprop}, and we use the properties given in \cref{genprop} and \cref{genprop-3}. Clearly  $t \in T$. In addition, we claim that the following hold:

\begin{claim}\label{tto-1}
    $[T,  B \cup D \cup Z]$ is complete.
\end{claim}
\no{\it Proof of \cref{tto-1}}.~Suppose to the contrary that there are nonadjacent vertices, say $t'\in T$ and $u\in  B\cup D \cup Z$. Then we may assume that $u\in X_i\cup D_{i+2}$, for some $i\in [5]$. But then $\{t',v_{i-1},u,v_{i+1},v_{i+2}\}$ induces a $2P_1+P_3$ which is a contradiction. So \cref{tto-1} holds. $\sq$

\begin{claim}\label{t-card}
For each $i\in [5]$, we have $|A_i\cup Z_{i+2}|\leq 1$,  $|X_i\cup D_{i+2}|\leq 1$ and $|X_i\cup D_{i-1}|\leq 1$.
\end{claim}
\no{\it Proof of \cref{t-card}}.~If there are two vertices in  $A_i\cup Z_{i+2}$,   say $u$ and $u'$, then from \ref{DZ}, \ref{cliq}, \ref{ato-2} and from \cref{tto-1}, we see that $\{t,u,u',v_i\}$ induces a $K_4-e$; so we have $|A_i\cup Z_{i+2}|\leq 1$. Next if there are two vertices in  $X_i\cup D_{i+2}$,  say $v$ and $v'$, then from    \cref{tto-1}, we see that $\{t,v,v',v_{i+1}\}$ induces a $K_4-e$ or $\{v_{i-1}, v_{i+2},v,t,v'\}$ induces a $2P_1+P_3$; so  we have $|X_i\cup D_{i+2}|\leq 1$. Likewise,  we have $|X_i\cup D_{i-1}|\leq 1$. This proves \cref{t-card}. $\sq$

\medskip
So from the above claims,  we observe that $deg(v_1) = |N(v_1)| = |\{v_{2},v_{5}\}|+|X_5\cup D_2|+|X_1\cup D_5|+|A_1\cup Z_3| \leq 2 + 1 + 1 +1 = 5$. This proves Lemma \ref{l1}.
\end{proof}

\begin{lemma}\label{lem-2k1p3-c5}
    Let $G$ be ($2P_1+P_3, K_4-e$)-free graph. If $G$ contains a $C_5$, then $G$ has a vertex of degree at most 5 or $G$ is a good graph with $\omega(G)\geq 4$ or $G$ is $6$-colorable or $G\in \cal C$.
\end{lemma}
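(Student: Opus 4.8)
The plan is to assume $G$ is a $(2P_1+P_3,K_4-e)$-free graph containing a $C_5$ with vertex-set $C=\{v_1,\dots,v_5\}$, and to assume that $G$ has no vertex of degree at most $5$ and that $G$ is not a good graph with $\omega(G)\geq 4$; under these assumptions I will produce a proper $6$-coloring. By \cref{l1} we may assume $G$ is $(C_5+K_1)$-free, so the set $T$ (vertices anticomplete to $C$) is empty, and $V(G)=C\cup A\cup B\cup D\cup Z$. I would first extract as much global structure as possible from $(\mathbb{O}1)$--$(\mathbb{O}3)$ and $(\mathbb{M}1)$--$(\mathbb{M}3)$: each $A_i\cup T=A_i$ is a clique, each $X_i=B_i\cup Z_{i-2}$ is a clique, $\{X_i,X_{i\pm2}\}$ is graded, $A_i$ is complete to $(A\setminus A_i)\cup B_{i+2}\cup(D\setminus D_i)\cup Z_i$, and by $(\mathbb{M}3)$ for each $i$ at least one of $X_i$ and $D_{i-1}\cup D_{i+2}$ is empty. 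The key dichotomy is therefore whether $D$ is large or small.

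Next I would analyze the $D$-vertices and the $B\cup Z$-vertices separately. Using $(\mathbb{M}3)$, the indices split into a "$B$-type" part and a "$D$-type" part; since $A_i$ is complete to $D\setminus D_i$ and (from the $C_5$ plus $K_4-e$-freeness together with $2P_1+P_3$-freeness applied carefully) the adjacencies between distinct $D_i$'s and between $D_i$ and $A_j$ are forced, I expect $C\cup A\cup D$ (or a slight variant) to essentially be the complement of a triangle-free graph, hence perfect or $\omega$-colorable via \cref{lem-good}. The remaining part is governed by the cliques $X_1,\dots,X_5$ with the gradedness relations $(\mathbb{O}2)$ and the non-adjacency relation $(\mathbb{O}3)$; one then tries to partition $V(G)$ into three cliques $Q_1,Q_2,Q_3$ with all pairs graded (making $G$ a good graph, a case we have excluded, so this forces a contradiction or forces the structure into a small bounded piece), or else one shows directly that the $C_5$ together with the $X_i$'s spans a graph of bounded chromatic number — here the $5$-cycle-like structure of the $X_i$'s, plus the fact that non-consecutive $X_i$'s are graded while consecutive ones are anticomplete by $(\mathbb{O}3)$, should force $\chi\le 5$ on $C\cup B\cup Z$ alone. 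Combining a proper $5$- or $6$-coloring of the $X$-part with a proper coloring of the $A\cup D$ part that reuses colors appropriately (exploiting the completeness of $[A_i, A\setminus A_i]$ to keep the $A$-part inside a single clique structure) should yield the bound $6$, unless $\omega(G)\ge 7$ in which case the good-graph route already gives $\chi=\omega$.

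The main obstacle I anticipate is the interaction between the $B\cup Z$ side and the $A\cup D$ side: $(\mathbb{M}3)$ controls $X_i$ versus $D_{i-1}\cup D_{i+2}$ but not $X_i$ versus $D_i$, and the vertices of $D_i$ adjacent to $X_i$ can create $K_4-e$'s or $2P_1+P_3$'s only in combination with $C$, so one must track these mixed adjacencies precisely. I expect the proof to branch on whether some $X_i$ is non-empty: if every $X_i=\emptyset$ then $V(G)=C\cup A\cup D$ and the graph is (after deleting at most one special vertex) perfect or handled by \cref{lem-good}; if some $X_i\ne\emptyset$ then by $(\mathbb{M}3)$ two of the "opposite" $D$-classes vanish, the remaining $D$-classes merge with neighboring $A$-classes into cliques, and one assembles the three cliques needed for goodness, again reducing to \cref{lem-good} — with the genuinely stubborn configurations (where neither reduction applies cleanly) turning out to be small enough that the vertices partition into at most six cliques or six stable sets by ad hoc counting, exactly as in the proof of \cref{l1}. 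Making these merges consistent across all five indices simultaneously, without double-counting the graded pairs, is the delicate bookkeeping that the formal proof must carry out.
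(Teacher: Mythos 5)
Your text is a plan rather than a proof, and the concrete intermediate claims it rests on are either false or precisely the content that needs proving. First, the assertion that $C\cup A\cup D$ is ``essentially the complement of a triangle-free graph, hence perfect or $\omega$-colorable via \cref{lem-good}'' cannot be right: in the $2P_1+P_3$ setting the sets $D_i$ are stable sets of unbounded size (the bound $|D_i|\le 1$ holds only for $3P_1+P_2$), so this part can have independence number far larger than $2$, it contains the $C_5$ (so it is not perfect), and \cref{lem-good} applies only once an actual partition into three pairwise graded cliques with $\omega\ge 4$ has been exhibited, which you never do. Second, the claim that the pentagon structure of the $X_i$'s ``should force $\chi\le 5$ on $C\cup B\cup Z$ alone'' is false: each $X_i$ is a clique of unbounded size --- for instance $C_5$ together with a large clique complete to $\{v_1,v_2\}$ and anticomplete to $\{v_3,v_4,v_5\}$ is $(2P_1+P_3,K_4-e)$-free, and there $\chi(C\cup B)=\omega$ is arbitrarily large. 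At best one could hope for $\max\{5,\omega\}$, and only after showing that large $X_i$'s force either a low-degree vertex or a good-graph partition of \emph{all} of $V(G)$ (not just of the $X$-part); that implication is exactly the hard case analysis, which you explicitly defer (``delicate bookkeeping''). Third, the merging step ``combine a coloring of the $X$-part with a coloring of the $A\cup D$ part reusing colors'' is unjustified, since the two parts are not anticomplete (e.g.\ $A_i$ is complete to $B_{i+2}$ by $(\mathbb{M}2)$), and chromatic numbers of interacting parts add rather than take a maximum. Finally, the aside ``unless $\omega(G)\ge 7$ in which case the good-graph route already gives $\chi=\omega$'' conflates $\omega(G)\ge 7$ with $G$ being good.

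For contrast, the paper's proof has a definite endgame that your sketch never reaches: with $T=\emptyset$ (via \cref{l1}), it shows that if $A\neq\emptyset$ then either some $v_i$ has degree at most $5$ or $V(G)$ collapses to $C\cup A_1\cup D_2\cup D_5\cup X_2\cup X_4\cup Z_1$, which is partitioned into three pairwise graded cliques with $\omega\ge 4$; then, with $A=\emptyset$, a case analysis on which $B_i$'s are nonempty again yields a low-degree vertex or a good graph; and once $A=B=\emptyset$, the six stable sets $\{v_j\}\cup D_j$ ($j\in[5]$) and $Z$ give the explicit $6$-coloring. In particular the $D_i$'s end up in the stable-set part of the coloring, not inside cliques as your plan envisages. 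Without the elimination of $A$ and $B$ (or some substitute argument) and without any concrete $6$-coloring, the proposal does not establish the lemma.
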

\begin{proof}
Let $G$ be a ($2P_1+P_3, K_4-e$)-free graph. Suppose that $G$ contains a $C_5$, say with
vertex-set $C := \{v_1, v_2, v_3, v_4, v_5\}$ and edge-set \{$v_1 v_2$, $v_2 v_3$, $v_3 v_4$, $v_4 v_5$, $v_5 v_1$\}. Then, with respect to $C$, we define the sets $A$, $B$, $D$, $Z$, $X$ and $T$ as in  \cref{genprop} and we  use the properties given in \cref{genprop} and \cref{genprop-3}.  By \cref{l1}, we may assume that $G$ is $(C_5+K_1)$-free; so $T=\es$.  In addition:

\begin{claim}\label{2.1}
        We may assume that $A$ is an empty set.
    \end{claim}
    \noindent{\it Proof of \cref{2.1}}.~Suppose that $A\neq \es$. By symmetry, we may assume that $A_1 \neq \emptyset$, and let $a_1\in A_1$. Then the following hold:

\begin{subclaim}\label{A-2k1p3-X15}  We may assume that $X_1\cup X_5$ is an empty set.
\end{subclaim}
\no{\em Proof of \cref{A-2k1p3-X15}}.~If there is a vertex in $X_1\cup X_5$, say $x$, then $\{a_1,v_5,x,v_2,v_3\}$ or $\{a_1,v_2,x,v_5,v_4\}$ induces a $2P_1+P_3$ (by \ref{bto}); so we may assume that $X_1\cup X_5=\es$. $\diamond$

\begin{subclaim}\label{A-2k1p3-D25-card}
$[D_2, D_5]$ is complete, $|D_2|\leq 1$ and $|D_5|\leq 1$.
\end{subclaim}
\no{\em Proof of \cref{A-2k1p3-D25-card}}.~If there are nonadjacent vertices, say $d_2\in D_2$ and $d_5\in D_5$, then  $\{d_2,v_2, d_5,v_4,v_5\}$   induces a $2P_1+P_3$; so $[D_2, D_5]$ is complete. Also  if there are two vertices  in $D_2$, say $d_2$ and $d_2'$, then from \ref{ato-2} and \ref{DZ}, $\{v_2,v_4,d_2,a_1,d_2'\}$ induces a $2P_1+P_3$; so we have $|D_2|\leq 1$. Likewise, we have $|D_5|\leq 1$. $\diamond$

\begin{subclaim}\label{A-2k1p3-D25-A1}
We may assume that $(A\sm A_1)\cup B_3\cup D_1\cup D_3\cup D_4$ is an empty set.
\end{subclaim}
\no{\em Proof of \cref{A-2k1p3-D25-A1}}.~First if  $|A_1|=1$, then from \ref{A-2k1p3-X15}, \ref{A-2k1p3-D25-card}, \ref{DZ} and \ref{bd}, we note that  $deg(v_1) = |N(v_1)| = |\{v_2,v_5\}|+|A_1|+|D_2 \cup D_5 \cup Z_3| \leq 2 + 1 + 2 = 5$,   and we are done; so we may assume that $|A_1|\geq 2$, and  we let $a_1'\in A_1\sm \{a_1\}$.  Recall that $a_1a_1'\in E(G)$, by \ref{cliq}. Now if there is a vertex in $(A\sm A_1)\cup B_3\cup D_3\cup D_4$, say $u$, then  $\{u,a_1,a_1',v_1\}$ induces a $K_4-e$ (by  \ref{ato-2}); so we have $(A\sm A_1)\cup B_3\cup D_3\cup D_4=\es$. Also if there is a vertex in $D_1$,  say $d_1$, then  $\{d_1,v_2,v_3,v_4,v_5,a_1\}$ or $\{d_1,v_2,v_3,v_4,v_5,a_1'\}$ induces a $C_5+K_1$ or $\{d_1,a_1,a_1',v_1\}$ induces a $K_4-e$; so we have   $D_1=\es$. $\diamond$

\begin{subclaim}\label{A-2k1p3-ome}
We may assume that $\omega(G)\geq 4$.
\end{subclaim}
\no{\em Proof of \cref{A-2k1p3-ome}}.~If $|X_2| \leq 1$, then from \ref{A-2k1p3-X15}, \ref{A-2k1p3-D25-A1} and \ref{DZ}, we have $deg(v_2) = |N(v_2)| =|\{v_1, v_3\} \cup X_2 \cup Z_4| \leq 2+1+1=4$ and we are done; so we may assume that $|X_2|\geq 2$. Then since $\{v_2,v_3\}\cup X_2$ is a clique (by \ref{X-matching}), we have  $\omega(G)\geq 4$. $\diamond$

\begin{subclaim}\label{A-2k1p3-clq}
 The set $\{v_1\} \cup A_1 \cup D_2\cup D_5\cup Z_{3}$ is a clique.
\end{subclaim}
\no{\em Proof of \cref{A-2k1p3-clq}}.~Since one of $Z_{3}$ and $D_2\cup D_5$ is empty (by \ref{bd}), clearly from  \ref{cliq}, \ref{ato-2}, \ref{DZ} and from \ref{A-2k1p3-D25-card}, we see that        $\{v_1\} \cup A_1 \cup D_2\cup D_5\cup Z_{3}$ is a clique. $\diamond$

\medskip
To proceed further, we let $Q_1 := \{v_1\} \cup A_1 \cup D_2\cup D_5\cup Z_{3}$, $Q_2 := \{v_2, v_3\} \cup X_2$, and $Q_3 := \{v_4, v_5\} \cup X_4$, and we claim the following.

\begin{subclaim}\label{A1graded}
The sets $Q_1$, $Q_2$ and $Q_3$ are pairwise graded.
\end{subclaim}
\no{\it Proof of  \cref{A1graded}}.~By \ref{X-matching} and by symmetry, it is enough to prove that  $\{Q_1, Q_2\}$ is graded. Now by  \cref{A-2k1p3-clq} and \ref{X-matching}, we know that $Q_1$ and $Q_2$ are cliques, and  so it is enough to prove that $[Q_1,  Q_2]$ is special.

        First we show that each vertex in $Q_1$ is adjacent to at most one vertex in $Q_2$. Suppose to the contrary that there is a vertex  in $Q_1$, say $q_1$, such that $|N(q_1)\cap Q_2|\geq 2$. Let $q_2,q_2'\in N(q_1)\cap Q_2$. Then clearly $q_1\notin \{v_1\}\cup D_2\cup Z_3$, by \ref{bto}; so $q_1\in A_1\cup D_5$. Then since $A_1\cup D_5$ is anticomplete to $\{v_2,v_3\}$, clearly $\{q_2,q_2'\}\subseteq X_2$. But then $\{q_1,q_2,q_2',v_3\}$ induces a $K_4-e$ which is a contradiction. So each vertex in $Q_1$ is adjacent to at most one vertex in $Q_2$.

    Next we show that each vertex in $Q_2$ is adjacent to at most one vertex in $Q_1$. Suppose to the contrary that there is a vertex in $Q_2$,   say $q_2$, such that $|N(q_2)\cap Q_1|\geq 2$. Let $q_1,q_1'\in N(q_2)\cap Q_1$. Clearly $q_2\neq v_2$, and  since $|N(v_3)\cap Q_1| \leq |D_2\cup Z_3|\leq 1$ (by \ref{A-2k1p3-D25-card}, \ref{DZ} and \ref{bd}), we have $q_2\neq v_3$. So $q_2\in X_2$. Then since $Q_1\sm \{v_1\}$ is complete to $\{v_1\}$, we see that $\{q_2,q_1,q_1',v_1\}$ induces a induces a $K_4-e$  which is a contradiction. So each vertex in $Q_2$ is adjacent to at most one vertex in $Q_1$. This proves \cref{A1graded}. $\diamond$

    \medskip
        Thus from \ref{A-2k1p3-X15} and \ref{A-2k1p3-D25-A1}, we observe that $V(G) = C \cup A_1 \cup D_2 \cup D_5 \cup X_2 \cup X_4 \cup Z_{3} =Q_1\cup Q_2\cup Q_3$. Then from  \ref{X-matching}, \ref{A-2k1p3-ome} and \ref{A1graded}, we see that $Q_1$, $Q_2$ and $Q_3$  are three cliques such that $Q_i$'s are pairwise graded  with $\omega(G) \geq 4$.   Hence $G$ is a good graph with $\omega(G)\geq 4$, and we  are done. So we may assume that $A=\es$. $\sq$

Next:

     \begin{claim}\label{2.2}
        We may assume that $B$ is an empty set.
    \end{claim}
\noindent{\it Proof of \cref{2.2}}.~Suppose that $B\neq \es$.  If $B_i\neq \es$, then we let $b_i$ denote a vertex in $B_i$. First if there is an $i\in [5]$ such that $B_{i-1},B_i$ and $ B_{i+1}$ are nonempty, then by symmetry, we may assume that $i=1$, and then from \ref{bto},   $\{b_1, b_2, b_5, v_5, v_4\}$   induces a   $2P_1+P_3$ (when $b_2 b_5 \notin E(G)$) or $\{b_2, v_3,v_4,v_5,b_5, b_1\}$ induces a $C_5 + K_1$ (when $b_2 b_5 \in E(G)$) which is a contradiction; so  for each  $i\in [5]$, at least one of  $B_{i-1},B_i$ and $ B_{i+1}$  is empty.

    Next suppose that there is an index $i\in [5]$ such that  $B_i$ and $ B_{i+1}$ are nonempty. By symmetry, we may assume that $i=1$. Then $B_3\cup B_5=\es$, and $D_1\cup D_3\cup D_4\cup D_5 =\es$ (by \ref{bd}). Also if there is a vertex in $D_2$,  say $d_2$,  then $\{d_2, v_4, b_1, v_2, b_2\}$  induces a  $2P_1+P_3$ (by \ref{bto}); so    we have $D_2=\es$. Moreover, if there is a vertex in $Z_3\cup Z_5$,  say $z$, then
    $\{b_1,b_2, z,v_4,v_5\}$  induces a  $2P_1+P_3$ (by \ref{bto}); so we have $Z_3\cup Z_5=\es$.
    Hence from \cref{2.1}, $V(G) = C \cup X_1 \cup X_2 \cup X_4$.     Recall that $[X_1, X_2] = \emptyset$ (by \ref{bto}), and $\{X_1, X_4\}$ and $\{X_2, X_4\}$ are graded (by \ref{X-matching}). Now if $|X_1|\leq 2$, then $deg(v_1)= |N(v_1)| = |\{v_2,v_5\} \cup X_1| \leq 4$, and we are done; so we may assume that $|X_1|\geq 3$. Similarly, we may assume that $|X_2|\geq 3$.  Moreover, if $|X_4|\leq 2$, then $deg(v_4)= |N(v_4)| = |\{v_3,v_5\} \cup X_4\cup Z_1| \leq 5$ (by \ref{DZ}), and we are done; so we may assume that $|X_4|\geq 3$.    Then it is easy to see that $G \in {\cal C}$ by taking $u_j:=v_{j+1}$, for $j\in [5]$ and $j$ mod $5$,  $Q_1:=B_2$, $Q_1':=Z_2$, $Q_2:=B_4$, $Q_2':=Z_4$, $Q_3:=B_1$ and $Q_3':=Z_1$, and we are done.

\smallskip
 From now on, we may assume that for each  $i\in [5]$, at least one of  $B_i$ and $ B_{i+1}$  is empty. Also since  $B\neq \es$, we may assume the $B_1 \neq \emptyset$.
      If $B\sm B_1=\es$, then from  \cref{2.1}, \ref{bd} and \ref{DZ},  we have $deg(v_4) = |N(v_4)| = |\{v_3,v_5\} \cup Z_1 \cup Z_3 \cup Z_4| \leq 5$, and we are done; so we may assume that $B_3 \neq \emptyset$.  Then  since $B_1\neq \es$ and $B_3\neq \es$, we have $B_2 \cup B_4 \cup B_5 = \emptyset$  and $D_2 \cup D_3 \cup D_5 = \emptyset$ (by \ref{bd}).
   Also if  there are two vertices in $D_1$,  say $d_1$ and $d_1'$, then $\{b_1, v_3, d_1, v_5, d_1'\}$  induces a $2P_1+P_3$ (by  \ref{DZ} and \ref{bto}); so we have $|D_1|\leq 1$.  Similarly, we have $|D_4| \leq 1$. Hence $|D_1\cup Z_2\cup Z_4|\leq 2$ and $|D_4\cup Z_5|\leq 1$ (by  \ref{DZ} and \ref{bd}). Then from \cref{2.1}, we see that $deg(v_5) = |N(v_5)| = |\{v_1,v_4\}|+|D_1 \cup  Z_2 \cup Z_4|+|D_4 \cup Z_5|  \leq 2+2+1 =5$, and we are done.  This proves \ref{2.2}. $\sq$

  \medskip
    Now  from \cref{2.1} and \cref{2.2}, we conclude that $V(G) = C \cup  D \cup Z$. Now we let $S_j:= \{v_j\}\cup D_j$, for $j\in [5]$, and let $S_6:= Z$.    Then from  \ref{DZ} and \ref{bto}, we conclude that $S_1, S_2, \ldots, S_6$ are stable sets, and hence $G$ is $6$-colorable. This proves \cref{lem-2k1p3-c5}.
\end{proof}

\begin{lemma}\label{lem-2k1p3-c7}
If $G$ is a ($2P_1+P_3, K_4-e, C_5$)-free graph that contains a $C_7$, then $G$ is isomorphic to $C_7$.
\end{lemma}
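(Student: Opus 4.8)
The plan is to take a $C_7$ in $G$, say with vertex set $W := \{w_1, w_2, \ldots, w_7\}$ and edges $w_i w_{i+1}$ (indices mod $7$), and show that $V(G) = W$. The strategy is first to understand how a vertex outside $W$ can attach to $W$, using only that $G$ is $(K_4-e)$-free and $(C_5)$-free, and then to rule out nonempty attachment patterns using $(2P_1+P_3)$-freeness; finally to show there are no vertices anticomplete to $W$ either.

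\medskip

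\textbf{Step 1: Restrict neighbourhoods on $W$.} For $v \in V(G) \setminus W$, let $N := N(v) \cap W$. Since $G$ is $(K_4-e)$-free, $N$ contains no two vertices at distance $1$ or $2$ on the cycle together with a common... more precisely, $N$ cannot contain three vertices $w_{i-1}, w_i, w_{i+1}$ (that would give a $K_4-e$ on $\{v, w_{i-1}, w_i, w_{i+1}\}$), nor can it contain $w_{i-1}$ and $w_{i+1}$ unless... actually $\{v, w_{i-1}, w_i, w_{i+1}\}$ with $v$ adjacent to $w_{i-1}, w_{i+1}$ but not $w_i$ is exactly $K_4-e$'s complement issue --- here $w_i$ is adjacent to both, $v$ is adjacent to both, $w_{i-1}w_{i+1} \notin E$, $vw_i$ may or may not be present; if $vw_i \notin E$ then $\{v, w_{i-1}, w_i, w_{i+1}\}$ induces $K_4 - e$ minus another edge, i.e. $C_4$, which is fine, but if $v$ is adjacent to $w_{i-1}, w_i$ then $\{v, w_{i-1}, w_i, w_{i+1}\}$ or $\{v, w_{i-1}, w_i, w_{i-2}\}$ gives $K_4-e$. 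So $N$ contains no two consecutive vertices of $W$. Thus $N$ is a stable set of the cycle $C_7$, hence $|N| \le 3$, and moreover if $|N| = 3$ then (up to rotation) $N = \{w_1, w_3, w_5\}$. I would next kill the case $|N| = 3$: if $v$ is adjacent to $w_1, w_3, w_5$ only, then $\{v, w_5, w_6, w_7, w_1\}$ induces a $C_5$ (check: $v w_5, w_5 w_6, w_6 w_7, w_7 w_1, w_1 v$ are edges, and $v w_6, v w_7, w_5 w_7, w_5 w_1, w_6 w_1 \notin E$), contradiction. Similarly $|N| = 2$ with the two vertices at cyclic distance $2$, say $N = \{w_1, w_3\}$: then $\{v, w_3, w_4, w_5, w_1\}$... $w_5 w_1 \notin E$, so check $\{v, w_1, w_7, w_6, w_5, w_4, w_3\}$-type $C_5$'s; e.g. $v, w_1, w_7, \ldots$ --- I expect one gets a $C_5$ such as $\{v,w_1,w_7,w_6,\ldots\}$ or must argue directly. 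And $N = \{w_1, w_4\}$ (cyclic distance $3$): $\{v, w_1, w_7, w_6, w_5, w_4\}$ gives a $6$-cycle $C_6$; combined with $(C_5)$-freeness and $(K_4-e)$-freeness one should squeeze out a $C_5$ or handle it. So the surviving cases are $|N| \le 1$.

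\medskip

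\textbf{Step 2: Use $(2P_1+P_3)$-freeness to kill $|N| \in \{0,1\}$.} Suppose $v \notin W$. If $N(v) \cap W = \{w_1\}$, consider the set $\{v, w_1, w_4, w_5, w_6\}$ wait --- I want a copy of $2P_1 + P_3$: I need two isolated vertices and a disjoint $P_3$, all with no edges between the parts. Take $v$ and $w_3$ as the two isolated vertices (non-adjacent to each other since $w_3 \notin N(v)$), and the $P_3$ on $w_5 w_6 w_7$ wait $w_5 w_6, w_6 w_7$ are edges, $w_5 w_7 \notin E$, good; need $\{v, w_3\}$ anticomplete to $\{w_5, w_6, w_7\}$: $w_3 w_5, w_3 w_6, w_3 w_7 \notin E$ (cycle), and $v w_5, v w_6, v w_7 \notin E$ since $N(v) \cap W = \{w_1\}$. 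So $\{v, w_3, w_5, w_6, w_7\}$ induces $2P_1 + P_3$, contradiction. If $N(v) \cap W = \emptyset$, the same argument works (even more easily, using $v, w_3$ isolated and $w_5w_6w_7$). The only subtle case is when $v$'s neighbour on $W$ forces a careful index choice, but by rotational symmetry one placement of the $P_3$ far from the neighbour always works in a $7$-cycle. Hence no vertex outside $W$ exists, so $V(G) = W$ and $G \cong C_7$.

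\medskip

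\textbf{Expected main obstacle.} The routine-but-fiddly part is Step 1, case $|N|=2$ with the two neighbours at cyclic distance $2$ or $3$: one must verify that every such attachment creates an induced $C_5$ (or a $K_4-e$), and the bookkeeping of which five vertices form the $C_5$ requires checking all non-edges. I would organize this by noting that a distance-$2$ chord $w_1 v w_3$ together with the path $w_3 w_4 w_5 \ldots w_1$ gives cycles of length $4$ through $8$ through $v$; length $4$ ($\{v,w_1,w_3\}$ plus one more) would be $K_4 - e$ or $C_4$, and the length-$5$ cycle $\{v, w_1, w_7, w_6, w_5, \ldots\}$ --- actually the cleanest is: $v, w_3, w_4, w_5, w_6, w_7, w_1, v$ is a $C_7$ again, but $v w_1$ and $v w_3$ are the only chords, and $\{v, w_3, w_4, w_5, w_6\}$ with $v w_6 \notin E, v w_4, vw_5 \notin E, w_3 w_5, w_3 w_6, w_4 w_6 \notin E$ is just a $P_5$ --- so I may instead need to invoke that $\{v, w_3, w_5, w_6, w_7\}$ type arguments (a $2P_1+P_3$ using $v$ non-adjacent to the far side) to kill these cases directly, merging Step 1's hard subcases into the Step 2 technique. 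In short: the attachment analysis is entirely elementary, and I expect the write-up to proceed by fixing $v \notin W$, letting $N = N(v) \cap W$, and in each of the finitely many cyclic types of $N$ exhibiting either an induced $C_5$, an induced $K_4-e$, or an induced $2P_1+P_3$ --- with the only real care being index arithmetic mod $7$.
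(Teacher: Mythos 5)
Your overall plan (classify $N(v)\cap W$ for a vertex $v$ outside the $C_7$ and exhibit a forbidden induced subgraph in each case) is the same kind of attachment analysis the paper carries out, but two of your steps fail as written. First, your Step~1 claim that $(K_4-e)$-freeness alone forces $N(v)\cap W$ to contain no two consecutive cycle vertices is false: if $v$ is adjacent to $w_{i-1},w_i$ but not to $w_{i+1}$ or $w_{i-2}$, the sets $\{v,w_{i-1},w_i,w_{i+1}\}$ and $\{v,w_{i-1},w_i,w_{i-2}\}$ you invoke induce a paw (four vertices, four edges), not $K_4-e$; indeed $C_7$ plus a vertex adjacent to exactly two consecutive cycle vertices is both $(K_4-e)$-free and $C_5$-free. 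Eliminating two consecutive neighbours genuinely requires the $2P_1+P_3$ hypothesis as well: the paper's proof does exactly this, deducing from $uv_1,uv_2\in E$ that $uv_3,uv_7\notin E$ ($K_4-e$), then $uv_5\in E$ (via the $2P_1+P_3$ on $\{v_3,v_5,v_7,v_1,u\}$), then $uv_4,uv_6\in E$ ($C_5$-freeness), and finally a $K_4-e$ on $\{u,v_4,v_5,v_6\}$. Without some such chain, your reduction to ``$N$ is a stable set of the cycle'' is unjustified, and all the attachment types containing a consecutive pair (e.g.\ $\{w_1,w_2\}$, $\{w_1,w_2,w_4\}$) are simply skipped.

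Second, in the $|N|=2$ case with the two neighbours at cyclic distance $2$, say $N=\{w_1,w_3\}$, the induced $C_5$ you hope for does not exist (that configuration is $(C_5,K_4-e)$-free), and the fallback witness you gesture at, $\{v,w_3,w_5,w_6,w_7\}$, is not a $2P_1+P_3$ because $vw_3$ is an edge (it induces $P_2+P_3$). The case is still killable by your Step~2 technique, but with a different set, e.g.\ $\{v,w_2,w_5,w_6,w_7\}$ (isolated $v,w_2$ and path $w_5w_6w_7$). The distance-$3$ case $N=\{w_1,w_4\}$, which you also leave hanging, is in fact immediate: $\{v,w_1,w_2,w_3,w_4\}$ induces a $C_5$. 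Your $|N|=3$ analysis and your Step~2 ($|N|\le 1$, giving a $2P_1+P_3$ with $v$ and a far cycle vertex as the isolated pair) are correct and close in spirit to the paper's handling of the non-consecutive situation, so the gaps are fixable with the same toolkit, but as it stands the proof is incomplete at its first reduction and in the distance-$2$ subcase.
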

\begin{proof}
Let $G$ be a ($2P_1+P_3, K_4-e, C_5$)-free graph. Suppose that $G$ contains a $C_7$, say with vertex-set $C := \{v_1, v_2, v_3, v_4, v_5, v_6, v_7\}$ and edge-set \{$v_1 v_2$, $v_2 v_3$, $v_3 v_4$, $v_4 v_5$, $v_5 v_6$, $v_6 v_7$, $v_7 v_1$\}.    Suppose to the contrary that $V(G)\sm C\neq \es$, and let $u\in V(G)\sm C$. Since $C\cup \{u\}$ does not induce a $2P_1+P_3$, clearly $u$ has a neighbor in $C$. Then we claim the following.
\begin{equation}\label{lem-2k1p3-c7-claim}
  \longbox{\em We may assume that for each $i\in [7]$ and $i$ mod $7$,   $u$ is nonadjacent to at least one of $v_i$ and $v_{i+1}$.} \tag*{($\star$)}
   \end{equation}
 \no{\it Proof of} \ref{lem-2k1p3-c7-claim}.~Suppose to the contrary that there is an index $i\in [7]$ such that $uv_i, uv_{i+1}\in E(G)$. By symmetry, we may assume that $i=1$.  Now if $uv_3\in E(G)$, then $\{v_1,v_2,v_3,u\}$ induces a $K_4-e$; so we have $uv_3\notin E(G)$. Similarly, we have $uv_7\notin E(G)$. Then $uv_5\in E(G)$, for otherwise, $\{v_3,v_5,v_7,v_1,u\}$ induces a $2P_1+P_3$. Hence  $uv_4\in E(G)$, for otherwise, $\{v_2,v_3,v_4,v_5,u\}$ induces a $C_5$. Similarly, we have $uv_6\in E(G)$. But then $\{u,v_4,v_5,v_6\}$ induces a $K_4-e$ which is a contradiction. So  \ref{lem-2k1p3-c7-claim} holds. $\sq$

\medskip
 Now since $u$ has a neighbor in $C$, we may assume (without loss of generality) that $uv_1\in E(G)$. Then by \ref{lem-2k1p3-c7-claim}, we have $uv_2,uv_7\notin E(G)$. Thus  if $uv_4\in E(G)$, then by \ref{lem-2k1p3-c7-claim}, we have $uv_3\notin E(G)$ and then  $\{u,v_1,v_2,v_3,v_4\}$ induces a $C_5$; so we have $uv_4\notin E(G)$, and similarly,  we have $uv_5\notin E(G)$.   Then since $\{v_3,v_5, u,v_1,v_7\}$ and $\{v_4,v_6, u,v_1,v_2\}$   do not induce ($2P_1+P_3$)'s, we have $uv_3,uv_6\in E(G)$.  But then $\{u,v_3,v_4,v_5,v_6\}$ induces a $C_5$ which is a contradiction. This proves \cref{lem-2k1p3-c7}.
\end{proof}

\noindent{\bf Proof of \cref{main-thm-2}}. Let $G$ be a ($2P_1+P_3, K_4-e$)-free graph which is not perfect.   Then by \cref{spgt}, $G$ contains  an odd-hole or the complement graph of an odd-hole. Since the
complement graph of any odd-hole of length at least 7 contains a $K_4-e$, and since any odd-hole of length at least 9 contains a $2P_1+P_3$, we may assume that $G$ contains   a $C_5$ or a $C_7$. If $G$ is $C_5$-free and $G$ contains a $C_7$, then from \cref{lem-2k1p3-c7}, $G\cong C_7$, and hence $G$ is $3$-colorable and we are done. So suppose that $G$ contains a $C_5$. Then the theorem follows from  \cref{lem-2k1p3-c5}. This completes the proof of \cref{main-thm-2}. \hfill{$\Box$}

\subsection{Coloring of ($2P_1+P_3, K_4-e$)-free graphs} \label{col-2P1P3K4-e}

In this subsection, we prove \cref{col-thm-2} and its consequences.  


\medskip

\noindent{\bf Proof of \cref{col-thm-2}}.~Let $G$ be a  ($2P_1+P_3, K_4-e$)-free graph with $\omega(G)\geq 3$. First we prove \cref{col-thm-2}:$(i)$. Since $\omega(G)=3$, we may assume  that $G$ is $K_4$-free.  Let $v$ be any vertex in $G$.
 Then   clearly $G[N(v)]$ is  a ($K_3, P_3$)-free graph which is a  bipartite graph, and   $G[\overline{N}(v)]$ is  a ($P_1+P_3, K_4-e, K_4$)-free graph.
It follows from a result of Gei\ss er \cite{Geiber} that  $\chi(G[\overline{N}(v)])\leq 3$. So $\chi(G)\leq \chi(G[N(v)])+\chi(G[\overline{N}(v)\cup\{v\}])\leq 2+3 =5$, and we are done. This proves \cref{col-thm-2}:$(i)$.

Next we prove \cref{col-thm-2}:$(ii)$. Let $G$ be a ($2P_1+P_3, K_4-e$)-free graph with $\omega(G)\geq 4$, and let $\ell = \max\{6, \omega(G)\}$. We may assume that $G$ is not perfect. We prove   \cref{col-thm-2}:$(ii)$ by induction on $|V(G)|$.  From   \cref{main-thm-2},  \cref{lem-good} and from  \cref{classC-col}, we may assume that   $G$ has a vertex of degree at most $5$, say $v$.    By induction hypothesis, we have $\chi(G - \{v\}) \leq  \ell$. Since $v$ has at most 5 neighbors in $G - \{v\}$, we can take any $\ell$-coloring of $G - \{v\}$ and extend it to a $\ell$-coloring of $G$, using for $v$ a (possibly new) color that does not appear in its neighborhood.  This proves \cref{col-thm-2}:$(ii)$.

The bound  is clearly tight when $\omega(G)\geq 6$ (for instance, take $G\cong K_t, t\geq 6$).   \hfill{$\Box$}

\medskip
We remark that   the bound given in \cref{col-thm-2} does not seem to be tight when $\omega(G)\in \{3,4,5\}$. Note that there are ($2P_1+P_3, K_4-e$)-free graphs with chromatic number equal to $4$ and clique number equal to 3; see  $H_1$ and $H_2$ in Figure~\ref{fig-opt}.

\begin{cor}\label{2P1P3-NOC}
The class of ($2P_1+P_3, K_4-e$)-free graphs is near optimal colorable. That is, every ($2P_1+P_3, K_4-e$)-free graph $G$ satisfies $\chi(G)\leq \max\{6, \omega(G)\}$.
\end{cor}
\begin{proof}
The proof follows from the fact that every ($2P_1+P_3, K_3$)-free graph $G$ satisfies $\chi(G)\leq 3$ \cite{Rand-Thesis} and from \cref{col-thm-2}.
\end{proof}

 Brandst\"adt et al. \cite{BDHP} showed that \textsc{Coloring} for  the class of ($2P_1+P_3$, $K_4-e$)-free graphs is solvable in  polynomial time   for every fixed positive integer $k \leq 5$. This result together with \cref{2P1P3-NOC} and \cref{JH-thm}  imply  the following.

 \begin{cor}
  \textsc{Chromatic Number} for
	the class of ($2P_1+P_3$, $K_4-e$)-free graphs can be solved in  polynomial time.
 \end{cor}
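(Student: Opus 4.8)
The plan is to deduce the statement directly from \cref{JH-thm} applied with the constant $c=6$. First I would observe that the class of ($2P_1+P_3$, $K_4-e$)-free graphs is hereditary, being defined by a finite family of forbidden induced subgraphs, and that by \cref{2P1P3-NOC} every graph $G$ in the class satisfies $\chi(G)\leq \max\{6,\omega(G)\}$. Hence the class is near optimal colorable with the constant $c=6$ demanded by the hypothesis of \cref{JH-thm}.

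Next I would verify the algorithmic hypothesis of \cref{JH-thm}, namely that \textsc{Coloring} restricted to this class is solvable in polynomial time for every fixed positive integer $k\leq c-1=5$. This is precisely the result of Brandst\"adt et al.\ \cite{BDHP} cited just before the corollary. With both hypotheses of \cref{JH-thm} in place, the conclusion that \textsc{Chromatic Number} is polynomial-time solvable for the class follows immediately.

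There is no substantive obstacle here: each ingredient is either an elementary structural remark (heredity of a class defined by forbidden induced subgraphs) or an already-established result (\cref{2P1P3-NOC}, \cite{BDHP}, \cref{JH-thm}). The only thing worth double-checking is the bookkeeping of the constants — the near-optimal-colorability constant supplied by \cref{2P1P3-NOC} is $6$, while \textsc{Coloring} is known to be polynomial exactly for $k\leq 5=6-1$, so the two ranges match precisely and \cref{JH-thm} applies without modification.
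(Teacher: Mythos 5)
Your proposal is correct and follows exactly the paper's own argument: it combines \cref{2P1P3-NOC} (near optimal colorability with constant $c=6$), the result of Brandst\"adt et al.\ \cite{BDHP} that \textsc{Coloring} is polynomial-time solvable for every fixed $k\leq 5$, and \cref{JH-thm}. Nothing is missing, and the constant bookkeeping you check is precisely what the paper relies on.
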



\section{The class of ($3P_1+P_2$,\,$K_4-e$)-free graphs}\label{sec:3P1+P2}

In this section, we give a proof of \cref{col-thm}.  In \cref{3P1P2-prop}, we give some important structural properties of  ($3P_1+P_2$,\,$K_4-e$)-free graphs that contain a $C_5$, and use them   to prove a  structure theorem  for the class of ($3P_1+P_2$,\,$K_4-e$)-free graphs in   \cref{Sec:Struc-thm}. Finally using the structure theorem, we prove \cref{col-thm} in  \cref{Sec:col}. Throughout this section, unless otherwise stated,   the arithmetic operations on indices are understood to be modulo $5$.

\subsection{Properties of ($3P_1+P_2$,\,$K_4-e$)-free graphs that contain a $C_5$}\label{3P1P2-prop}
Let $G$ be a ($3P_1+P_2$,\,$K_4-e$)-free graph.
Suppose that $G$ contains a  $C_5$, say with
 vertex-set $C:= \{v_1, v_2, v_3, v_4, v_5\}$ and edge-set $\{v_1 v_2, v_2 v_3, v_3 v_4, v_4 v_5, $ $v_5 v_1\}$. Then, with respect to $C$, we define the sets $A$, $B$, $D$, $Z$, $X$ and $T$ as in Section~\ref{genprop}, and we  use the properties given in \cref{genprop}.   Moreover
for each $i\in [5]$, the following hold:
{\it \begin{enumerate}[label=  $(\mathbb{L} \arabic*)$, leftmargin=1.25cm] \itemsep=0pt
\item\label{df} $|D_{i}|\leq 1$.
\item \label{abf}       $A_{i}\cup T$ is a clique.
\item\label{ato}  $[A_{i}, A_{i-2} \cup A_{i+2} \cup  D_{i-1}\cup D_{i+1}]$ is complete.
\item\label{abz} If $A_i\cup D_{i+1}\neq \es$, then $|X_{i-1}|\leq 1$. Similarly, if $A_i\cup D_{i-1}\neq \es$, then $|X_i|\leq 1$.
\item\label{abb}  One of  $A_{i}\cup D_i$, $B_{i-1}$ and $B_{i}$ is   empty.	
\item\label{afd} One of $A_{i}$, $D_{i+1}\cup D_{i-1}$ and $Z_{i+2}$ is   empty.
\item\label{bfd}One of $B_i$, $D_{i+2} \cup D_{i-1}$ and $Z_{i}$ is empty.
\item \label{zzd} One of $Z_{i+2}$, $B_i \cup B_{i+1} \cup Z_{i}$ and $D_{i+1}$ is empty. Moreover, $|D_{i+1}\cup Z_{i}\cup Z_{i+2}|\leq 2$.
  \end{enumerate}
}
        \begin{proof}
        \ref{df}:~If there are two vertices in $D_i$, say $d$ and $d'$, then $dd'\notin E(G)$ (by  \ref{DZ}), and then $\{v_{i}, d, d', v_{i-2}, v_{i+2}\}$ induces a  $3P_1+P_2$. So \ref{df} holds. $\sq$

\smallskip
  \no{\ref{abf}}:~If there are nonadjacent vertices  in $A_{i}\cup T$, say $u$ and $u'$, then $\{u,u',v_{i-1},v_{i+1}, v_{i+2}\}$ induces a $3P_1+P_2$. So  \ref{abf} holds. $\sq$

\smallskip
  \no{\ref{ato}}:~By symmetry, it is enough to prove that $[A_{i}, A_{i+2} \cup D_{i+1}]$ is complete. Now if  there are nonadjacent vertices, say $a\in A_{i}$ and $u\in A_{i+2}\cup D_{i+1}$, then $\{a,u,v_{i+1},v_{i-1},v_{i-2}\}$ induces a $3P_1+P_2$. So \ref{ato} holds.  $\sq$

\smallskip
  \no{\ref{abz}}:~Let $u\in A_i\cup D_{i+1}$. Now if there  are two vertices in $X_{i-1}$, say $x$ and $x'$, then from \ref{X-matching}, we have $xx'\in E(G)$, and then we see that $\{u, v_{i+1},v_{i-2},x,x'\}$ induces a $3P_1+P_2$, by \ref{bto}. So \ref{abz} holds. $\sq$

\smallskip
  \no{\ref{abb}}:~Suppose to the contrary that there are vertices, say $u\in A_{i}\cup D_i$, $b\in B_{i-1}$ and  $b'\in B_{i}$. Then from \ref{bto}, clearly $\{u,b,b'\}$ is a stable set. But then   $\{u,b,b',v_{i-2}, v_{i+2}\}$ induces a $3P_1+P_2$ which is a contradiction. So \ref{abb} holds. $\sq$

\smallskip
  \no{\ref{afd}}:~Suppose to the contrary that there are vertices, say $a \in A_i$, $d\in D_{i+1}\cup D_{i-1}$ and $z\in Z_{i+2}$. Then $ad\in E(G)$, by \ref{ato} and $dz \not\in E(G)$, by \ref{bto}. Now if $az \in E(G)$, then     $\{d,a,v_i,z\}$ induces a $K_4-e$; so $az\notin E(G)$. But then $\{v_{i-1}, v_{i+1},  z,  a, d\}$  induces a $3P_1+P_2$ which is a contradiction. So \ref{afd} holds. $\sq$

\smallskip
  \no{\ref{bfd}}:~Suppose to the contrary that there are vertices, say $b \in B_i$, $d\in D_{i+2}\cup D_{i-1}$  and $z\in Z_{i}$. Then using \ref{X-matching} and \ref{bto}, we see that $\{ d, v_{i+2}, v_{i-1}, b, z\}$ induces $3P_1+P_2$ which is a contradiction. So \ref{bfd}  holds.  $\sq$

\smallskip
  \no{\ref{zzd}}:~Suppose to the contrary that there are vertices, say $z \in Z_{i+2}$, $u\in B_i \cup B_{i+1} \cup Z_{i}$  and $d\in D_{i+1}$. Then from \ref{bto}, we see that $\{ v_{i-1}, z, d, u, v_{i+1}\}$ induces $3P_1+P_2$ which is a contradiction.
So the first assertion holds. The second assertion follows from the first assertion, \ref{DZ} and \ref{df}.
         \end{proof}


\subsection{Structure of ($3P_1+P_2, K_4-e$)-free graphs} \label{Sec:Struc-thm}
Our main result in this subsection is the following.
 \begin{theorem}\label{main-thm}
If $G$ is a  ($3P_1+P_2, K_4-e$)-free graph, then     $G$ has a vertex of degree  at most 6 or $G$ is a good graph with $\omega(G)\geq 4$ or $G\in \cal C$.
\end{theorem}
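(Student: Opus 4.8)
The plan is to mimic the structure of the proof of \cref{main-thm-2}, using the Strong Perfect Graph Theorem as the entry point and then a case analysis on which odd-hole (or its complement) lives inside $G$. First I would dispose of the perfect case: if $G$ is perfect then $\chi(G)=\omega(G)$, and since $\omega(G)\geq 2$ forces nothing problematic, $G$ is certainly $6$-colorable or a good graph with large clique number when $\omega(G)\geq 4$; more precisely, for a perfect graph the statement is immediate unless we want the ``vertex of small degree'' alternative, so I would simply assume $G$ is not perfect. By \cref{spgt}, $G$ then contains an odd-hole or the complement of an odd-hole. The complement of any odd-hole of length $\geq 7$ contains a $K_4-e$, so it is excluded; an odd-hole $C_\ell$ with $\ell\geq 7$ contains $3P_1+P_2$ (three pairwise non-adjacent vertices plus a disjoint edge fit inside a hole of length $7$ or more), so that too is excluded. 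Hence $G$ must contain a $C_5$, and from this point on I would work entirely with the set-up of \cref{genprop} and \cref{3P1P2-prop}: fix a $C_5$ with vertex-set $C=\{v_1,\dots,v_5\}$ and partition $V(G)\setminus C$ into $A$, $B$, $D$, $Z$, $T$.

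The core of the argument is to show that, under the hypothesis that no vertex of $G$ has degree at most $6$, the vertex set $V(G)$ can be partitioned into three pairwise graded cliques $Q_1,Q_2,Q_3$, and that $\omega(G)\geq 4$. I would first argue $T=\es$: a vertex $t\in T$ together with $C$ already gives a $3P_1+P_2$ (any non-edge of $C_5$ plus a disjoint non-adjacent pair? — in $C_5$ the non-neighbours of $t$ are all of $C$, and $C$ contains $P_2\cup 2P_1$ easily after adding $t$ as a third isolated vertex), so in fact $C_5+K_1$ is already a $3P_1+P_2$-plus-structure; more carefully, $\{t,v_1,v_3\}$ is a stable set disjoint from the edge $v_4v_5$, giving $3P_1+P_2$, so $T=\es$ outright. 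Next I would show $A=\es$, $B=\es$, and that the $D_i$'s and $Z_i$'s are tightly controlled, in each case either deriving an induced $3P_1+P_2$ or $K_4-e$ (contradiction), or producing a vertex of degree $\leq 6$ (done), or exhibiting a partition into three graded cliques. The anticipated template, following the $2P_1+P_3$ case: if $A_1\neq\es$ then properties \ref{ato}--\ref{zzd} collapse most of the sets, leaving $V(G)=C\cup A_1\cup(\text{a few }D_i,Z_i)\cup X_2\cup X_4$, and either $v_1$ or $v_2$ has small degree or we get the good-graph partition $Q_1=\{v_1\}\cup A_1\cup(\text{relevant }D,Z)$, $Q_2=\{v_2,v_3\}\cup X_2$, $Q_3=\{v_4,v_5\}\cup X_4$; similarly when $B\neq\es$, first ruling out three consecutive non-empty $B_i$'s (they force a $C_5+K_1$ hence a $3P_1+P_2$), then handling two consecutive and one isolated $B_i$; finally when $A=B=T=\es$ we have $V(G)=C\cup D\cup Z$ and I expect this already forces a good-graph partition (or a low-degree vertex), because the $D_i$'s have size $\leq 1$ and the $Z_i$'s have size $\leq 1$, so $|V(G)\setminus C|\leq 10$ and degrees are automatically small unless the structure is rigid.

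The main obstacle, I expect, is the case $A=B=T=\es$, i.e.\ $V(G)\subseteq C\cup D\cup Z$. Unlike the $2P_1+P_3$ situation — where $D_i$ and $Z_i$ could be partitioned into a few stable sets yielding $6$-colorability directly — here $3P_1+P_2$ is a \emph{denser} forbidden graph (it has more isolated vertices but a smaller edge component), so the sets $D$ and $Z$ need not be stable, and one cannot simply read off a colouring; instead one must show $V(G)$ decomposes into three graded cliques. Since \ref{df} and \ref{DZ} give $|D_i|\leq 1$ and $|Z_i|\leq 1$, the graph $G$ minus $C$ has at most $10$ vertices; if every vertex of $G$ has degree $\geq 7$, then each $v_i$ has many neighbours outside $C$, which severely restricts which $D_j,Z_j$ are non-empty and forces $\omega(G)\geq 4$, and then properties \ref{bto}, \ref{X-matching}, \ref{afd}, \ref{bfd}, \ref{zzd} pin down exactly which sets survive, so that setting (say) $Q_1=\{v_1,v_2\}\cup(\text{some }D,Z)$, $Q_2=\{v_3\}\cup\dots$, $Q_3=\{v_4,v_5\}\cup\dots$ gives three pairwise graded cliques. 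Verifying that this partition is genuinely into cliques and that all three pairs are graded — using the ``one of $X_i$, $A_i\cup D_i$, $B_i$ is empty'' type dichotomies repeatedly — is where the bulk of the routine (but delicate) case-checking lies. Once the good-graph partition with $\omega(G)\geq 4$ is established, nothing more is needed, since that is precisely the second alternative in the statement. I would close by remarking that the constant $6$ in the degree bound is exactly what the small sets $\{v_{i-1},v_{i+1}\}\cup X_{i-1}\cup D_{i+2}\cup\ldots$ force in the boundary subcases, matching the $\max\{7,\omega(G)\}$ bound of \cref{col-thm} after adding one colour for a degree-$\leq 6$ vertex.
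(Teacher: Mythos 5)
Your reduction at the outset contains two false claims, and both are load-bearing. First, you assert that every odd hole of length at least $7$ contains a $3P_1+P_2$, so that only a $C_5$ can survive; this fails for $C_7$: for any edge $v_iv_{i+1}$ of $C_7$ the vertices nonadjacent to both endpoints induce a $P_3$, which has no stable set of size $3$, so $C_7$ is ($3P_1+P_2$)-free. The paper therefore needs a separate argument (\cref{C7:thm}) showing that a ($3P_1+P_2,K_4-e,C_5$)-free graph containing a $C_7$ has a vertex of degree at most $6$, and your proposal has no substitute for it. Second, you claim $T=\es$ because $\{t,v_1,v_3\}$ together with the edge $v_4v_5$ gives a $3P_1+P_2$; it does not, since $v_3v_4$ and $v_1v_5$ are edges of the $C_5$, and in fact $C_5+K_1$ is ($3P_1+P_2$)-free (every stable set of size $3$ meeting $T$ uses two nonconsecutive $C_5$-vertices, which always hit the neighbourhood of any remaining edge). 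So $T$ cannot be discarded for free: the paper devotes \cref{NC5:deg6} to proving that if $G$ contains $C_5+K_1$ then some vertex (ultimately $t$ itself, via bounds on $|N(t)\cap X_i|$ and $|N(t)\cap T|$) has degree at most $6$, and all the later lemmas (\cref{F1:thm,F2:thm,F3:thm,F4:thm,C5:thm}) are allowed to assume ($C_5+K_1$)-freeness only because of that lemma.

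Beyond these two errors, the remainder is a plausible but unsubstantiated template. The paper's $C_5$ analysis does not simply rerun the $2P_1+P_3$ argument: it pivots on four special configurations $F_1,\dots,F_4$ (handling the interactions of nonempty $A_i$'s and $B_i$'s), and only two of the many branches end in a good-graph partition; the rest end in an explicit degree count using the dichotomies \ref{abz}, \ref{abb}, \ref{afd}, \ref{bfd}, \ref{zzd}. In particular, in your final case $A=B=T=\es$ the conclusion is not that ``degrees are automatically small'': a crude count with $|D_i|\leq 1$, $|Z_i|\leq 1$ only gives $d(v_i)\leq 7$, and one needs \ref{zzd} (that $|D_{i+1}\cup Z_i\cup Z_{i-2}|\leq 2$) to reach the stated bound $6$, as the paper does at the end of \cref{C5:thm}. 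So while the overall skeleton (SPGT, then the $C$-partition of \cref{genprop,3P1P2-prop}, then either a low-degree vertex or three pairwise graded cliques) matches the paper, the proposal as written has genuine gaps at the $C_7$ and $C_5+K_1$ steps and does not carry out the case analysis that actually produces the degree bound.
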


The proof of \cref{main-thm} is based on a sequence of lemmas given below which depend on some special forbidden induced subgraphs that contain a $C_5$ or a $C_7$ (some of them are given in Figure~\ref{fig-F1234}), and it is given at the end of this subsection. We start with the following.

\begin{lemma}\label{NC5:deg6}
   Let $G$ be a  ($3P_1+P_2, K_4-e$)-free graph. If $G$ contains  a $C_5+K_1$, then $G$ has a vertex of degree  at most 6.
\end{lemma}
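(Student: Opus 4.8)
The plan is to follow the same template already used in \cref{l1}: fix a $C_5$ together with a disjoint vertex $t\in T$, partition $V(G)\setminus C$ into the sets $A_i,B_i,D_i,Z_i,T$, and then use the structural properties \ref{DZ}--\ref{bto} from \cref{genprop} together with \ref{df}--\ref{zzd} from \cref{3P1P2-prop} to severely restrict how large each of these sets can be once a vertex $t$ with no neighbour on $C$ is present. The goal is to show that the closed neighbourhood of $v_1$ (say) has size at most $7$, i.e. $d(v_1)\le 6$. The neighbourhood of $v_1$ inside $V(G)\setminus C$ is $A_1\cup B_5\cup B_1\cup D_5\cup D_2\cup Z_1\cup(\text{part of }Z)$; more precisely $N(v_1)\subseteq\{v_2,v_5\}\cup A_1\cup X_5\cup X_1\cup D_2\cup D_5\cup Z_1$ using the definitions $X_i=B_i\cup Z_{i-2}$, so I must bound $|A_1\cup X_5\cup X_1\cup D_2\cup D_5\cup Z_1|$ by $5$.

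First I would establish an analogue of \cref{tto-1}: $t$ (indeed all of $T$) is complete to $A\cup B\cup D\cup Z$. This uses \ref{abf} ($A_i\cup T$ is a clique, so $t$ is complete to each $A_i$) together with the $3P_1+P_2$-freeness — if $u\in T$ misses $v\in X_i\cup D_{i+2}$ then $\{u,v_{i-1},v_{i+1},v_{i+2},v\}$ or a similar quintuple contains a $3P_1+P_2$ (three isolated vertices from $C$ plus a missing edge, or $\{u,v_{i+2},v_{i-1}\}$ plus an edge of $C$), giving the contradiction. Once $t$ is a common neighbour of everything off $C$, the $(K_4-e)$-freeness kicks in hard: if $u,w$ are two vertices both adjacent to $t$ and both adjacent to a common $v_i$, then $\{t,u,w,v_i\}$ is a $K_4-e$. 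This forces $|A_i\cup Z_i|\le 1$, $|X_i\cup D_{i+2}|\le 1$ and $|X_i\cup D_{i-1}|\le 1$ for every $i$ — exactly the bounds proved in \cref{t-card}, and I expect the same two-line arguments to carry over verbatim (the only place $2P_1+P_3$-freeness was used there, for $|X_i\cup D_{i+2}|\le 1$, can be replaced by the $3P_1+P_2$ argument $\{v_{i-1},v_{i+2},t\}$ plus the edge $uv$ or $uw$, or simply by the $K_4-e$ argument).

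Assembling these, $N(v_1)\subseteq\{v_2,v_5\}\cup(A_1\cup Z_1)\cup(X_5\cup D_2)\cup(X_1\cup D_5)$; the first bracket has size $2$, and each of the three remaining brackets has size at most $1$ by the above, so $d(v_1)\le 2+1+1+1=5\le 6$, which is even stronger than claimed and certainly suffices. (If a cleaner bookkeeping is wanted one can keep $Z_1$ separate and still land at $d(v_1)\le 5$ or $6$.) I would then remark, as in \cref{l1}, that this proves the lemma. The main obstacle — though I expect it to be routine rather than genuinely hard — is checking that the handful of ``missing edge forces a $3P_1+P_2$'' claims really do produce an induced $3P_1+P_2$ and not an accidental $K_4-e$ or a graph with an extra edge; since $3P_1+P_2$ has a rigid shape (a triangle-free set of three isolated vertices plus a disjoint edge), each case needs the stable-set facts \ref{DZ}, \ref{bto}, \ref{abf} to guarantee the three chosen vertices are pairwise non-adjacent. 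Beyond that it is a direct counting argument, and I would not expect any case analysis branching the way the $C_5$ (no $K_1$) lemma does.
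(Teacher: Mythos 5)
There is a genuine gap, and it sits at the very first step. Your plan rests on an analogue of \cref{tto-1}, namely that $t$ (indeed all of $T$) is complete to $A\cup B\cup D\cup Z$, from which you import the cardinality bounds of \cref{t-card} (in particular $|X_i\cup D_{i+2}|\leq 1$ and $|X_i\cup D_{i-1}|\leq 1$). Under ($3P_1+P_2$)-freeness this completeness claim is simply false for $B$ and $Z$. A vertex $v\in B_i$ has exactly three non-neighbours $v_{i+2},v_{i-2},v_{i-1}$ on $C$, and these induce a $P_3$; so a non-edge $tv$ only yields a $2P_1+P_3$, which is not forbidden here, and one checks that no other five-vertex subset of $C\cup\{t,v\}$ gives a $3P_1+P_2$. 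Concretely, the graph consisting of $C$, a vertex $t$ anticomplete to $C$, and one vertex $b\in B_1$ (adjacent exactly to $v_1,v_2$) with $tb\notin E(G)$ is ($3P_1+P_2,K_4-e$)-free, contains $C_5+K_1$, and has $t$ non-adjacent to $b$. The same failure occurs for $Z$-vertices (only two non-neighbours on $C$). This is exactly why the paper's \cref{tto} proves completeness only of $[T, A_i\cup D_{i+1}]$ and obtains merely $|A_i|\leq 1$ and $|X_i|\leq 2$ — not $|X_i|\leq 1$, and not the mixed bounds $|X_i\cup D_{i+2}|\leq 1$. Your fallback ``or simply by the $K_4-e$ argument'' does not rescue this, because the $K_4-e$ on $\{t,u,v,v_i\}$ needs $t$ adjacent to both $u$ and $v$, i.e.\ precisely the completeness that fails.

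Once those bounds are gone, the direct count $d(v_1)\leq 2+1+1+1$ collapses: with $|X_1|=|X_5|=2$ permitted, the best one gets for a cycle vertex in the residual case is $2+2+2+1=7$, so bounding the degree of a fixed $v_j$ in one shot cannot work. This is why the paper's proof of \cref{NC5:deg6} is genuinely longer than \cref{l1}: it first reduces to $A=\emptyset$ (\cref{TA-emp}) and $D=\emptyset$ (\cref{TD-emp}) by finding low-degree cycle vertices in those cases, then may assume $|X_i|=2$ for every $i$ (\cref{TBi-Nemp}), and in that final configuration it switches target and bounds the degree of $t$ itself, via $|N(t)\cap X_i|\leq 1$ and $|N(t)\cap T|\leq 1$ (\cref{deg-t}), giving $d(t)\leq 6$. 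So the overall template (use $t\in T$ plus the $\mathbb{O}$/$\mathbb{L}$ properties to count a degree) is right, but the specific transfer of \cref{tto-1} and \cref{t-card} from the $2P_1+P_3$ setting is invalid, and without it your argument does not reach the stated bound.
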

\begin{proof}
 Let $G$ be a  ($3P_1+P_2, K_4-e$)-free graph. Suppose that $G$ contains  a $C_5+K_1$,   say a $C_5$ with
 vertex-set $C:= \{v_1, v_2, v_3, v_4, v_5\}$ and edge-set $\{v_1 v_2, v_2 v_3, v_3 v_4, v_4 v_5, $ $v_5 v_1\}$ plus a vertex $t$  which is anticomplete to  $C$. Then, with respect to $C$, we define the sets $A$, $B$, $D$, $Z$, $X$ and $T$ as in Section~\ref{genprop}, and we use the properties in Section~\ref{genprop} and in \cref{3P1P2-prop}.  Clearly $t\in T$. Moreover:

\begin{claim}\label{tto}
 $[T, D]$ is complete. Moreover, for each $i\in [5]$,  we have  $|A_i|\leq 1$ and $|X_i|\leq 2$.
\end{claim}
   \no{\it Proof of  \cref{tto}}.~To prove that $[T, D]$ is complete, it is enough to prove that $[T, D_{i}]$ is complete for each $i\in [5]$. Now if  there are nonadjacent  vertices, say $t'\in T$ and $d\in D_{i}$, then $\{t', d, v_{i}, v_{i-2},v_{i+2}\}$ induces a $3P_1+P_2$; so the first assertion holds.

   Next if there are two vertices  in $A_i$, say $a$ and $a'$, then   $\{t, a,a', v_i\}$ induces a $K_4-e$, by \ref{abf}; so we have $|A_i|\leq 1$.

Finally by symmetry, we prove the last assertion for $i=1$. Suppose to the contrary that there  are three vertices in $X_1$, say $x,x'$ and $x''$. Note that $\{x,x',x''\}$ is clique, by \ref{X-matching}. Now if $t$ is adjacent to at least two vertices in $\{x,x',x''\}$, then $G[\{t, x,x',x'', v_1\}]$  contains a $K_4-e$; so we may assume that $tx, tx'\notin E(G)$. Then $\{t,v_{3}, v_{5},x,x'\}$ induces a $3P_1+P_2$ which is a contradiction. So we have $|X_1|\leq 2$. So \ref{tto} holds. $\sq$

 Next:

\begin{claim}\label{TA-emp}
  We may assume that $A$ is an empty set.
  \end{claim}
  \no{\it Proof of  \cref{TA-emp}}.~Suppose that $A\neq \es$. By symmetry, we may assume that $A_1\neq \es$, and let $a_1\in A_1$.  Then $|X_1|\leq 1$ and $|X_5|\leq 1$, by \ref{abz}. Also if there is a vertex in $D_{2}\cup D_{5}$, say $d$, then $\{t,a_1,d,v_{1}\}$ induces a $K_4-e$ (by \ref{abf}, \ref{ato}  and  \ref{tto}); so we have $D_2\cup D_5=\es$. Hence $deg(v_1)=|N(v_1)| =
|\{v_2, v_5\}\cup A_1 \cup X_1\cup X_5 \cup  Z_3| = |\{v_2, v_5\}|+ |A_1|+ |X_1|+ |X_5|+ |Z_3|\leq 2+1+1+1+1 = 6$, by \ref{tto} and \ref{DZ}, and we are done; so we may assume that $A = \es$. $\sq$

 Next:

\begin{claim}\label{TD-emp}
  We may assume that $D$ is an empty set.
  \end{claim}
  \no{\it Proof of  \cref{TD-emp}}.~Suppose that $D\neq \es$. By symmetry, we may assume that $D_2\neq \es$, and let $d_2\in D_2$. Then the following hold:

  \begin{subclaim}\label{TD-B}
  One of $B_1$ and $B_5$ is empty.
  \end{subclaim}
  \no{\em Proof of \cref{TD-B}}.~If there are vertices, say $b_1\in B_1$ and $b_5\in B_5$,   then  $\{d_2, b_5, v_4, b_1,v_2\}$  induces a $3P_1+P_2$, by \ref{bto}; so one of $B_1$ and $B_5$ is empty. $\diamond$

  \begin{subclaim}\label{TD-Z13}
  One of $Z_3$ and $Z_5$ is empty.
  \end{subclaim}
  \no{\em Proof of \cref{TD-Z13}}.~If there are vertices, say  $z_3\in Z_3$ and $z_5\in Z_5$, then we observe that $\{d_2, v_2,z_3, z_5, v_5\}$ induces a $3P_1+P_2$, by  \ref{bto}; so one of $Z_3$ and $Z_5$ is empty. $\diamond$

  \begin{subclaim}\label{TD-Z4}
  One of $D_5$ and $Z_1$ is empty.
  \end{subclaim}
  \no{\em Proof of \cref{TD-Z4}}.~Suppose to the contrary that there are vertices, say $d_5\in D_5$ and $z_1\in Z_1$. Now if $d_2d_5\in E(G)$, then   $\{t,d_2,d_5,v_1\}$  induces a $K_4-e$ (by \ref{tto}); so we have $d_2d_5\notin E(G)$. Then   $\{v_5,d_5,z_1, v_3, d_2\}$  induces a $3P_1+P_2$ (by \ref{bto}) which is a contradiction; so one of $D_5$ and $Z_1$ is empty. $\diamond$

 \medskip
From \ref{TD-Z13} and \ref{DZ}, we have $|Z_3\cup Z_5|\leq 1$. From \ref{TD-Z4}, \ref{DZ} and  \ref{df}, we have $|D_5 \cup Z_1| \leq 1$. Moreover $|D_5\cup X_1| \leq 2$, by \ref{df}, \ref{abz} and \ref{tto}. Also $|D_2\cup X_5|\leq 2$, by  \ref{df}, \ref{abz} and \ref{tto}. Note that from  \cref{TA-emp}, $deg(v_1)=|N(v_1)| =|\{v_2, v_5\}\cup B_1 \cup B_5 \cup D_2 \cup D_5 \cup Z_1 \cup  Z_3 \cup Z_5|$. Now from \ref{TD-B}, we have two cases. If $B_5=\es$, then $deg(v_1)= |\{v_2, v_5\}|+  |D_2|+| Z_3\cup Z_5| + |D_5\cup X_1|\leq 2+1+1+2 = 6$, by \ref{df}, and we are done. So we may assume that $B_1=\es$. Then  $deg(v_1)= |\{v_2,v_5\}| + |D_2\cup X_5| + |D_5 \cup Z_1| + |Z_3|\leq 2 + 2 + 1 + 1 = 6$, by  \ref{DZ}, and again we are done. So we may assume that $D=\es$.  $\sq$

 Next:
\begin{claim}\label{TBi-Nemp}
  We may assume that $|X_i|=2$, for each $i\in [5]$.
  \end{claim}
\no{\it Proof of  \cref{TBi-Nemp}}.~Suppose that there is an index $i\in [5]$ such that $|X_{i}|\leq 1$. By symmetry, we may assume that $i=1$ and so $|X_{1}|\leq 1$. Then from \ref{DZ} and from the above claims, we see that $deg(v_1)=|\{v_2,v_5\}|+|X_1|+|X_5|+|Z_3|\leq 2+1+2+1 = 6$, and we are done. So from \cref{tto}, we may assume that $|X_i|=2$, for each $i \in [5]$. $\sq$

\medskip
  From \cref{TBi-Nemp},  for each $i\in [5]$, there are vertices in $X_i$, say $x_i$ and $x_i'$,  such that  $x_ix_i'\in E(G)$, by \ref{X-matching}. Then we claim the following.

 \begin{claim}\label{deg-t}
 For each $i\in [5]$, we have $|N(t)\cap X_i|\leq 1$. Moreover, $|N(t)\cap T|\leq 1$.
 \end{claim}
\no{\it Proof of  \cref{deg-t}}.~If  $|N(t)\cap X_i|=2$, then $\{t, x_i, x_i', v_i\}$ induces a $K_4-e$; so the first assertion holds. Next suppose to the contrary that there are vertices in $T$, say $t'$ and $t''$, such that $\{t,t',t''\}$ is a clique (by \ref{abf}). Then clearly $x_1$ is complete to  $\{t,t',t''\}$, for otherwise, $G[\{v_3,v_5,x_1,t,t',t''\}]$ contains a $3P_1+P_2$ or a $K_4-e$. Similarly, $x_2$ is complete to $\{t,t',t''\}$. But then from \ref{bto},  $\{x_1,t,t',x_2\}$ induces a $K_4-e$ which is a contradiction. So $|N(t)\cap T|\leq 1$. $\sq$

\medskip
From \ref{TA-emp}, \ref{TD-emp} and \cref{deg-t}, we conclude that $deg(t)\leq |N(t)\cap T|+ \sum_{i=1}^5|N(t)\cap X_i|\leq 6$. This proves \cref{NC5:deg6}.
\end{proof}


\begin{figure}[t]
\centering
 \includegraphics[height=3cm, width=13cm]{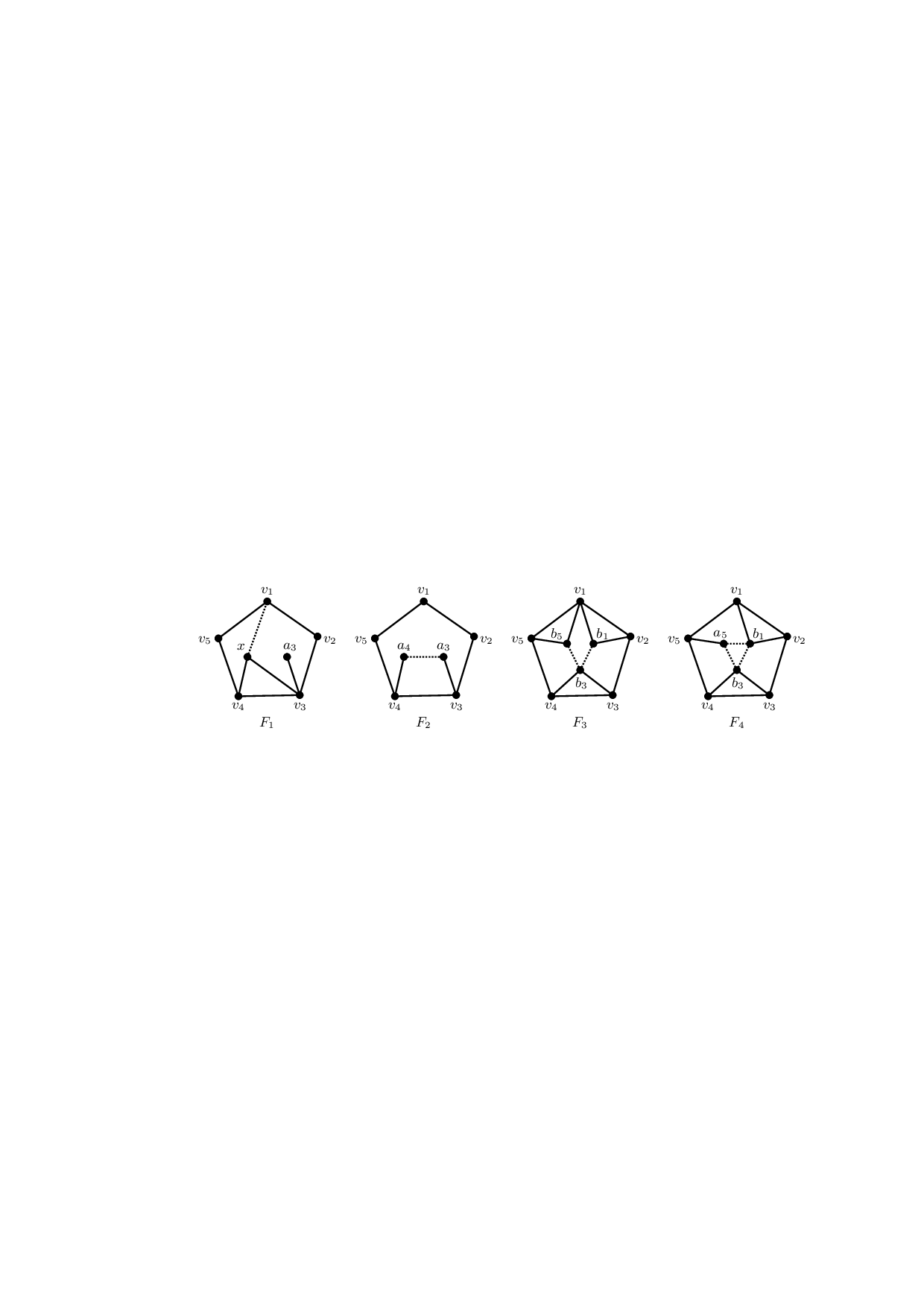}
\caption{Some special graphs. Here, a dotted line  between two vertices indicates that  those two vertices may or may not be adjacent.  }\label{fig-F1234}
\end{figure}

In the following,   we will refer to Figure~\ref{fig-F1234} for the graphs $F_1, F_2, F_3$ and $F_4$.

\begin{lemma}\label{F1:thm}
    Let $G$ be a  ($3P_1+P_2, K_4-e$)-free graph. If $G$ contains  an $F_1$, then $G$ has a vertex of degree  at most 6.
\end{lemma}
\begin{proof}
Let $G$ be a  ($3P_1+P_2, K_4-e$)-free graph. Suppose that $G$ contains  an $F_1$ as shown in Figure~\ref{fig-F1234}.  We let  $C:= \{v_1, v_2, v_3, v_4, v_5\}$. Then, with respect to $C$, we define the sets $A$, $B$, $D$, $Z$, $X$ and $T$ as in Section~\ref{genprop}, and we use the properties in Section~\ref{genprop}  and in \cref{3P1P2-prop}. Clearly $a_3\in A_3$ and $x\in X_3$.   Moreover  the following hold.

 \begin{claim}\label{F1-A3} We have $A_3=\{a_3\}$ and  so $|A_3|=1$.
  \end{claim}
  \no{\em Proof of \cref{F1-A3}}.~If there is a vertex in $A_3\sm \{a_3\}$, say $a_3'$, then $a_3a_3'\in E(G)$ (by \ref{abf}), and then from \ref{bto}, we see that $\{v_2,v_5,x,a_3,a_3'\}$ induces a $3P_1+P_2$; so $A_3=\{a_3\}$ and so $|A_3|=1$. $\sq$

Next:

   \begin{claim}\label{F1-D2}
    We may assume that $D_2$ is an empty set.
    \end{claim}
  \no{\em Proof of \cref{F1-D2}}.~If there is a vertex in $D_2$, say $d_2$, then $\{v_2,v_5,x,a_3,d_2\}$  induces a $3P_1+P_2$ (by \ref{bto} and \ref{ato}); so we may assume that $D_2=\es$. $\sq$

 \medskip
Now since $A_3\neq \es$,  we have $|D_4\cup Z_5|\leq 1$ (by \ref{DZ}, \ref{df} and \ref{afd}), and   $|X_2|\leq 1$ and $|X_3|=1$ (by \ref{abz}). So from the above arguments, we see that $deg(v_3)=|N(v_3)|= |\{v_2, v_4\}|+|A_3|+
 |X_2|+|X_3|+|D_4 \cup Z_5|\leq 2+1+1+1+1 =6$. This proves \cref{F1:thm}.
\end{proof}

\begin{lemma}\label{F2:thm}
    Let $G$ be a  ($3P_1+P_2, K_4-e$)-free graph. If $G$ contains  an $F_2$, then $G$ has a vertex of degree  at most 6.
\end{lemma}
\begin{proof}
Let $G$ be a  ($3P_1+P_2, K_4-e$)-free graph. From \cref{F1:thm}, we may assume that $G$ is $F_1$-free. Suppose that $G$ contains  an $F_2$ as shown in Figure~\ref{fig-F1234}.  We let  $C:= \{v_1, v_2, v_3, v_4, v_5\}$. Then, with respect to $C$, we define the sets $A$, $B$, $D$, $Z$, $X$ and $T$ as in Section~\ref{genprop}, and we use the properties in Section~\ref{genprop}  and in \cref{3P1P2-prop}. Clearly $a_3\in A_3$  and $a_4\in A_4$. Since $A_3, A_4\neq \es$  and since $G$ is $F_1$-free, we have $X_2\cup X_3=\es$. Also since $A_3\neq \es$, we have $|D_4\cup Z_5|\leq 1$ (by \ref{DZ}, \ref{df} and \ref{afd}). Moreover  if there are two vertices in $A_3\sm \{a_3\}$, say $a_3'$ and $a_3''$,  then since $\{a_3,a_3',a_3''\}$ is a clique (by \ref{abf})  and  since $G[\{v_2,v_5,a_4,a_3,a_3',a_3''\}]$ does not contain a $3P_1+P_2$, we see that $a_4$ is adjacent to at least two vertices in $\{a_3,a_3',a_3''\}$, and then $G[\{a_4, a_3,a_3',a_3'', v_3\}]$ contains a $K_4-e$; so we conclude that $|A_3|\leq 2$. Thus
   $deg(v_3)=  |\{v_2, v_4\}\cup A_3  \cup D_2\cup  D_4 \cup Z_5 |= |\{v_2, v_4\}|+|A_3|+
 |D_2|+ |D_4 \cup Z_5|\leq 2+2+1+1=6$, by \ref{df}. This proves \cref{F2:thm}.
\end{proof}

\begin{lemma}\label{F3:thm}
    Let $G$ be a  ($3P_1+P_2, K_4-e$)-free graph. If $G$ contains  an $F_3$, then  $G$ has a vertex of degree  at most 6  or $G\in \cal C$.
\end{lemma}
\begin{proof}
Let $G$ be a  ($3P_1+P_2, K_4-e$)-free graph.  From \cref{NC5:deg6,F1:thm}, we may assume that $G$ is ($F_1,  C_5+K_1$)-free. Suppose that $G$ contains  an $F_3$  as shown in Figure~\ref{fig-F1234}. We let  $C:= \{v_1, v_2, v_3, v_4, v_5\}$. Then, with respect to $C$, we define the sets $A$, $B$, $D$, $Z$, $X$ and $T$ as in Section~\ref{genprop}, and we use the properties in Section~\ref{genprop}  and in \cref{3P1P2-prop}. Clearly  $b_1\in B_1$, $b_3\in B_3$ and $b_5\in B_5$. Since $G$ is $F_1$-free, we have $A=\es$, and since $G$ is ($C_5+K_1$)-free, we have $T=\es$. Recall that $[B_1, B_5] =\es$, by \ref{bto}.  In addition, we have the following.
 \begin{claim}\label{F3-D125}
 The set $D_1\cup D_2\cup D_5$ is an empty set.
 \end{claim}
\no{\it Proof of  \cref{F3-D125}}.~By \ref{abb}, we have that $D_1=\es$. It remains to show that $D_2\cup D_5=\es$. Suppose to the contrary that $D_2\cup D_5\neq \es$. By symmetry, we may assume that $D_2\neq \es$, and let $d_2\in D_2$. Note that $d_2$ is anticomplete to $\{b_1,b_3,b_5\}$, by \ref{bto}. Now if $b_1b_3\notin E(G)$, then $\{v_5,b_3,d_2,b_1,v_2\}$ induces a $3P_1+P_2$; so $b_1b_3\in E(G)$. Similarly, if $b_3b_5\notin E(G)$, then  $\{v_2,b_3,d_2,b_5,v_5\}$ induces $3P_1+P_2$; so  $b_3b_5\in E(G)$. But then $\{b_3,v_4,v_5,v_1,b_1,b_5,d_2\}$ induces an $F_1$ which is a contradiction to our assumption that $G$ is $F_1$-free. So \cref{F3-D125} holds. $\sq$

Next:
\begin{claim}\label{F3-D34}
  We may assume that  $D_3\cup D_4$ is an empty set.
 \end{claim}
\no{\it Proof of  \cref{F3-D34}}.~Suppose that $D_3\neq \es$. We let $D_3=\{d_3\}$, by \ref{df}.  Now if there are two vertices in $N(d_3)\cap B_5$, say $b_5$ and $b_5'$, then we see that $\{v_5,b_5,b_5',d_3\}$ induces a $K_4-e$ (by \ref{X-matching}); so $|N(d_3)\cap B_5|\leq 1$. Then from \ref{DZ}, \ref{bto},  \ref{df} and \cref{F3-D125}, we observe that $deg(d_3) =|\{v_2,v_4\}|+|N(d_3)\cap B_5|+|N(d_3)\cap D_4|+|N(d_3)\cap Z_5| \leq 2+1+1+1 =5$,  and we are done. So we may assume that  $D_3 =\es$. Likewise, we may assume that  $D_4=\es$.   $\sq$

Next:
\begin{claim}\label{F3-Z25}
  We may assume that  $Z_2\cup Z_4$ is an empty set.
 \end{claim}
\no{\it Proof of  \cref{F3-Z25}}.~Suppose that $Z_2\neq \es$. We let $Z_2=\{z_2\}$, by \ref{DZ}. Note that $z_2$ is anticomplete to $B_1\cup B_3\cup B_5$, by \ref{bto}. Now if there is a vertex, say $x~(\neq b_5)$ in $X_5$, then $xb_5\in E(G)$ (by \ref{X-matching}), and then $\{z_2,b_1,v_4,b_5,x\}$ induces a $3P_1+P_2$ (by \ref{bto}); so $X_5=B_5=\{b_5\}$ and hence $|X_5|=1$. A similar argument shows that $B_1=\{b_1\}$ and hence $|B_1|= 1$.
Hence from the above claims  and from \ref{DZ},  we have $deg(v_1)=|\{v_2, v_5\} \cup B_1 \cup X_5 \cup Z_1 \cup Z_3| = |\{v_2, v_5\}|+|B_1|+|X_5|+|Z_1|+|Z_3|\leq 2+1+1+1+1 = 6$, and we are done. So we may assume that  $Z_2 =\es$. Likewise, we may assume that  $Z_4=\es$.   $\sq$

Next:
  \begin{claim}\label{F3-B24}
  We may assume that  $B_2\cup B_4$ is an empty set.
 \end{claim}
\no{\it Proof of  \cref{F3-B24}}.~Suppose that $B_2\neq \es$. Let $b_2 \in B_2$.  Then $b_2$ is anticomplete to $\{b_1,b_3\}$, by \ref{bto}.  Now if $b_1b_3\in E(G)$, then $\{v_1, b_1, b_3, v_4, v_5, b_2\}$ induces  a $C_5+K_1$; so   $b_1b_3\notin E(G)$. Also if $b_2b_5\in E(G)$, then  $\{b_2,v_3,v_4,v_5,b_5,b_1\}$ induces  a $C_5+K_1$; so  $b_2b_5\notin E(G)$.
Now if there is a vertex in $B_2\sm \{b_2\}$, say $b_2'$, then $b_2b_2'\in E(G)$ (by \ref{X-matching}), and then $\{b_1,b_5,v_4,b_2,b_2'\}$ induces a $3P_1+P_2$ (by \ref{bto}); so   $B_2=\{b_2\}$ and hence $|B_2|=1$. A similar argument shows that $B_1=\{b_1\}$ and hence  $|B_1|=1$. Thus from the above claims and from \ref{DZ},  we have $deg(v_2)=|\{v_1, v_3\}|+ |B_1|+|B_2|+|Z_1|\leq 2+1+1+1  = 5$,  and we are done. So we may assume that  $B_2 =\es$. Likewise, we may assume that  $B_4=\es$.   $\sq$

\medskip
 From the above conclusions, we observe that $V(G) = C \cup X_1 \cup X_3 \cup X_5$.    Also recall that   $[X_1, X_5]=\es$ (by \ref{bto}), and $\{X_1 , X_3\}$ and $\{X_3, X_5\}$ are graded (by \ref{X-matching}). Now if $|X_1|\leq 2$, then $deg(v_2)= |N(v_2)| = |\{v_1,v_3\} \cup X_1| \leq 4$, and we are done; so we may assume that $|X_1|\geq 3$. Similarly, we may assume that $|X_5|\geq 3$.  Moreover, if $|X_3|\leq 2$, then $deg(v_3)= |N(v_3)| = |\{v_2,v_4\} \cup X_3\cup Z_5| \leq 5$ (by \ref{DZ}), and we are done; so we may assume that $|X_3|\geq 3$.    Hence we conclude that $G \in {\cal C}$ by taking $u_i:=v_i$ for $i\in [5]$, and $Q_1:=B_1$, $Q_1':=Z_1$, $Q_2:=B_3$, $Q_2':=Z_3$, $Q_3:=B_5$ and $Q_3':=Z_5$, and we are done. This proves \cref{F3:thm}.
 \end{proof}

\begin{lemma}\label{F4:thm}
    Let $G$ be a  ($3P_1+P_2, K_4-e$)-free graph. If $G$ contains  an $F_4$, then  $G$ has a vertex of degree  at most 6 or $G$ is a good graph with $\omega(G)\geq 4$.
\end{lemma}
\begin{proof}
Let $G$ be a  ($3P_1+P_2, K_4-e$)-free graph.  From \cref{NC5:deg6,F1:thm}, we may assume that $G$ is ($F_1,   C_5+K_1$)-free. Suppose that $G$ contains  an $F_4$  as shown in Figure~\ref{fig-F1234}. We let  $C:= \{v_1, v_2, v_3, v_4, v_5\}$. Then, with respect to $C$, we define the sets $A$, $B$, $D$, $Z$, $X$ and $T$ as in Section~\ref{genprop}, and we use the properties in Section~\ref{genprop} and in \cref{3P1P2-prop}.   Clearly  $b_1\in B_1$, $b_3\in B_3$ and $a_5\in A_5$. Since $G$ is $F_1$-free, we have $A\sm A_5=\es$ and $X_4\cup X_5=\es$, and since $G$ is ($C_5+K_1$)-free, we have $T=\es$.  Moreover:
\begin{claim} \label{F4-A5}
We may assume that $|A_5|\geq 2$.
\end{claim}
\no{\it Proof of  \cref{F4-A5}}.~If $|A_5|\leq 1$, then $deg(v_5) = |N(v_5)| = |\{v_1, v_4\}| + |A_5|+|D_1| + |D_4| +|Z_2|  \leq 2 + 1+1 + 1 +1 = 6$ (by \ref{DZ} and \ref{df}), and we are done; so we may assume that $|A_5|\geq 2$. $\sq$

Next:
\begin{claim}\label{F4-B13}
We may assume that $|B_1|\geq 2$ and $|B_3|\geq 2$. Hence, there are vertices, say $b_1', b_1'' \in B_1$ and $b_3', b_3''\in B_3$ such that $b_1'b_1'', b_3'b_3''\in E(G)$ and $b_1'b_3'\not\in E(G)$.
\end{claim}
\no{\it Proof of  \cref{F4-B13}}.~If $|B_1|= 1$, then by using \ref{DZ} and \ref{df} and  since $B_3\neq \es$, we have $|D_2 \cup Z_3|\leq 1$ (by \ref{zzd}), and then $deg(v_1) = |N(v_1)| = |\{v_2, v_5\}| + |B_1| + |D_2 \cup Z_3|+|Z_1| + |D_5| \leq 2 + 1 + 1+1 + 1 = 6$, and we are done; so we may assume that $|B_1|\geq 2$. Likewise, we may assume that $|B_3|\geq 2$. Hence there exist vertices $b_1', b_1'' \in B_1$ and $b_3', b_3''\in B_3$ such that $b_1'b_1'', b_3'b_3''\in E(G)$ and $b_1'b_3'\not\in E(G)$, by \ref{X-matching}. $\sq$

Next:
\begin{claim}\label{F4-B2}
We may assume that $B_2$ is an empty set.
\end{claim}
\no{\it Proof of  \cref{F4-B2}}.~If there is a vertex in $B_2$, say $b_2$, then from \cref{F4-B13} and \ref{bto}, we see that $\{b_1'',v_1,v_5,v_4,b_3', b_2\}$ induces a $C_5+K_1$ (when $b_1''b_3'\in E(G)$) or   $\{b_2,b_3',v_5,b_1',b_1''\}$  induces a $3P_1+P_2$  (when $b_1''b_3'\notin E(G)$); so we may assume that $B_2=\es$. $\diamond$

Next:
\begin{claim}\label{F4-D235}
We may assume that  $D_2\cup D_3\cup D_5$ is an empty set.
\end{claim}
\no{\it Proof of  \cref{F4-D235}}.~If there is a vertex in $D_2\cup D_3\cup D_5$, say $d$, then from \ref{F4-B13} and \ref{bto}, we observe that  $\{d, v_5, b_3', b_1', v_2\}$ or $\{d, v_5, b_1',v_3, b_3'\}$  induces a $3P_1+P_2$; so we may assume that  $D_2\cup D_3\cup D_5 =\es$. $\sq$

\medskip
Further, we have the following.
\begin{claim}\label{F4-A5D1D4}
The set $A_5\cup D_1\cup D_4\cup \{v_5\}$ is a clique.
\end{claim}
\no{\it Proof of  \cref{F4-A5D1D4}}.~If there are nonadjacent vertices, say $d_1\in D_1$ and $d_4\in D_4$, then $\{d_1,a_5,v_5,d_4\}$  induces a $K_4-e$ (by \ref{ato}); so  $[D_1, D_4]$ is complete, and hence $ A_5\cup D_1\cup D_4\cup \{v_5\}$ is a clique, by \ref{df}, \ref{abf} and \ref{ato}.  $\sq$

\begin{claim}\label{F4-A5Z5}
The set $ A_5\cup Z_2\cup \{v_5\}$ is a clique.
\end{claim}
\no{\it Proof of  \cref{F4-A5Z5}}.~If there are nonadjacent vertices, say $a\in A_5$ and $z\in Z_2$, then pick a vertex, say $a'\in A_5\sm \{a\}$ (such a vertex $a'$ exists (by \cref{F4-A5}) and recall that $aa'\in E(G)$, by \ref{abf}), and then   we see that $\{v_1,z,v_4, a,a'\}$ induces a $3P_1+P_2$  or $\{z,a,a',v_5\}$ induces a $K_4-e$; so $[A_5,Z_2]$ is complete. Hence from \ref{DZ} and \ref{abf}, we conclude that $A_5\cup Z_2\cup \{v_5\}$ is a clique. $\sq$

\medskip
To proceed further, we let $Q_1:=  A_5 \cup D_1 \cup D_4\cup Z_2\cup \{v_5\}$, $Q_2:= \{v_1, v_2\} \cup X_1$ and $Q_3:= \{v_3, v_4\} \cup X_3$, and we claim the following.

\begin{claim}\label{F4graded}
The sets $Q_1$, $Q_2$ and $Q_3$ are pairwise graded.
\end{claim}
\no{\it Proof of  \cref{F4graded}}.~By \ref{X-matching} and by symmetry, it is enough to prove that  $\{Q_1, Q_2\}$ is graded. Recall that since $A_5\neq \es$, one of $D_1\cup D_4$ and $Z_2$ is empty, by \ref{afd}. So from \cref{F4-A5D1D4} and  \cref{F4-A5Z5}, we see that $Q_1$ is a clique.  Thus, from \ref{X-matching}, it is enough to prove that
    $[Q_1,  Q_2]$ is special. (We note that the proof is   similar to the proof of  \cref{A1graded} of \cref{lem-2k1p3-c5}.)

        First we show that each vertex in $Q_1$ is adjacent to at most one vertex in $Q_2$. Suppose to the contrary that there is a vertex, say $q_1\in Q_1$, such that $|N(q_1)\cap Q_2|\geq 2$. Let $q_2,q_2'\in N(q_1)\cap Q_2$. Then clearly $q_1\notin \{v_5\}\cup D_1\cup Z_2$, by \ref{bto}; so $q_1\in A_5\cup D_4$. Then since $A_5\cup D_4$ is anticomplete to $\{v_1,v_2\}$, clearly $\{q_2,q_2'\}\subseteq X_1$. But then $\{q_1,q_2,q_2',v_2\}$ induces a $K_4-e$ which is a contradiction. So each vertex in $Q_1$ is adjacent to at most one vertex in $Q_2$.

    Next we show that each vertex in $Q_2$ is adjacent to at most one vertex in $Q_1$. Suppose to the contrary that there is a vertex, say $q_2\in Q_2$, such that $|N(q_2)\cap Q_1|\geq 2$. Let $q_1,q_1'\in N(q_2)\cap Q_1$. Clearly $q_2\neq v_1$, and  since $|N(v_2)\cap Q_1| \leq |D_1\cup Z_2|\leq 1$ (by \ref{df}, \ref{DZ} and \ref{afd}), we have $q_2\neq v_2$. So $q_2\in X_1$. Then since $Q_1\sm \{v_5\}$ is complete to $v_5$, we see that $\{q_2,q_1,q_1',v_5\}$ induces   a $K_4-e$ which is a contradiction. So each vertex in $Q_2$ is adjacent to at most one vertex in $Q_1$. This proves \cref{F4graded}. $\sq$

\medskip
Thus   from the above arguments, we see that $V(G) = C \cup A_5 \cup X_1 \cup X_3 \cup D_1 \cup D_4   \cup Z_2 =Q_1\cup Q_2\cup Q_3$.  Also since $|B_1|\geq 2$ (by \cref{F4-B13}), and since $B_1\cup \{v_1,v_2\}$ is a clique (by \ref{X-matching}), clearly  $\omega(G) \geq 4$.     Then from \ref{F4graded}, we see that $Q_1, Q_2$ and $Q_3$ are three cliques which are pairwise graded  whose union is $V(G)$.   Hence $G$ is a good graph with $\omega(G)\geq 4$. This completes the proof.
\end{proof}

\begin{lemma}\label{C5:thm}
Let $G$ be a  ($3P_1+P_2, K_4-e$)-free graph. If $G$ contains  a $C_5$, then    $G$ has a vertex of degree  at most 6 or $G$ is a good graph with $\omega(G)\geq 4$  or $G\in \cal C$.
\end{lemma}
\begin{proof}
Let $G$ be a  ($3P_1+P_2, K_4-e$)-free graph.  From \cref{NC5:deg6,F1:thm,F2:thm,F3:thm,F4:thm}, we may assume that $G$ is ($F_1, F_2, F_3, F_4, C_5+K_1$)-free.  Suppose that $G$ contains  a  $C_5$, say with vertex-set $C:= \{v_1, v_2, v_3, v_4, v_5\}$ and edge-set  $\{v_1 v_2, v_2 v_3, v_3 v_4, v_4 v_5, $ $v_5 v_1\}$.   Then, with respect to $C$, we define the sets $A$, $B$, $D$, $Z$, $X$ and $T$ as in Section~\ref{genprop}, and we use the properties in Section~\ref{genprop}  and in \cref{3P1P2-prop}.
Moreover:

 \begin{claim}\label{C5-B13}
  We may assume that  for each $i\in [5]$,  one of $B_i$ and $B_{i+2}$ is empty.
 \end{claim}
   \no{\it Proof of \cref{C5-B13}}.~Suppose that there  exists an index $i\in [5]$  such that $B_i$ and $B_{i+2}$ are nonempty. By symmetry, we may assume that $i=1$.   Then since $G$ is $F_3$-free, we have $B_4\cup B_5=\es$. Also  since $G$ is $F_4$-free, we have $A_5=\es$. Now since $B_1 \neq \es$, one of $D_1$ and $Z_2$ is empty (by  \ref{zzd}), and hence $|D_1\cup Z_2| \leq 1$, by \ref{DZ} and \ref{df}. So $deg(v_5) = |N(v_5)| = |\{v_1, v_4\}| + |D_1 \cup Z_2| + |D_4| + |Z_4| + |Z_5| =2+1+1+1+1 \leq 6$ (by \ref{DZ} and \ref{df}), and we are done. $\sq$

Next:
\begin{claim}\label{C5-B15}
  We may assume that  for each $i\in [5]$, one of $B_i$ and $B_{i+1}$ is empty.
 \end{claim}
 \no{\it Proof of \cref{C5-B15}}.~Suppose that there  is an index $i\in [5]$  such that $B_i$ and $B_{i+1}$ are nonempty. By symmetry, we may assume that  $i=5$. Let $b_1\in B_1$ and $b_5\in B_5$.  Then by using \cref{C5-B13}, we see that $B_2\cup B_3\cup B_4 = \es$. Also since $G$ is $F_1$-free, we have $A_1\cup A_2\cup A_5=\es$. Recall that $[B_1, B_5]=\es$, by \ref{bto}. Moreover we have the following.

\begin{subclaim}
 \label{C5-B15-D2} The set $D_2$ is an empty set.
 \end{subclaim}
 \no{\em Proof of \cref{C5-B15-D2}}.~If there is a vertex in $D_2$, say $d_2$,  then   $\{v_1, d_2, v_3, v_4, v_5,b_5,b_1\}$   induces an $F_1$ (by \ref{bto}); so we conclude that $D_2=\es$. $\diamond$

 \begin{subclaim}
 \label{C5-A3}
  We have $|A_3|\leq 1$.
  \end{subclaim}
  \no{\em Proof of \cref{C5-A3}}.~First we show that $[A_3, B_1]$ is  complete. If there are vertices, say $a_3\in A_3$ and $b_1'\in B_1$, such that $a_3b_1'\notin E(G)$, then $\{b_5,a_3,v_4,b_1',v_2\}$ induces a $3P_1+P_2$ (when $a_3b_5\notin E(G)$) or $\{b_5,a_3,v_3,v_4,v_5,b_1'\}$ induces a $C_5+K_1$ (when $a_3b_5\in E(G)$); so we conclude that $[A_3, B_1]$ is  complete. Hence if there are two vertices in $A_3$, say $a_3$ and $a_3'$,  then $\{b_1,a_3,a_3',v_3\}$ induces a $K_4-e$ (by \ref{abf}); so we may assume that $|A_3|\leq 1$. $\diamond$

  \medskip
   Now from the above conclusions and since $|D_4 \cup Z_3 \cup Z_5| \leq 2$ (by  \ref{zzd}), we see that $deg(v_3) = |\{v_2, v_4\}| + |D_4 \cup Z_3 \cup Z_5| + |A_3|+|Z_2| \leq 2 + 2 + 1+1 = 6$, by \ref{DZ}, and  we are done. $\sq$

Next:
   \begin{claim}\label{C5-B}
 We may assume that $B$ is an empty set.
   \end{claim}
 \no{\it Proof of \cref{C5-B}}.~Suppose that $B\neq \es$. Then from \cref{C5-B13} and \cref{C5-B15}, we may assume that there is an index $i\in [5]$  such that $B_i\neq \es$ and $B\sm B_i=\es$. By symmetry, we may assume that $i=5$.    Then since $G$ is $F_1$-free, we have $A_1\cup A_5 =\es$, and so $A = A_2 \cup A_3 \cup A_4$. Also since $G$ is $F_2$-free,  there exists an index $j\in \{2,3,4\}$ and $j$ mod 5  such that $A_j=\es$, and then $deg(v_j) = |\{v_{j-1}, v_{j+1}\}|+|D_{j-1}|+|D_{j+1} \cup Z_j \cup Z_{j+2}|+|Z_{j-1}| \leq 2+1+2+1 =6$ (by \ref{DZ}, \ref{df} and \ref{zzd}), and we are done. $\sq$

\medskip
Now since $G$ is $F_2$-free, there exists an index $i\in [5]$ such that $A_i=\es$. By symmetry, we may assume that $i=2$. Recall that  $|D_3 \cup Z_2 \cup Z_4| \leq 2$, by \ref{zzd}. So from \cref{C5-B}, we conclude that $deg(v_2) = |\{v_1, v_3\}| + |D_1| + |D_3 \cup Z_2 \cup Z_4| + |Z_1| \leq 2 + 1 + 2 + 1 = 6$ , by \ref{DZ} and \ref{df}. This completes the proof of \cref{C5:thm}.
 \end{proof}

 \begin{lemma}\label{C7:thm}
Let $G$ be a  ($3P_1+P_2, K_4-e, C_5$)-free graph. If $G$ contains  a $C_7$, then    $G$ has a vertex of degree  at most 6.
\end{lemma}
\begin{proof}
Let $G$ be a ($3P_1+P_2, K_4-e, C_5$)-free graph. Suppose that $G$ contains an induced $C_7$, say with
 vertex-set $C:= \{v_1, v_2, ,\ldots, v_6, v_7\}$ and edge-set $\{v_1 v_2, v_2 v_3, \ldots,  v_6 v_7, v_7v_1\}$.  For $i\in [7]$ and $i$ mod $7$, we let  $Y_i:=\{y\in V(G)\sm C \mid yv_i, yv_{i+1}\in E(G)$ and $yv_{i+2}, yv_{i-1}, yv_{i-3}\notin E(G)\}$, and let $Y:=\cup_{i=1}^7Y_i$. Then we have the following.
 \begin{claim}\label{C7-nei}
 $V(G) = C\cup Y$.
 \end{claim}
 \no{\it Proof of \cref{C7-nei}}.~Suppose to the contrary that  $V(G)\sm (C\cup Y)\neq \es$, and let $u\in  V(G)\sm (C\cup Y)$.  First we show that:
    \begin{equation}\label{3P1P2-C7-eq}
  \longbox{\em We may assume that for each $i\in [7]$ and $i$ mod $7$,   $u$ is nonadjacent to at least one of $v_i$ and $v_{i+1}$.} \tag*{($\star \star$)}
   \end{equation}
 \no{\it Proof of} \ref{3P1P2-C7-eq}.~Suppose to the contrary that there is an index $i\in [7]$  and $i$ mod $7$ such that $uv_i,uv_{i+1}\in E(G)$.
Then $uv_{i+2}\notin E(G)$, for otherwise,   $\{u,v_i,v_{i+1},v_{i+2}\}$ induces $K_4-e$. Similarly, we have  $uv_{i-1}\notin E(G)$.
Now if $uv_{i-3} \in E(G)$, then since $\{v_i,u,v_{i-3},v_{i-2},v_{i-1}\}$ and $\{v_{i+1},u,v_{i-3},v_{i+3},v_{i+2}\}$ do not induce $C_5$'s, we have $uv_{i-2},uv_{i+3}\in E(G)$, and then $\{v_{i-2},u,v_{i-3},$ $v_{i+3}\}$ induces a $K_4-e$; so $uv_{i-3} \notin E(G)$, and hence $u\in Y_i$ which is a contradiction. So we may assume that for each $i\in [7]$,   $u$ is nonadjacent to at least one of $v_i$ and $v_{i+1}$. This proves  \ref{3P1P2-C7-eq}. $\diamond$

\medskip
 Now since $G$ is  ($3P_1+P_2$)-free, clearly $u$ has neighbor in $C$, say $uv_1\in E(G)$. Then by  \ref{3P1P2-C7-eq}, we see that $uv_2,uv_7\notin E(G)$. Thus, if $u$ is  anticomplete to $\{v_4,v_5\}$, then  $\{v_7,u,v_2,v_4,v_5\}$ induces a $3P_1+P_2$; so $u$ is adjacent to one of $v_4$ and $v_5$.  By symmetry, we may assume that  $uv_4\in E(G)$. But then from \ref{3P1P2-C7-eq}, we have $uv_3\notin E(G)$,  and then we see that $\{v_1,v_2,v_3,v_4,u\}$ induces a $C_5$  which is a contradiction. So $V(G)\sm (C\cup Y)=\es$, and hence \cref{C7-nei} holds. $\sq$

\begin{claim}\label{C7-Yicard}
For each $i\in [7]$ and $i$ mod $7$, we have $|Y_i|\leq 1$.
 \end{claim}
 \no{\it Proof of \cref{C7-Yicard}}.~If there are  two vertices in $Y_i$, say $y$ and $y'$, then  $\{y,v_i,v_{i+1},y'\}$  induces a $K_4-e$ (when $yy'\notin E(G)$) or  $\{v_{i+2}, v_{i-1}, v_{i-3},y,y'\}$ induces a $3P_1+P_2$ (when $yy'\in E(G)$); so  $|Y_i|\leq 1$. $\sq$

\medskip
Now from the above conclusions, we observe that $deg(v_1)\leq |\{v_2,v_7\}|+|Y_1|+|Y_3|+|Y_5|+|Y_7|\leq 6$. This completes the proof.
 \end{proof}

\medskip
\noindent{\bf Proof of \cref{main-thm}}.
Let $G$ be a  ($3P_1+P_2, K_4-e$)-free graph which is not perfect.  Then by \cref{spgt}, $G$ contains  an odd-hole or the complement graph of an odd-hole. Since the complement graph of any odd-hole of length at least 7 contains a $K_4-e$,  and since any odd-hole of length at least 9 contains a $3P_1+P_2$, we may assume that $G$ contains   a $C_5$ or a $C_7$. If $G$ contains a $C_5$, then the theorem follows from \cref{C5:thm}, and if $G$ is $C_5$-free and contains a $C_7$, then the theorem follows from \cref{C7:thm}. This proves \cref{main-thm}.
\hfill{$\Box$}

\subsection{Coloring of ($3P_1+P_2$,\,$K_4-e$)-free graphs}\label{Sec:col}

In this subsection, we prove \cref{col-thm} and its consequences.

\medskip
\noindent{\bf Proof of \cref{col-thm}}.~ Let $G$ be a  ($3P_1+P_2, K_4-e$)-free graph. First we prove \cref{col-thm}:$(i)$. Since $\omega(G)=3$, we may assume that $G$ is $K_4$-free.  Let $v$ be any vertex in $G$.
 Then   clearly $G[N(v)]$ is a ($K_3, P_3$)-free graph which is a  bipartite graph, and   $G[\overline{N}(v)]$ is  a ($2P_1+P_2, K_4-e, K_4$)-free graph.
It follows from a result of Gy\'arf\'as \cite{Gyarfas} that  $\chi(G[\overline{N}(v)])\leq 3$. So $\chi(G)\leq \chi(G[N(v)])+\chi(G[\overline{N}(v)\cup\{v\}])\leq 2+3 =5$, and we are done.

 Next we prove \cref{col-thm}:$(ii)$. Let $G$ be a  ($3P_1+P_2, K_4-e$)-free graph with $\omega(G)\geq 4$, and let $\ell := \max\{7, \omega(G)\}$. We prove \cref{col-thm}:$(ii)$ by induction on $|V(G)|$. From  \cref{main-thm}, \cref{lem-good} and \cref{classC-col}, we may assume that $G$ has a vertex of degree  at most 6, say $v$. By induction hypothesis, we have $\chi(G-\{v\})\leq  \ell$. Since $v$ has at most 6 neighbors in $G-\{v\}$,  we can take any $ \ell$-coloring
of $G-\{v\}$ and extend it to a $\ell$-coloring of $G$, using for $v$ a (possibly new) color
that does not appear in its neighborhood. This proves \cref{col-thm}:$(ii)$.

Clearly the bound in \cref{col-thm} is  tight when $\omega(G)\geq 7$ (for instance, take $G\cong K_t$ where $t\geq 7$).
\hfill{$\Box$}

\medskip
We remark that   the bound given in \cref{col-thm} does not seem to be tight when $\omega(G)\in \{3,4,5,6\}$. Note that there are ($3P_1+P_2, K_4-e$)-free graphs with chromatic number equal to $4$ and clique number equal to 3; see \cref{fig-opt} for such graphs.

\begin{cor}\label{3P1P2-NOC}
The class of ($3P_1+P_2, K_4-e$)-free graphs is near optimal colorable. That is, every ($3P_1+P_2, K_4-e$)-free graph $G$ satisfies $\chi(G)\leq \max\{7, \omega(G)\}$.
\end{cor}
\begin{proof}
Since every ($3P_1+P_3, K_3$)-free graph $G$ satisfies $\chi(G)\leq 3$, by a result of Randerath \cite{Rand-Thesis}, the proof follows from \cref{col-thm}.
\end{proof}

  Dabrowski,  Golovach and  Paulusma \cite{DGP-TCS14} showed that \textsc{Coloring} for  the class of ($3P_1+P_2$, $K_4-e$)-free graphs is solvable in  polynomial time   for every fixed positive integer $k \leq 6$. This result together with \cref{3P1P2-NOC} and \cref{JH-thm}  imply  the following.

 \begin{cor}
  \textsc{Chromatic Number} for
	the class of ($3P_1+P_2$, $K_4-e$)-free graphs can be solved in  polynomial time.
 \end{cor}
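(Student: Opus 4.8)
The plan is to obtain this as an immediate consequence of \cref{JH-thm}, whose hypotheses are by now all available, so the argument is essentially an assembly step. Concretely, I would proceed in three short stages. First, note that the class of $(3P_1+P_2, K_4-e)$-free graphs is hereditary, being defined by a (finite) list of forbidden induced subgraphs. Second, apply \cref{3P1P2-NOC} (equivalently, part~$(ii)$ of \cref{col-thm}): every graph $G$ in the class satisfies $\chi(G)\le \max\{7,\omega(G)\}$, so the class is near optimal colorable with the constant $c=7$. Third, invoke the result of Dabrowski, Golovach and Paulusma \cite{DGP-TCS14}, which states that \textsc{Coloring} restricted to this class is solvable in polynomial time for every fixed positive integer $k\le 6$; since $6=c-1$, this is exactly the condition required by \cref{JH-thm}, namely that \textsc{Coloring} be polynomial for every fixed $k\le c-1$.

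With these three ingredients in hand, \cref{JH-thm} directly yields that \textsc{Chromatic Number} for the class of $(3P_1+P_2, K_4-e)$-free graphs is solvable in polynomial time, which is the assertion of the corollary. For intuition, the mechanism behind \cref{JH-thm} is the expected one: given $G$ in the class, one tests for each fixed $k\in\{1,\dots,c-1\}$ whether $G$ is $k$-colorable using the polynomial algorithm and returns the smallest such $k$ if it exists; otherwise $\chi(G)\ge c$, and near optimal colorability forces $\chi(G)=\max\{c,\omega(G)\}$, which can then be determined directly.

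I do not expect a genuine obstacle here, since the substantive content has already been discharged: the new ingredient is the near optimal colorability proved in \cref{col-thm}/\cref{3P1P2-NOC} (itself resting on the structure theorem \cref{main-thm} and the good-graph coloring result \cref{lem-good}), while the polynomial $k$-\textsc{Coloring} algorithms for $k\le 6$ are prior work. The only point needing a moment's care is the bookkeeping that the constant in the near-optimal bound, $c=7$, and the range $k\le 6$ for which $k$-\textsc{Coloring} is known to be polynomial match up precisely as $c-1=6$; were the bound weaker here, one would need a correspondingly stronger $k$-\textsc{Coloring} result, but for $(3P_1+P_2, K_4-e)$-free graphs this alignment holds.
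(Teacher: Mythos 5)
Your proposal is correct and follows exactly the paper's own argument: combine the near optimal colorability bound $\chi(G)\le\max\{7,\omega(G)\}$ from \cref{3P1P2-NOC} with the polynomial-time $k$-\textsc{Coloring} result of Dabrowski, Golovach and Paulusma for $k\le 6$, and apply \cref{JH-thm}. The bookkeeping $c-1=6$ you note is precisely the alignment the paper relies on.
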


\medskip
\noindent{\bf Acknowledgement}.  The authors would like to thank Prof.~Daniel Paulusma and Prof.~Maria Chudnovsky for   fruitful  discussions.
The authors thank the anonymous referees for their valuable suggestions and comments which improved the presentation of the paper.
The first author acknowledges the National Board of Higher Mathematics (NBHM), Department of Atomic Energy, India for the financial support to carry out this research work, and CHRIST (Deemed to be University), Bengaluru for permitting to pursue the NBHM project.

{\small

}

\end{document}